\def\cC{{\mathcal C}}
\def\cD{{\mathcal D}}
\def\cH{{\mathcal H}}
\def\cL{{\mathcal L}}
\def\cM{{\mathcal M}}
\def\cR{{\mathcal R}}
\def\cW{{\mathcal W}}
\def\sX{{\mathsf X}}
\def\sY{{\mathsf Y}}
\def\E{\mathbf{E}}
\def\I{{\mathrm I}}
\def\Reals{\mathbb{R}}
\def\BlkFn{\Lambda}
\def\GAM{p}
\def\NU{q}
\def\Bernoulli{\mathrm{Bern}}
\def\BSC{\mathrm{BSC}}
\def\BEC{\mathrm{BEC}}
\def\MIP{\mathrm{MIP}}
\def\BIC{\mathrm{BIC}}
\def\sm{{\mathsf m}}
\def\d{{\mathrm d}}
\def\deq{:=}
\def\eps{\varepsilon}
\def\Arikan{Ar{\i}kan}
\newcommand{\ATgood}{\mathop{\raisebox{0.25ex}{\footnotesize$\varoast$}}}
\newcommand{\ATbad}{\mathop{\raisebox{0.25ex}{\footnotesize$\boxast$}}}
\newtheorem{definition}{Definition}
\newtheorem{theorem}{Theorem}
\newtheorem{lemma}{Lemma}
\newtheorem{corollary}{Corollary}
\newtheorem{remark}{Remark}
\newtheorem{example}{Example}
\begin{document}
%
\title{Channel Polarization Through the Lens\\ of Blackwell Measures}
%
%
%

%

\author{Naveen Goela,~\IEEEmembership{Member,~IEEE} and Maxim Raginsky,~\IEEEmembership{Senior~Member,~IEEE}
\thanks{This work was presented in part at the \emph{IEEE International Symposium on Information Theory}, Barcelona, Spain on July 12, 2016, and at the \emph{Allerton Conference on Communication, Control, and Computing}, Illinois, USA on Sept. 28, 2016. This work was supported in part by the Center for Science of Information (CSoI), an NSF Science and Technology Center, under grant agreement CCF--0939370.}
\thanks{Maxim Raginsky is with the Coordinated Science Laboratory and with the Department of Electrical and Computer Engineering, University of Illinois at Urbana-Champaign, Champaign, IL 61801, USA. E-mail: \texttt{maxim@illinois.edu.}}
\thanks{N. Goela is a scientist at Tanium in Emeryville, CA. He was formerly with the Artificial Intelligence Lab at Technicolor, Palo Alto, CA, 94306, USA and the University of California, Berkeley, Berkeley, CA 94720-1770, USA. E-mail: \texttt{ngoela@alum.mit.edu.}}
}

%
%

\markboth{To appear in IEEE Transactions on Information Theory}{}

%



\maketitle

\begin{abstract}
Each memoryless binary-input channel ($\BIC$) can be uniquely described by its Blackwell measure, which is a probability distribution on the unit interval $[0,1]$ with mean $1/2$. Conversely, any such probability distribution defines a $\BIC$. The evolution of the Blackwell measure under \Arikan's polar transform is derived for general $\BIC$s, and is analogous to density evolution as cited in the literature. The present analysis emphasizes functional equations. Consequently, the evolution of a variety of channel functionals is characterized, including the symmetric capacity, Bhattacharyya parameter, moments of information density, Hellinger affinity, Gallager's reliability function, the Hirschfeld-Gebelein-R\'enyi maximal correlation, and the Bayesian information gain. The evolution of measure is specialized for symmetric $\BIC$s according to their decomposition into binary symmetric (sub)-channels ($\BSC$s), which simplifies iterative computations and the construction of polar codes. It is verified that, as a consequence of the Blackwell--Sherman--Stein theorem, all channel functionals $\I_f$ that can be expressed as an expectation of a convex function $f$ with respect to the Blackwell measure of a channel polarize in each iteration due to the polar transformation on the class of symmetric $\BIC$s. Moreover, for $f$ either convex or non-convex, a necessary and sufficient condition is established to determine whether the random process associated with each $\I_f$ is a martingale, submartingale, or supermartingale. Represented via functional inequalities in terms of $f$, this condition is numerically verifiable for all $\I_f$, and can generate analytical proofs. To exhibit one such proof, it is shown that the random process associated with the squared maximal correlation parameter is a supermartingale, and converges almost surely on the unit interval $[0,1]$.
\end{abstract}

\begin{IEEEkeywords}
Channel polarization, polar transform, Blackwell measure, random process, martingale, channel functional, functional inequality.
\end{IEEEkeywords}

\section{Introduction}
%
%
%
%
\IEEEPARstart{I}{ntroduced} in a seminal paper by E.~\Arikan~\cite{arikan09}, polar codes are a structured family of codes that provably achieve the capacity of symmetric, memoryless, binary-input channels ($\BIC$s) with low encoding and decoding complexity. The polar code transforms $N \deq 2^{n}$ independent copies of a symmetric $\BIC$ $W$ into $N$ polarized channels whose individual capacities approach either $0$ or $1$ with increasing block length $N$. The fraction of perfect channels among the $N$ transformed channels approaches $I(W)$, the symmetric capacity of $W$.

The original work of \Arikan\ establishes that the random process associated with $I(W)$ is a martingale, and that the random process associated with the Bhattacharyya parameter $Z(W)$ is a supermartingale~\cite{arikan09}. A central objective of the present paper is to analyze random processes associated with a broad class of channel functionals $\I_f(W)$ for $\BIC$s, including $I(W)$ and $Z(W)$, that can be expressed as an expectation of a function $f$ with respect to the Blackwell measure of $W$. Each channel $W$ is viewed through the lens of its Blackwell measure, which is a representation that originates in the work of Blackwell conducted in 1951 on the comparison of statistical experiments~\cite{BlackwellCE1,BlackwellCE2} (see also~\cite{torgersen1991comparison} for a modern synthesis).

\subsection{Overview of contributions}

The Blackwell measure is defined for memoryless $\BIC$s in Sec.~\ref{sec:RepresentationsOfBICs}. Due to the Blackwell--Sherman--Stein theorem~\cite{BlackwellCE1,BlackwellCE2,Sherman51,Stein51}, each memoryless $\BIC$ is uniquely specified by its Blackwell measure, which is a probability distribution on the unit interval $[0, 1]$ with mean $1/2$. Conversely, any such measure defines a memoryless $\BIC$. We discuss the relation of the Blackwell measure to other representations of channels, including the information density, $D$-distributions defined in the theory of density evolution~\cite{modern_coding_theory_book_2008}, and Neyman--Pearson regions defined in the theory of binary hypothesis testing (see e.g.,~\cite[Sec.~12.1 and 12.2]{Polyanskiy_Wu_lecturenotes}). The following list provides an overview of our main contributions:
\begin{itemize}
\item Sec.~\ref{sec:InducedFunctionalsOfBICs} presents the fact that any measurable function $f: [0,1] \rightarrow \Reals$ induces a functional $\I_f(W)$ of the channel $W$ through its Blackwell measure. Listed in Table~\ref{tbl:FunctionalsOfBICs}, several channel parameters can be expressed as $\I_f(W)$ for a particular choice of $f$. Non-trivial examples include the Hellinger affinity, moments of information density, Gallager's reliability function, the Hirschfeld-Gebelein-R\'enyi maximal correlation, and the Bayesian information gain.

\item Sec.~\ref{sec:OutputSymmetricBICs} develops the concept of channel decompositions for symmetric $\BIC$s, as introduced by Land and Huber~\cite{huber_2006}. Each symmetric $\BIC$ $W$ is equivalent to a compound channel, consisting of elementary subchannels that are $\BSC$s~\cite{huber_2006}. The Blackwell measure of a channel $W$ may be written in terms of the Blackwell measures of its subchannels. Consequently, the functional $\I_f(W)$ of $W$ may be computed in terms of the functionals of elementary subchannels of $W$.

\item In Sec.~\ref{sec:PolarTransformOfChannels}, \Arikan's polar transform is defined~\cite{arikan09}. The polar transform is applied to two $\BIC$s $W_1$ and $W_2$, and yields two transformed channels, denoted by $W_1 \ATbad W_2$ and $W_1 \ATgood W_2$. The Blackwell measures of $W_1 \ATbad W_2$ and $W_1 \ATgood W_2$ are derived in terms of the Blackwell measures of $W_1$ and $W_2$. While analogous to density evolution for polar codes~\cite{polar_code_density_evolution_2009}, our analysis emphasizes functional equations in proofs, and is applicable to asymmetric channels. For symmetric $\BIC$s, the Blackwell measures of $W_1 \ATbad W_2$ and $W_1 \ATgood W_2$ are derived explicitly by exploiting the fact that $W_1$ and $W_2$ are compound channels consisting of elementary subchannels (i.e., $\BSC$s). In Sec.~\ref{sec:SuccessiveChannelPolarization}, iterative computations are simplified to construct polar codes for arbitrary symmetric $\BIC$s.

\item In Sec.~\ref{sec:PolarizationBroadClassOfFunctionals}, it is verified that, as a consequence of the Blackwell--Sherman--Stein theorem~\cite{BlackwellCE1,BlackwellCE2,Sherman51,Stein51}, all channel functionals $\I_f(W)$ that can be expressed as an expectation of a convex function $f$ with respect to the Blackwell measure of $W$ polarize in each iteration of the polar transform on the class of symmetric $\BIC$s. This type of result has been independently discovered in the modern literature in both~\cite[Ch.~4]{modern_coding_theory_book_2008} and~\cite[Ch.~6]{alsan_phd14}.

\item Sec.~\ref{sec:PolarizationRandomProcesses} analyzes bounded random processes associated with channel functionals $\I_f(W)$ for symmetric $\BIC$s $W$, where $f$ may be either convex or non-convex. A necessary and sufficient condition is established to determine whether the random process associated with each $\I_f(W)$ is a martingale, submartingale, or supermartingale. Represented via functional inequalities in terms of $f$, this condition is numerically verifiable for all channel functionals $\I_f(W)$. Moreover, as we demonstrate, it provides a method for analytical proofs. Applying this method in Sec.~\ref{sec:NewSubSuperMartingales}, we prove that the random process associated with the squared maximal correlation parameter is a supermartingale, and converges almost surely on the unit interval $[0, 1]$.
\end{itemize}

\subsection{Relation to prior work}\label{subsec:RelatedPriorWork}

Beyond the symmetric capacity $I(W)$ and Bhattacharyya parameter $Z(W)$, more complex channel functionals have been studied. Alsan and Telatar prove that the random process corresponding to Gallager's reliability function $E_0$~\cite[Ch.~5.6]{gallager68}, which is related to various error exponents and cutoff rates~\cite{arikan_cutoff_rate_paper_2006}, is a submartingale~\cite[Theorem~2]{alsan_telatar_2014}~\cite[Theorem~4.7]{alsan_phd14}). Channel combining and splitting by iteratively applying \Arikan's polar transform increases and improves $E_0$. In~\cite[Theorem~1]{arikan_varentropy_2016}, \Arikan\ characterizes the evolution of the variance of the information density, called ``varentropy" generally, or ``channel dispersion" in the case of uniform input distribution to the channel. \Arikan\ proved that the varentropy decreases after each iteration of the polar transform. In the present paper, we prove that both Gallager's reliability function and the second moment of information density related to channel dispersion are induced functionals $\I_f(W)$ of $W$ for particular choices of $f$. As explained in Sec.~\ref{sec:PolarizationRandomProcesses} of the present paper (see e.g., Theorem~\ref{thm:fRelationsForOutputSymmetricBICs} and Corollary~\ref{corollary:NewMartingales}), it is feasible to verify and replicate these prior results, both numerically and analytically.

While the convergence of bounded martingales is well-known in mathematics, alternative methods exist to prove the convergence of random processes. For instance, it may be possible to relate one random process to an auxiliary random process whose convergence is rigorously established. Auxiliary processes were studied in~\cite{arikan09}. In more recent work, the authors of~\cite{alsan_telatar_simple_proof_2016} provide a simple proof of channel polarization that bypasses the explicit use of martingales. On a related topic, going beyond the concept of real-valued random processes, one can define a framework for \textit{channel-valued} random processes. Convergence in distribution, also called the ``weak convergence'' of channel densities, is defined for channels in the context of density evolution~\cite[Ch.~4]{modern_coding_theory_book_2008}. Subsequent to our work,~\cite{raj_nasser_2019} provides evidence for the convergence of \Arikan's \textit{channel-valued} random process in a topological space.

Since \Arikan's discovery of polarization, significant advances in theory have been made including: (i) multilevel and $q$-ary polarization~\cite{barg_qary_polar_2013,pradhan_multilevel_2013}; (ii) generalized $\ell \times \ell$ polarization matrices and algebraic constructions~\cite{korada_sasoglu_urbanke_2010,tanaka_2014}; (iii) refinements to the rate of polarization, scaling laws, and asymptotic analysis of polar codes~\cite{telatar_rate_polar_2009,hassani_rate_dependent_2013,hassani_finite_scaling_2014,guruswami_gap_to_capacity_2015,mondelli_finite_scaling_2016,sudan_stoc_GNRS_18}. Prior work focuses primarily on \Arikan's martingale corresponding to $I(W)$ and mutual information.

Beyond their application to point-to-point channels, polar codes have also been invented for several multi-user channels such as multiple-access channels~\cite{abble_mac_2012}, broadcast channels~\cite{goela_broadcast_2015}, and wiretap channels~\cite{vardy_wiretap_2011}. If the notions of symmetry and information combining~\cite{huber_2006} could be extended to multi-user channels, a corresponding measure-theoretic framework of polarization could be developed.

\subsection{Frequently used notation}\label{subsec:UsedNotation}

The following mathematical notations are adopted in the sequel. For $p,q \in [0,1]$, we let $\bar{p} \deq 1 - p$ (for $p \in \{0,1\}$, this is the Boolean {\sc not}) and $p \star q \deq p\bar{q} + \bar{p}q$.  For $a,b \in \Reals$, $a \wedge b \deq \min(a,b)$ and $a \vee b \deq \max(a,b)$. The closure of a set $S$ is denoted by ${\rm cl}\{S\}$. For a random variable $X$ defined on a probability space $(\Omega, \mathcal{F}, P)$, the probability law of $X$ is the induced measure of $X$ under $P$ on $\Reals$~\cite[Ch.~2]{measure_theory_athreya_2006}. We will denote by $\cL(X)$ the probability law of $X$. The notation $\delta_x$ denotes the Dirac measure centered on a fixed point $x$ in a measurable space. The binary entropy function is denoted by $h_2(x) := -x\log_2(x) - \bar{x} \log_2 \bar{x}$ for $x \in [0,1]$. More generally, we also define $\psi_{r}(x) := x(1 + \log_2 x)^{r} + \bar{x} (1 + \log_2 \bar{x} )^{r}$ for $r$ a positive integer, and $x \in [0,1]$.

\section{Representations of BICs}\label{sec:RepresentationsOfBICs}

In this work, we focus on discrete, memoryless $\BIC$s with finite output alphabets:

\begin{definition}[Discrete, memoryless, binary-input channel ($\BIC$)]
A discrete, memoryless, binary-input channel ($\BIC$) is a pair $(\sY,W)$, where $\sY$ is the finite output alphabet and $W = \big(W(\cdot|0), W(\cdot|1)\big)$ is a pair of probability distributions on $\sY$. For $x \in \{0,1\}$, $W(\cdot|x)$ is the probability distribution of the channel output when the channel input is equal to $x$.
\end{definition}

\noindent The \textit{channel transition matrix} is the most familiar representation of a $\BIC$:

\begin{definition}[Channel transition matrix] For a $\BIC$ $(\sY,W)$, let $T_{W}$ denote the $2 \times |\sY|$ matrix whose elements are $W(y|x)$ for $(x,y) \in \{0,1\} \times \sY$. \end{definition}

\begin{example}[Binary erasure channel $\BEC(\varepsilon)$] {\em The binary erasure channel with erasure probability $\varepsilon$ is a $\BIC$ $(\sY,W)$ with $\sY = \{0, 1, \texttt{e}\}$, $W(\cdot|0) = \bar{\eps} \delta_0 + \eps \delta_{\texttt{e}}$, and $W(\cdot|1) = \bar{\eps}\delta_1 + \eps \delta_{\texttt{e}}$.
The transition matrix is
\begin{align}
T_{\BEC(\varepsilon)} & := \left[ \begin{array}{ccc} 1-\varepsilon & 0 & \varepsilon \\ 0 & 1 - \varepsilon & \varepsilon \end{array}\right]. \notag
\end{align}}
\end{example}

\begin{example}[Binary symmetric channel $\BSC(\GAM)$] {\em The binary symmetric channel with bit-flip probability $\GAM$ is a $\BIC$ $(\sY,W)$ with $\sY = \{0, 1\}$, $W(\cdot|0) = {\rm Bern}(p)$, and $W(\cdot|1) = {\rm Bern}(\bar{p})$. The
transition matrix is
\begin{align}
T_{\BSC(\GAM)} & := \left[ \begin{array}{cc} 1-\GAM & \GAM \\ \GAM & 1-\GAM \end{array}\right]. \notag
\end{align}}
\end{example}
\noindent In the remainder of this section, we describe a number of alternative representations of $\BIC$s that will be used in the sequel.

\subsection{The Blackwell measure}

The Blackwell measure~\cite{BlackwellCE1,BlackwellCE2,torgersen1991comparison} particularized to $\BIC$s is defined as the distribution of the posterior probability of the binary input being $0$, assuming a uniform input distribution to the channel:

\begin{definition}[Blackwell measure of a $\BIC$]\label{def:BlackwellMeasureForBIC}
Given a $\BIC$ $(\sY,W)$, let $(X,Y)$ be a random couple on $\{0,1\} \times \sY$ with $P_X = \Bernoulli(1/2)$ and $P_{Y|X} = W$. Define the function
\begin{align}
\BlkFn_W(y) & := \frac{ W(y|0) }{ W(y|0) + W(y|1) }. \label{eqn:BlackwellMeasureBIC}
\end{align}
The random variable $S = \BlkFn_W(Y)$, which is equal to the posterior probability of $X=0$ given $Y$, takes values in the unit interval $[0,1]$ and has mean $1/2$. The Blackwell measure of $W$, which we will denote by $\sm_W$, is the probability law of random variable $S$ (see~\cite[Ch.~2]{measure_theory_athreya_2006} for the definition of probability law).
\end{definition}

\begin{example}[Blackwell measures for $\BEC(\varepsilon)$ and $\BSC(\GAM)$] {\em The Blackwell measures for the $\BEC(\varepsilon)$ and $\BSC(\GAM)$ are
\begin{align}
\sm_{\BSC(\GAM)} & = \frac{1}{2}\delta_{\GAM} + \frac{1}{2}\delta_{\bar{\GAM}}, \label{eqn:BSCDeltaMeasure} \\
\sm_{\BEC(\varepsilon)} & = \frac{\bar{\varepsilon}}{2}\delta_0 + \frac{\bar{\varepsilon}}{2}\delta_1 + \varepsilon\delta_{1/2}. \label{eqn:BECDeltaMeasure}
\end{align}}
\end{example}

Given two $\BIC$s $(\sY,W)$ and $(\sY',W')$, we say that $W$ dominates $W'$ (or is more informative than $W'$) in the sense of Blackwell \cite{BlackwellCE1,BlackwellCE2,torgersen1991comparison} if there exists a random transformation $K$ from $\sY$ to $\sY'$, such that $W' = K \circ W$, i.e., for all $x \in \{0,1\}$ and all $y' \in \sY$,
$$
W'(y'|x) = \sum_{y \in \sY} K(y'|y) W(y|x),
$$
In other words, $W$ dominates $W'$ exactly when it is stochastically degraded with respect to $W$. In that case, we write $W \succeq W'$. We say that $W$ and $W'$ are equivalent if $W \succeq W'$ and $W' \succeq W$. In that case, we write $W \equiv W'$. The fundamental nature of the Blackwell measure is evident from the following theorem:

\begin{theorem}[Blackwell--Sherman--Stein~\cite{BlackwellCE1,BlackwellCE2,Sherman51,Stein51}]\label{thm:BSS} Consider two $\BIC$s $W$ and $W'$. Then:
	\begin{enumerate}
		\item $W \equiv W'$ if and only if $\sm_W = \sm_{W'}$ (that is, the Blackwell measure specifies the channel uniquely up to equivalence). Moreover, let $\cM$ denote the collection of all Borel probability measures on $[0,1]$ with mean $1/2$. Then for any $\sm \in \cM$ there exists a $\BIC$ $W$, unique up to equivalence, such that $\sm = \sm_W$.\footnote{The $\BIC$ $W$ has a finite output alphabet if and only if $\sm_W$ has finite support. This is precisely the setting of this paper.}\footnote{The Blackwell measure is also defined in the context of hidden Markov models~\cite{BlackwellCE3}.}
		\item $W \succeq W'$ if and only if
		$$
		\int_{[0,1]} f \d\sm_W \ge \int_{[0,1]} f\d\sm_{W'}
		$$
		for every convex $f : [0,1] \to \Reals$.
	\end{enumerate}
\end{theorem}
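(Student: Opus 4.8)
The plan is to prove part (2)---the equivalence between Blackwell dominance and the convex (dilation) ordering of the two measures---and then to obtain part (1) as a consequence. Since both $\succeq$ and the convex inequality are invariant under equivalence $\equiv$, I would work throughout with the \emph{canonical representation} of a measure $\sm \in \cM$: let $W_\sm$ be the BIC with output space equal to the support of $\sm$ in $[0,1]$ and transition kernel $W_\sm(\d s \mid 0) = 2s\,\sm(\d s)$, $W_\sm(\d s \mid 1) = 2\bar s\,\sm(\d s)$. A one-line Bayes computation shows that the posterior of $X=0$ given output $s$ equals $s$ and that the output marginal under a uniform input is exactly $\sm$, so $\sm_{W_\sm} = \sm$; this simultaneously settles the existence claim in part (1).

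For the \emph{only if} direction of part (2), suppose $W' = K \circ W$. Attach the couple $(X,Y)$ to $W$ and pass $Y$ through $K$ to produce $Y'$, so that $X \to Y \to Y'$ is Markov and $(X,Y')$ realizes $W'$. Writing $S = f_W(Y)$ and $S' = f_{W'}(Y')$ for the two posteriors, the tower property combined with the Markov property gives $S' = \E[\mathds{1}\{X=0\} \mid Y'] = \E[S \mid Y']$. Hence for every convex $f$, conditional Jensen yields $\E[f(S')] = \E[f(\E[S\mid Y'])] \le \E[\E[f(S)\mid Y']] = \E[f(S)]$, i.e. $\int_{[0,1]} f\,\d\sm_{W'} \le \int_{[0,1]} f\,\d\sm_W$.

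The substantive direction is the converse, and it is the main obstacle. Assume $\int f\,\d\sm_W \ge \int f\,\d\sm_{W'}$ for all convex $f$; since both measures have mean $1/2$, this says exactly that $\sm_{W'}$ precedes $\sm_W$ in the convex order. By Strassen's dilation theorem (equivalently the Cartier--Fell--Meyer theorem) there is a mean-preserving kernel $R(\d s\mid t)$ on $[0,1]$ with $\sm_W = \sm_{W'}R$ and $\int_{[0,1]} s\,R(\d s\mid t) = t$ for $\sm_{W'}$-a.e.\ $t$. Form the joint law $\pi(\d t,\d s) = \sm_{W'}(\d t)\,R(\d s\mid t)$ and disintegrate it the other way as $\pi(\d t,\d s) = \sm_W(\d s)\,K(\d t\mid s)$; the reverse conditional $K$ is the candidate degrading channel from the output of $W_{\sm_W}$ to that of $W_{\sm_{W'}}$. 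Using $K(\d t\mid s)\,\sm_W(\d s) = R(\d s\mid t)\,\sm_{W'}(\d t)$ and the mean-preserving identity, $(K\circ W_{\sm_W})(\d t\mid 0) = \int_s 2s\,R(\d s\mid t)\,\sm_{W'}(\d t) = 2t\,\sm_{W'}(\d t) = W_{\sm_{W'}}(\d t\mid 0)$, and symmetrically (with $\bar s,\bar t$ in place of $s,t$) for input $1$. Thus $W_{\sm_{W'}} = K\circ W_{\sm_W}$, giving $W \succeq W'$. The appeal to Strassen's theorem is the one genuinely nontrivial input: it converts an inequality quantified over the infinite-dimensional cone of convex test functions into an explicit coupling. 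In the finite-support setting relevant here it reduces to a finite majorization/transportation statement that could, if desired, be argued directly.

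Finally I would deduce part (1). If $W \equiv W'$, applying part (2) in both directions forces $\int f\,\d\sm_W = \int f\,\d\sm_{W'}$ for every convex $f$; specializing to $f(s) = |s-c|$ over $c \in [0,1]$ and using that the resulting potential function determines a compactly supported measure (convex functions are measure-determining on $[0,1]$) gives $\sm_W = \sm_{W'}$. The reverse implication is immediate, since $\sm_W = \sm_{W'}$ turns both convex inequalities into equalities, whence $W \succeq W'$ and $W' \succeq W$. Existence of a BIC with a prescribed Blackwell measure was handled by the canonical construction above, and uniqueness up to equivalence is precisely the equivalence just established.
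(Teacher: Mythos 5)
The paper offers no proof of this theorem at all: it is the classical Blackwell--Sherman--Stein theorem, stated with citations to Blackwell's papers and to Torgersen's monograph. Your proposal therefore cannot be compared against a ``paper proof,'' but on its own merits it is a correct and essentially standard argument for the dichotomy case: reduce to the canonical representation $W_\sm(\d s\mid 0)=2s\,\sm(\d s)$, $W_\sm(\d s\mid 1)=2\bar{s}\,\sm(\d s)$; get the easy direction of part~(2) from the posterior-martingale identity $S'=\E[S\mid Y']$ and conditional Jensen; get the hard direction from Strassen's dilation theorem by reversing the mean-preserving coupling, and your verification that $(K\circ W_{\sm_W})(\d t\mid x)=W_{\sm_{W'}}(\d t\mid x)$ for $x\in\{0,1\}$ is correct; and deduce part~(1) from part~(2) because the potentials $s\mapsto|s-c|$ determine a probability measure on $[0,1]$ with fixed mean. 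In the finite-support setting of the paper the appeal to Strassen indeed collapses to a finite majorization/linear-programming fact. The one step you use but never establish is that every BIC is equivalent to its canonical representative, $W\equiv W_{\sm_W}$: you need this to transfer $W_{\sm_W}\succeq W_{\sm_{W'}}$ back to $W\succeq W'$, and you cannot borrow it from part~(1), since you derive part~(1) from part~(2). It is not a fatal gap --- the map $y\mapsto f_W(y)$ degrades $W$ onto $W_{\sm_W}$ because $W(y\mid 0)=2P_Y(y)f_W(y)$ and $W(y\mid 1)=2P_Y(y)(1-f_W(y))$, and the reverse kernel $K(y\mid s)=\P[Y=y\mid f_W(Y)=s]$ degrades $W_{\sm_W}$ back onto $W$ --- but this sufficiency-of-the-posterior lemma is exactly where the heart of the ``uniqueness up to equivalence'' claim lives, so it should be stated and checked rather than absorbed silently into the phrase ``work throughout with the canonical representation.''
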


\begin{remark} {\em There are one-to-one correspondences between the Blackwell measure $\sm_W$ and other probabilistic objects associated to the $\BIC$ $W$, such as its $L$- and $D$-distributions~\cite[Ch.~4]{modern_coding_theory_book_2008}. The Blackwell measure is also a special case of the so-called $\alpha$-representation\cite{arikan_varentropy_2016}.}
\end{remark}

\subsection{$D$-distributions and density evolution}

Recall the following definition of the $D$-distribution:

\begin{definition}[$D$-distribution~\cite{modern_coding_theory_book_2008}] Given a $\BIC$ $(\sY,W)$, let $(X,Y)$ be a random couple on $\{0,1\} \times \sY$ with $P_X = \Bernoulli(1/2)$ and $P_{Y|X} = W$. Define the function
\begin{align}
\Delta_{W}(y) & := P_{X|Y}(0|y) - P_{X|Y}(1|y), \label{eqn:DdistributionDef}
\end{align}
and let $D = \Delta_{W}(Y)$. Then, the $D$-distribution is the cumulative distribution function (c.d.f.) of $D$ conditioned on the event $X = 0$~\cite[Ch.~4]{modern_coding_theory_book_2008}.
\end{definition}

\noindent It follows from Eqn.~\eqref{eqn:BlackwellMeasureBIC} and Eqn.~\eqref{eqn:DdistributionDef} that for arbitrary $y \in \sY$,
\begin{align}
\Delta_{W}(y) & = \frac{W(y|0) - W(y|1)}{W(y|0)+W(y|1)} \notag \\
&\equiv 2\BlkFn_{W}(y)-1. \label{eqn:DeltaFunctionAlternativeForm}
\end{align}
Thus, the probability law of $\Delta_W(Y) = 2S-1$ also specifies $W$ uniquely up to Blackwell equivalence. Moreover, $\Delta_W(Y)$ takes values in $[-1,1]$, has mean zero, and is symmetric, i.e., $\Delta_W(Y)$ and $-\Delta_W(Y)$ have the same probability law.

\subsection{Information density}

Information density furnishes another useful description of $\BIC$s. Let $(X,Y)$ be a random couple taking values in a finite product space $\sX \times \sY$. The information density is defined as
\begin{align}
i(x;y) := \log_2 \frac{P_{Y|X}(y|x)}{P_{Y}(y)}, \label{eqn:IDensity}
\end{align}
where $P_Y(y) = \sum_{x \in \sX} P_{X}(x)P_{Y|X}(y|x)$. The expectation and the variance of the information density are the mutual information and the information variance:
\begin{align}
I(X; Y) & = \mathbf{E}\left[i(X; Y)\right], \notag \\
V(X;Y) & = \mathbf{E}\left[i^{2}(X;Y)\right] - (I(X;Y))^{2}. \notag
\end{align}
We particularize this to $\BIC$s with equiprobable inputs:

\begin{definition}[Information density for a $\BIC$]\label{def:InformationDensity} Given a $\BIC$ $(\sY,W)$, let $(X,Y)$ be a random couple on $\{0,1\} \times \sY$ with $P_X = \Bernoulli(1/2)$ and $P_{Y|X} = W$. The information density of $(X,Y)$ is given by
\begin{align}
i_W(x; y) = \log_2 \frac{ W(y|x) } { \frac{1}{2} W(y|0) + \frac{1}{2}W(y|1) }. \label{eqn:IDensityBIC}
\end{align}
\end{definition}
The expectation and variance of $i_W(X; Y)$ with $X \sim \Bernoulli(1/2)$ are known as the symmetric capacity $I(W)$ and symmetric dispersion $V(W)$, respectively. To express these parameters succinctly, we introduce the $r$th  moment of the information density:
\begin{align}\label{eqn:RthMomentInformationDensity}
M_r(W) := \E[i_W(X;Y)^r].
\end{align}
Then $I(W) = M_1(W)$ and $V(W) = M_2(W) - I^2(W)$.
From Eqn.~\eqref{eqn:BlackwellMeasureBIC} and Eqn.~\eqref{eqn:IDensityBIC}, it follows that
$$
i_{W}(x; y) =
\begin{cases}
1 + \log_2 \BlkFn_{W}(y) & \mbox{if} ~ x = 0; \\
1 + \log_2 (1-\BlkFn_{W}(y)) & \mbox{if} ~ x = 1;
\end{cases}
$$
for arbitrary $y \in \sY$. Therefore, the information density specifies $W$ uniquely up to Blackwell equivalence.

\subsection{The Neyman--Pearson region}

Another useful representation of $\BIC$s arises from the theory of binary hypothesis testing (see~\cite[Sec.~12.1 and 12.2]{Polyanskiy_Wu_lecturenotes}).

\begin{definition}[Neyman--Pearson region]\label{def:NPRegion} For a $\BIC$ $(\sY,W)$,  the \textit{Neyman--Pearson region} $\cR_{{\rm NP}}(W)$ is a subset of $[0,1]^2$ consisting of all points $(\xi, \eta)$, for which there exists some $f : \sY \to [0,1]$, such that
\begin{align}
\xi & = \sum_{y \in \sY} f(y) W(y|0), \label{eq:NP_pointsA} \\
\mbox{and}~~\eta & = \sum_{y \in \sY} f(y) W(y|1) \label{eq:NP_pointsB}.
\end{align}
\end{definition}

\noindent The Neyman--Pearson region has the following properties:
\begin{enumerate}[i)]
\item It is a closed and convex subset of $[0,1]^2$.
\item It contains the diagonal $\cD \deq \{(\xi,\xi) : \xi \in [0,1]\}$.
\item It is equal to the closed convex hull of all points $(\xi,\eta)$, where $\xi$ has the form~\eqref{eq:NP_pointsA} and $\eta$ has the form~\eqref{eq:NP_pointsB}, with $f$ taking values in $\{0,1\}$:
\begin{IEEEeqnarray*}{rCl}
\IEEEeqnarraymulticol{3}{l} { \cR_{{\rm NP}}(W) } \\
\quad \quad & = & {\rm cl}\left\{{\rm conv}\Big\{ \left(W(A|0), W(A|1) \right) : A \subseteq \sY \Big\}\right\},
\end{IEEEeqnarray*}
where $W(A|x) \deq \sum_{y \in A} W(y|x)$.
\end{enumerate}
The following fundamental result is a consequence of the Blackwell--Sherman--Stein theorem:
\begin{theorem}[Neyman--Pearson criterion for Blackwell dominance]\label{thm:NeymanPearsonCriterion} Consider two $\BIC$s $W$ and $W'$. The Neyman--Pearson criterion for Blackwell dominance is given by,
$$
W \succeq W' ~ \Longleftrightarrow ~ \cR_{{\rm NP}}(W) \supseteq \cR_{{\rm NP}}(W').
$$
\end{theorem}
\noindent For example, one can show that
\begin{IEEEeqnarray*}{rCl}
\IEEEeqnarraymulticol{3}{l} { \cR_{{\rm NP}}(\BSC(p)) } \\
\quad \quad & = & {\rm cl}\Bigl\{{\rm conv} \left\{ (0,0), (p,\bar{p}), (\bar{p},p), (1,1) \right\}\Bigl\}.
\end{IEEEeqnarray*}
Then $\cR_{{\rm NP}} (\BSC(0)) = [0,1]^2$, and $\cR_{{\rm NP}}(\BSC(1/2)) = \cD$.

\section{Functionals of $\BIC$s}\label{sec:InducedFunctionalsOfBICs}

Any measurable function $f : [0,1] \to \Reals$ induces a functional $\I_f$ on the collection of all $\BIC$s via
\begin{IEEEeqnarray}{rCl}
\I_f(W) & \deq & \int_{[0,1]} f \d\sm_W = \E[f(S)], \label{eqn:FunctionalBICEqnDefinition}
\end{IEEEeqnarray}
where $S \sim \sm_W$. As summarized in Table~\ref{tbl:FunctionalsOfBICs} and explained in detail in this section, a variety of channel characteristics can be expressed in this way.

\begin{table*}[!t]
\centering
\caption{Real-valued functionals of $\BIC$s}
\label{tbl:FunctionalsOfBICs}
\begin{IEEEeqnarraybox}[\IEEEeqnarraystrutmode\IEEEeqnarraystrutsizeadd{2pt}{1pt}]{x/l/v/l/v/l/x}
\IEEEeqnarrayrulerow[1.42pt]\\
& {\rm \textbf{Measurable~Function} } &&{\rm \textbf{Induced~Functional} } & & {\rm \textbf{English~Description} } &\\
& f : [0,1] \to \Reals &&  \I_f(W)  & &  & \\
\IEEEeqnarraydblrulerow\\
\IEEEeqnarrayseprow[3pt]\\
&  f(s) = 1 - h_2(s)   &&     I(W)    & & {\rm Mutual~Information} & \IEEEeqnarraystrutsize{0pt}{0pt}\\
\IEEEeqnarrayseprow[3pt]\\
\IEEEeqnarrayrulerow\\
\IEEEeqnarrayseprow[3pt]\\
&  f(s) = s(1 + \log_2 s)^{r} + \bar{s} (1 + \log_2 \bar{s} )^{r}, ~r\in \mathbb{Z}^{+}   &&     M_{r}(W) & & {\rm Moments~of~Information~Density} & \IEEEeqnarraystrutsize{0pt}{0pt}\\
\IEEEeqnarrayseprow[3pt]\\
\IEEEeqnarrayrulerow\\
\IEEEeqnarrayseprow[3pt]\\
&  f(s) = 2\sqrt{s(1-s)}   &&     Z(W)    & & {\rm Bhattacharyya~Parameter} & \IEEEeqnarraystrutsize{0pt}{0pt}\\
\IEEEeqnarrayseprow[3pt]\\
\IEEEeqnarrayrulerow\\
\IEEEeqnarrayseprow[3pt]\\
&  f(s) = 2s^\alpha(1-s)^{1-\alpha}, ~\alpha \in (0,1)   &&     \cH_\alpha(W)    & & {\rm Hellinger~Affinity} & \IEEEeqnarraystrutsize{0pt}{0pt}\\
\IEEEeqnarrayseprow[3pt]\\
\IEEEeqnarrayrulerow\\
\IEEEeqnarrayseprow[3pt]\\
&  f(s) = 2^{-\rho}\left(s^{\frac{1}{1+\rho}} + (1-s)^{\frac{1}{1+\rho}} \right)^{1+\rho}, ~\rho\geq 0 && \exp\left(-E_0(\rho,W)\right) & & {\rm Gallager's~Function}~E_0(\rho,W)~ & \IEEEeqnarraystrutsize{0pt}{0pt}\\
\IEEEeqnarrayseprow[3pt]\\
\IEEEeqnarrayrulerow\\
\IEEEeqnarrayseprow[3pt]\\
&  f(s) = \bar{\lambda} \wedge \lambda - (2\bar{\lambda} s) \wedge (2\lambda\bar{s}), ~\lambda \in [0,1] && B_\lambda(W) & & {\rm Bayesian~Information~Gain} & \IEEEeqnarraystrutsize{0pt}{0pt}\\
\IEEEeqnarrayseprow[3pt]\\
\IEEEeqnarrayrulerow\\
\IEEEeqnarrayseprow[3pt]\\
&  f(s) = |2s-1| && 1-2P_{\rm{e,ML}}(W) & & {\rm Maximum~Likelihood~Decoding~Error}~P_{\rm{e,ML}}(W) & \IEEEeqnarraystrutsize{0pt}{0pt}\\
\IEEEeqnarrayseprow[3pt]\\
\IEEEeqnarrayrulerow\\
\IEEEeqnarrayseprow[3pt]\\
&  f(s) = (2s-1)^2 && \rho^{2}_{\rm{max}}(W) & & {\rm Squared~Maximal~Correlation} & \IEEEeqnarraystrutsize{0pt}{0pt}\\
\IEEEeqnarrayseprow[3pt]\\
\IEEEeqnarrayrulerow[1.42pt]\\
\end{IEEEeqnarraybox}
\end{table*}

\subsection{Symmetric capacity $I(W)$} With $f(s) = 1 - h_2(s)$, where $h_2(\cdot)$ is the binary entropy function, $\I_f(W)$ is equal to the \textit{symmetric capacity} $I(W)$ of $W$ \cite{arikan09}, i.e., the mutual information of $W$ with uniform input distribution. Indeed, let $(X,Y)$ be a random couple with $X \sim \Bernoulli(1/2)$ and $P_{Y|X} = W$. Then
\begin{align*}
\I_f(W) &= 1-\E[h_2(S)] \\
&=1 + \sum_{x \in \{0, 1\}} \E\left[ \frac{W(Y|x)}{2P_Y(Y)} \log_2 \frac{W(Y|x)}{2P_Y(Y)}
\right] \\
&=1+ \E\left[\E\Bigg[\log_2\frac{W(Y|X)}{2P_Y(Y)}\Bigg|X\Bigg]\right] \\
&= D(P_{Y|X}\|P_Y|P_X) \\
&\equiv I(W).
\end{align*}

\subsection{The $r$th moment of information density $M_r(W)$}\label{subsec:RthMomentInformationDensity} Let $r$ be a positive integer. If we take
\begin{IEEEeqnarray}{rClCl}
f(s) & = & \psi_{r}(s) & \deq & s(1 + \log_2 s)^{r} + \bar{s} (1 + \log_2 \bar{s} )^{r}, \nonumber
\end{IEEEeqnarray}
then $\I_f(W)$ is equal to the $r$-th moment of the information density $M_r(W)$, assuming uniform input distribution, as defined in Eqn.~\eqref{eqn:RthMomentInformationDensity}. The following equalities establish our claim:
\begin{IEEEeqnarray}{rRl}
\IEEEeqnarraymulticol{3}{l} { M_r(W) } \nonumber \\
~ & := & \mathbf{E}\left[\left(i_{W}(X; Y)\right)^{r}\right] \nonumber \\
~ &  = & \mathbf{E}\left[ \left(  \log_2 \frac{ 2W(Y|X) }{W(Y|0) + W(Y|1)}  \right)^{r} \right] \nonumber \\
~ &  = & \mathbf{E}\left[ \left(1 + \log_2 \frac{ W(Y|X) }{W(Y|0) + W(Y|1)}  \right)^{r} \right] \nonumber  \\
~ &  = & \mathbf{E} \left[ P_{X|Y}(0|Y) \left(1 + \log_2 \frac{ W(Y|0) }{W(Y|0) + W(Y|1)}  \right)^{r} \right]  \nonumber \\
~ &  ~ & ~ + ~ \mathbf{E} \left[ P_{X|Y}(1|Y)\left(1 + \log_2 \frac{ W(Y|1) }{W(Y|0) + W(Y|1)}  \right)^{r} \right] \nonumber \\
~ &  = & \mathbf{E} \left[ \BlkFn_{W}(Y) \left(1 + \log_2 \BlkFn_{W}(Y)  \right)^{r} \right]  \nonumber \\
~ &  ~ & ~ + ~ \mathbf{E} \left[ (1 - \BlkFn_{W}(Y)) \left(1 + \log_2 (1 - \BlkFn_{W}(Y)) \right)^{r} \right] \nonumber \\
~ &  = & \mathbf{E} \left[ S(1 + \log_2 S)^{r} + \bar{S}(1 + \log_2 \bar{S})^{r} \right] \nonumber \\
~ &  = & \mathbf{E} \left[ \psi_r(S) \right] \notag \\
~ & \equiv & \I_f(W). \nonumber
\end{IEEEeqnarray}
Note that $\psi_1(s) = 1 - h_2(s)$ and $M_1(W) = I(W)$. The channel dispersion parameter $V(W)$ is defined as the variance of the information density, $V(W) \deq M_2(W) - (I(W))^{2}$, assuming a uniform input distribution. While the dispersion $V(W)$ cannot be expressed explicitly as an induced functional of the form $\I_f(W)$ for any $f : [0,1] \to \Reals$, we can use the variational representation of the variance as follows:
\begin{IEEEeqnarray*}{rCl}
V(W) & = & {\rm Var}[i(X;Y)] \\
     & = & \min_{c \in \Reals} \E[(i(X;Y)-c)^2] \\
     & = & \min_{c \in \Reals} \E[(1-h_2(S)-c)^2] \\
     & = & \min_{c \in \Reals} \E[(h_2(S)-c)^2].
\end{IEEEeqnarray*}
Thus, if we consider the family of functions $f_c(s) \deq (h_2(s) - c)^2$, $c \in \Reals$, we see that $V(W)$ can be expressed as
\begin{IEEEeqnarray*}{rCl}
V(W) & = & \min_{c \in \Reals} \I_{f_c}(W).
\end{IEEEeqnarray*}

\subsection{Hellinger affinity $\cH_\alpha(W)$} If we select $f(s) = 2s^\alpha(1-s)^{1-\alpha}$ for $\alpha \in (0,1)$, then $\I_f(W)$ is equal to the \textit{Hellinger affinity} of order $\alpha$:
\begin{IEEEeqnarray*}{rCl}
\cH_\alpha(W) & \deq & \sum_{y \in \sY} W(y|0)^\alpha W(y|1)^{1-\alpha}.
\end{IEEEeqnarray*}
Indeed, one can write,
\begin{IEEEeqnarray*}{rCl}
\I_f(W) & = & 2\, \E[S^\alpha(1-S)^{1-\alpha}] \\
& = & 2\,\E\left[\left(\frac{W(Y|0)}{2P_Y(Y)}\right)^\alpha \left(\frac{W(Y|1)}{2P_Y(Y)}\right)^{1-\alpha}\right] \\
& = & \E\left[\frac{1}{P_Y(Y)} W(Y|0)^\alpha W(Y|1)^{1-\alpha}\right] \\
& = & \sum_{y \in \sY} W(y|0)^\alpha W(y|1)^{1-\alpha} \\
& \equiv & \cH_\alpha(W).
\end{IEEEeqnarray*}
In particular, if we set $\alpha = 1/2$, then we recover the \textit{Bhattacharyya parameter}~\cite{arikan09},
\begin{IEEEeqnarray*}{rCl}
\cH_{1/2}(W) & = & Z(W) \deq \sum_{y \in \sY} \sqrt{W(y|0)W(y|1)}.
\end{IEEEeqnarray*}

\subsection{Gallager's function $E_0(\rho,W)$}

Gallager's $E_0$ function of a $\BIC$ $(\sY,W)$ with input distribution $P_X$ is defined as follows \cite[Ch.~5.6]{gallager68},
\begin{IEEEeqnarray*}{rCl}
E_0(\rho,P_{X},W) & \deq & -{\ln \sum_{y \in \sY} \sum_{x \in \{0,1\}} \left[P_X(x) W(y|x)^{\frac{1}{1+\rho}} \right]^{1+\rho}}, \nonumber
\end{IEEEeqnarray*}
for any $\rho \ge 0$. In particular, we define
\begin{IEEEeqnarray}{rRl}
E_0(\rho,W) & \deq & E_0(\rho,\Bernoulli(1/2),W) \nonumber \\
& = & -{\ln \sum_{y \in \sY} \left( \frac{1}{2}W(y|0)^{\frac{1}{1+\rho}} + \frac{1}{2}W(y|1)^{\frac{1}{1+\rho}}\right)^{1+\rho}}. \nonumber
\end{IEEEeqnarray}
Choosing $f$ as
\begin{IEEEeqnarray*}{rCl}
f(s) & = & 2^{-\rho}\left(s^{\frac{1}{1+\rho}} + (1-s)^{\frac{1}{1+\rho}} \right)^{1+\rho}
\end{IEEEeqnarray*}
yields an induced functional
\begin{IEEEeqnarray*}{rCl}
\I_f(W) & = & \exp\left(-E_0(\rho,W)\right).
\end{IEEEeqnarray*}
To see this, consider the following chain of equalities:
\begin{IEEEeqnarray}{rRl}
\IEEEeqnarraymulticol{3}{l} {  \I_f(W) } \nonumber \\
~ & := & \mathbf{E}\left[f(S)\right] \nonumber \\
~ &  = & \mathbf{E}\left[ 2^{-\rho} \left( S^{\frac{1}{1+\rho}} + (1-S)^{\frac{1}{1+\rho}} \right)^{1 + \rho} \right] \nonumber \\
~ &  = & \mathbf{E}\left[ 2^{-\rho} \left( (\BlkFn_{W}(Y))^{\frac{1}{1+\rho}} + (1-\BlkFn_{W}(Y))^{\frac{1}{1+\rho}} \right)^{1 + \rho} \right] \nonumber \\
~ & =  & \mathbf{E}\left[ 2^{-\rho} \left( \frac{ (W(Y|0))^{\frac{1}{1 + \rho}} + (W(Y|1))^{\frac{1}{1 + \rho}}} { (W(Y|0) + W(Y|1))^{\frac{1}{1 + \rho}} } \right)^{1 + \rho} \right]  \nonumber \\
~&  =  & \mathbf{E}\left[ 2^{-\rho}  \frac{ \left( (W(Y|0))^{\frac{1}{1 + \rho}} + (W(Y|1))^{\frac{1}{1 + \rho}} \right)^{1 + \rho} } { W(Y|0) + W(Y|1) }  \right]  \nonumber \\
~&  =  & \mathbf{E}\left[  \frac{ \left( \frac{1}{2}(W(Y|0))^{\frac{1}{1 + \rho}} + \frac{1}{2}(W(Y|1))^{\frac{1}{1 + \rho}} \right)^{1 + \rho} } { \frac{1}{2}W(Y|0) + \frac{1}{2}W(Y|1) }  \right]  \nonumber \\
~ & =  & \sum_{y \in \sY} \left( \frac{1}{2}(W(y|0))^{\frac{1}{1 + \rho}} + \frac{1}{2}(W(y|1))^{\frac{1}{1 + \rho}} \right)^{1 + \rho} \nonumber \\
~ & \equiv & \exp\left(-E_0(\rho,W)\right). \nonumber
\end{IEEEeqnarray}

\subsection{Bayesian information gain $B_{\lambda}(W)$}

Given a $\BIC$ $(\sY,W)$ and $\lambda \in [0,1]$, consider a random couple $(X,Y)$ with $X \sim \Bernoulli(\lambda)$ and $P_{Y|X} = W$. Define the \text{minimum Bayes risk}
\begin{IEEEeqnarray}{rRrl}
b_\lambda(W) & \deq & \IEEEeqnarraymulticol{2}{l} {\min_{g : \sY \to \{0,1\}} {\mathbf P}[g(Y) \neq X] } \nonumber \\
& = & \min_{g : \sY \to \{0,1\}} \Biggl( & \bar{\lambda} \sum_{y \in \sY}W(y|0) {\bf 1}_{\{g(y) = 1\}}  \nonumber \\
& & \quad +\> & \lambda \sum_{y \in \sY} W(y|1) {\bf 1}_{\{g(y) = 0\}} \Biggl). \nonumber
\end{IEEEeqnarray}
where the minimum is over all deterministic decoders $g : \sY \to \{0,1\}$. The \textit{Bayesian information gain} is defined as
\begin{IEEEeqnarray}{rRl}
B_\lambda(W) & \deq & b_\lambda(\BSC(1/2)) - b_\lambda(W). \nonumber \\
\noalign{\vspace{1.5\jot} \noindent We claim that
\vspace{1.5\jot}}
B_\lambda(W) & = & \I_{f_\lambda}(W), \label{eqn:BayesianInformationClaim}
\end{IEEEeqnarray}
with $f_\lambda(s) \deq \bar{\lambda} \wedge \lambda - (2\bar{\lambda} s) \wedge (2\lambda\bar{s})$. The proof of this claim is given in Appendix~\ref{app:BayesianInformationGainProof}. Moreover, since any convex $f : [0,1] \to \Reals$ can be approximated by a positive affine combination of such $f_\lambda$'s \cite{torgersen1991comparison}, it follows that $W \succeq W'$ if and only if $B_\lambda(W) \ge B_\lambda(W')$ for all $\lambda \in (0,1)$.


\subsection{Squared maximal correlation $\rho^{2}_{\rm{max}}(W)$}\label{subsec:SquaredMaxiximalCorrelation}

Given jointly distributed, real-valued random variables $(X,Y)$, the Hirschfeld-Gebelein-R\'enyi maximal correlation is defined as follows~\cite{hirschfeld_1935,gebelein_1941,renyi_max_corr_1935}:
\begin{align}
\rho_{\rm{max}}(X, Y) & := \sup_{(g(X),~h(Y)) \in \mathcal{S} } \E\left[ g(X)h(Y) \right],
\end{align}
where $g, h$ are real-valued functions, and $\mathcal{S}$ is the collection of pairs of real-valued random variables $(g(X), h(Y))$ such that
\begin{IEEEeqnarray}{rClCrr}
\E [g(X)] & = & \E [h(Y)] & = & ~0, & \quad \mbox{and} \nonumber \\
\E \left[g^{2}(X)\right] & = & \E \left[ h^{2}(Y)\right] & = & ~1. & \nonumber
\end{IEEEeqnarray}
The maximal correlation is bounded: $0 \leq \rho_{\rm{max}}(X, Y) \leq 1$. Moreover, $\rho_{\rm{max}}(X, Y) = 0$ if and only if $X$ and $Y$ are independent, and $\rho_{\rm{max}}(X, Y) = 1$ if there exist functions $g, h$ such that $g(X) = h(Y)$ almost surely. For $X$ a binary random variable (see e.g.,~\cite{witsenhausen_1975}):
\begin{IEEEeqnarray}{rCl}
\rho^{2}_{\rm{max}}(X, Y) & = & \left[ \sum_{x \in \{0,1\}, y} \frac{(P_{X,Y}(x,y))^{2}}{P_X(x) P_Y(y)} \right] - 1. \nonumber
\end{IEEEeqnarray}

Given a $\BIC$ $(\sY, W)$, consider a random couple $(X,Y)$ with $X \sim \Bernoulli(1/2)$ and $P_{Y|X} = W$. In this case, we adopt the abbreviated notation $\rho^{2}_{\rm{max}}(W) := \rho^{2}_{\rm{max}}(X, Y)$. This is the squared maximal correlation parameter of $W$, assuming a uniform input distribution. If we take
\begin{align}
f(s) & = (2s - 1)^{2}, \notag
\end{align}
then evidently $\I_f(W) = \rho^{2}_{\rm{max}}(W)$. The following equalities establish this claim, with $P_X(x) = \frac{1}{2}$:
\begin{IEEEeqnarray}{rRl}
\IEEEeqnarraymulticol{3}{l} {\rho^{2}_{\rm{max}}(W) } \nonumber \\
~ & := & \left[ \sum_{x \in \{0,1\}, y} \frac{(P_{X,Y}(x,y))^{2}}{P_X(x) P_Y(y)} \right] - 1 \nonumber \\
~ & = & \left[ \sum_{y} 2P_Y(y)\left[ (P_{X|Y}(0|y))^{2} + (P_{X|Y}(1|y))^{2} \right] \right] - 1 \nonumber \\
~ & = & \E\left[2(\BlkFn_{W}(Y))^{2} + 2(1 - \BlkFn_{W}(Y))^{2} - 1\right] \nonumber \\
~ & = & \E\left[2S^{2} + 2(1 - S)^{2} - 1\right] \nonumber \\
~ & = & \E\left[ (2S - 1)^{2} \right] \nonumber \\
~ & \equiv & \I_f(W). \nonumber
\end{IEEEeqnarray}

\begin{remark}
{\em Consider functions $f: [0,1] \rightarrow \Reals$, $\varphi: \Reals \rightarrow \Reals$, and $\varphi \circ f: [0,1] \rightarrow \Reals$. Then $\I_f(W) = \E [f(S)]$ and $\I_{\varphi \circ f}(W) = \E [\varphi(f(S))]$. We note that $\I_{\varphi \circ f}(W) \neq \varphi(\I_f(W))$ in general. For example, if $\varphi(x) = x^{2}$, then clearly $\E [(f(S))^{2}] \neq \E [f(S)] \E[f(S)]$ in general. Examining the last two rows of Table~\ref{tbl:FunctionalsOfBICs}, $(1-2P_{\rm{e,ML}}(W))^{2} \neq \rho^{2}_{\rm{max}}(W)$ for arbitrary $\BIC$s, even though this is true for the special case that $W \equiv \BSC(p)$. Overall, the maximal correlation is a non-trivial, distinct parameter.} \end{remark}

\section{Output-symmetric $\BIC$s}\label{sec:OutputSymmetricBICs}

In his original pioneering work~\cite{arikan09}, \Arikan\ analyzed $\BIC$s having the property of output symmetry:

\begin{definition}[Output-symmetric $\BIC$]\label{def:OutputSymmetricBIC}
A $\BIC$ $(\sY,W)$ is output-symmetric if there exists a bijection $\pi : \sY \to \sY$, such that $\pi^{-1}=\pi$ and $W(\pi (y) |0)=W(y|1)$ for all $y \in \sY$.
\end{definition}
\noindent The phrases ``symmetric $\BIC$" and ``output-symmetric $\BIC$" will be used interchangeably.

\subsection{Structural decomposition of symmetric $\BIC$s}

Let us first recall the following definition:
\begin{definition}[Compound channel]\label{def:CompoundChannel}
Let $(\sY_i,W_i)$, $i \in [m] \deq \{1,\ldots,m\}$, be a collection of $\BIC$s, and let $\lambda = (\lambda_1,\ldots,\lambda_m)$ be a probability distribution on $[m]$. A {\em compound channel} with subchannels $\{W_i\}$ and mixing distribution $\lambda$ is a $\BIC$ $W$ defined by transition probabilities
$$
W(i,y|x) = \lambda_i W_i(y | x), \notag
$$
for all $x \in \{0,1\}$, $i \in [m]$, and $y \in \sY_i$. A compound channel will be denoted as $W = \bigoplus^m_{i=1} \lambda_i W_i$.
\end{definition}

\noindent The following structural result proved in~\cite[Theorem~2.1]{huber_2006} establishes that any symmetric $\BIC$ is a compound channel which consists of elementary subchannels that are $\BSC$s:
\begin{theorem}[Channel decomposition~\cite{huber_2006}]\label{thm:BIOSdecomp} Let the $\oplus$-operator applied to channels in Definition~\ref{def:CompoundChannel} define a compound channel. For any symmetric $\BIC$ $W$, there exist a positive integer $m$, a probability vector $\lambda = (\lambda_1,\ldots,\lambda_m)$, and error parameters $\GAM_1,\ldots,\GAM_m \in [0,1]$, such that
	\begin{align}\label{eq:BIOSdecomp}
	W \equiv  \bigoplus^m_{i=1} \lambda_i \BSC(\GAM_i).
	\end{align}
\end{theorem}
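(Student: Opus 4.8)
The plan is to work entirely at the level of Blackwell measures and to invoke the Blackwell--Sherman--Stein theorem (Theorem~\ref{thm:BSS}) to convert an equivalence of channels into an identity of measures. Concretely, it suffices to produce a probability vector $\lambda$ and parameters $\GAM_1,\dots,\GAM_m \in [0,1]$ such that $\sm_W = \sum_{i=1}^m \lambda_i\,\sm_{\BSC(\GAM_i)}$; part~1 of Theorem~\ref{thm:BSS} then delivers $W \equiv \bigoplus_{i=1}^m \lambda_i \BSC(\GAM_i)$, once we also know that the Blackwell measure of a compound channel is the corresponding mixture of the subchannels' Blackwell measures. Since $\sm_{\BSC(\GAM)} = \tfrac12\delta_{\GAM} + \tfrac12\delta_{\bar\GAM}$, the target identity reduces to writing $\sm_W$ as a convex combination of reflection-symmetric atom pairs $\tfrac12(\delta_p + \delta_{\bar p})$.

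First I would record the compound-channel formula for the Blackwell measure. For $W = \bigoplus_i \lambda_i W_i$, the output $(i,y)$ reveals the active subchannel, so the posterior is $f_W(i,y) = f_{W_i}(y)$ and the uniform-input output marginal factorizes as $\lambda_i$ times the marginal of $W_i$; reading off Definition~\ref{def:BlackwellMeasureForBIC} then gives $\sm_{\bigoplus_i \lambda_i W_i} = \sum_i \lambda_i \sm_{W_i}$. This is a short direct computation.

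Second I would prove the central structural fact: the Blackwell measure of a symmetric BIC is invariant under the reflection $s \mapsto 1-s$. Using the involution $\pi$ of Definition~\ref{def:OutputSymmetricBIC}, the identities $W(\pi(y)|0) = W(y|1)$ and $W(\pi(y)|1) = W(y|0)$ give $f_W(\pi(y)) = 1 - f_W(y)$ and show that the output marginal $P_Y$ is $\pi$-invariant. Substituting $y \mapsto \pi(y)$ in the law of $S = f_W(Y)$ then yields $\cL(1-S) = \cL(S)$, i.e.\ $\sm_W$ is symmetric about $1/2$; in particular this re-derives $\E[S] = 1/2$.

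Finally I would decompose a finitely supported, reflection-symmetric measure. Because $\sY$ is finite, $\sm_W = \sum_p w_p \delta_p$ is a finite sum of atoms with $w_p = w_{\bar p}$ by symmetry. Pairing each atom $p < 1/2$ with its mirror $\bar p$ contributes $2w_p \cdot \tfrac12(\delta_p + \delta_{\bar p})$, while an atom at $1/2$ contributes $w_{1/2}\,\sm_{\BSC(1/2)}$; collecting these yields $\sm_W = \sum_i \lambda_i \sm_{\BSC(\GAM_i)}$ with the $\lambda_i$ summing to one, which finishes the argument. The only step demanding genuine care---hence the main obstacle---is the reflection-symmetry claim, specifically checking that the change of variables $y \mapsto \pi(y)$ preserves the output marginal so that mass is transported correctly; the rest is bookkeeping. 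I would also note that the result admits a purely combinatorial proof, grouping the outputs into the orbits $\{y,\pi(y)\}$ of $\pi$---each two-element orbit restricting to a scaled $\BSC$ and each fixed point to a $\BSC(1/2)$---but the Blackwell-measure route is cleaner given the machinery already in place.
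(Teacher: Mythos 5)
Your argument is correct and complete. Note, however, that the paper does not actually prove Theorem~\ref{thm:BIOSdecomp} itself: it is imported by citation from Land and Huber's work on information combining, where the standard argument is precisely the combinatorial one you sketch in your closing remark --- partition $\sY$ into the orbits $\{y,\pi(y)\}$ of the involution, observe that each two-element orbit carries a scaled $\BSC$ (with crossover probability $W(y|1)/(W(y|0)+W(y|1))$ after conditioning on landing in that orbit) and each fixed point a $\BSC(1/2)$ since $W(y|0)=W(y|1)$ there. Your main route instead lifts everything to Blackwell measures: you establish the mixture formula for compound channels, prove that output symmetry forces $\cL(S)=\cL(1-S)$ via $f_W(\pi(y))=1-f_W(y)$ and the $\pi$-invariance of $P_Y$, decompose the resulting finitely supported reflection-symmetric measure into atom pairs $\tfrac{1}{2}(\delta_p+\delta_{\bar p})$, and close with part~1 of the Blackwell--Sherman--Stein theorem. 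All steps check out (including the needed companion identity $W(\pi(y)|1)=W(y|0)$, which follows from $\pi^{-1}=\pi$), and the weights sum to one as required. What your route buys is coherence with the paper's framework --- the reflection-symmetry of $\sm_W$ is exactly the fact the paper uses later (``if $S\sim\sm_W$ then $1-S\sim\sm_W$''), so you get that observation for free --- at the cost of invoking the Blackwell--Sherman--Stein machinery where the combinatorial proof needs nothing beyond the definition of equivalence via output relabeling/merging. Both are valid; yours is arguably the more natural one given the lens of this paper.
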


\subsection{Blackwell measures of symmetric $\BIC$s}\label{subsec:BlackwellMeasuresSymmetricBICs}

It is not difficult to show that the Blackwell measure of a compound channel is given by the mixture of the Blackwell measures of the constituent subchannels:
$$
\sm_W = \sum^m_{i=1}\lambda_i \sm_{W_i}.
$$
Thus, Theorem~\ref{thm:BIOSdecomp} shows that the Blackwell measure of any symmetric $\BIC$ is a mixture of Blackwell measures of BSCs. In particular, if $W \equiv \bigoplus^m_{i=1} \lambda_i \BSC(p_i)$, then
\begin{align*}
\sm_W &= \sum^m_{i=1} \lambda_i \sm_{\BSC(p_i)} \\
&= \sum^m_{i=1} \left( \frac{\lambda_i}{2}\delta_{p_i} + \frac{\lambda_i}{2}\delta_{\bar{p_i}}\right).
\end{align*}
Thus, any symmetric $\BIC$ $W$ that admits the decomposition \eqref{eq:BIOSdecomp} is  specified, up to Blackwell equivalence, by the set
\begin{align}\label{eq:Blackwell_set}
\cC_W \deq \left\{ (\lambda_i,p_i ) : i \in [m]\right\}.
\end{align}
Moreover, if $S \sim \sm_W$, then $\bar{S} = 1-S \sim \sm_W$ as well.

For two symmetric $\BIC$s $W$ and $W'$, the Blackwell ordering is equivalent to the \textit{symmetric convex ordering} introduced by Alsan \cite[Ch.~6]{alsan_phd14}, according to which $W$ dominates $W'$ if and only if
$$
\E[f(\Delta_W(Y))] \ge \E[f(\Delta_{W'}(Y'))]
$$
for all convex and even functions $f : [-1,1] \to \Reals$, where $(X, Y)$ is a random couple, $P_X = \Bernoulli(1/2)$, $P_{Y|X} = W$, and $P_{Y'|X} = W'$. The function $\Delta_W(\cdot)$ was defined in Eqn.~\eqref{eqn:DdistributionDef}.


\subsection{Examples of properties of symmetric $\BIC$s}

Theorem~\ref{thm:BIOSdecomp} also has implications for the computation of functionals of channels. Indeed, it follows directly from Eqn.~\eqref{eqn:BSCDeltaMeasure} and Eqn.~\eqref{eqn:FunctionalBICEqnDefinition} that, for any $f : [0,1] \to \Reals$,
$$
\I_f(\BSC(\GAM)) = \frac{1}{2}f(\GAM) + \frac{1}{2}f(\bar{\GAM}).
$$
Then, for a symmetric $\BIC$ $W$ with decomposition \eqref{eq:BIOSdecomp},
\begin{align}
\I_f(W) & = \sum^m_{i=1} \lambda_i \I_f(\BSC(\GAM_i)) \label{eqn:InducedFunctionalSymmBIC1} \\
& = \sum^m_{i=1} \frac{\lambda_i}{2} \left(f(\GAM_i) + f(\bar{\GAM}_i)\right). \label{eqn:InducedFunctionalCompForSymmetricBIC}
\end{align}
Consequently, for any symmetric $\BIC$ $(\sY, W)$, an induced functional of the form $\I_f(W)$ given in Table~\ref{tbl:FunctionalsOfBICs}, such as $I(W)$, $Z(W)$, $B_{\lambda}(W)$, $\rho^{2}_{\rm{max}}(W)$, etc., may be computed in terms of the functionals of elementary $\BSC$ subchannels. The channel dispersion $V(W)$ is not an induced functional, but may be derived from the induced functionals $M_2(W)$ and $I(W)$.

\begin{example}[Channel dispersion of $\BSC(\GAM)$ and $\BEC(\varepsilon)$]\label{example:DispersionBSCandBEC} {\em The channel dispersions of the $\BSC(\GAM)$ for $\GAM \notin \{0, \frac{1}{2}, 1\}$, and of the $\BEC(\varepsilon)$ are given by (cf.\cite[Theorem~52 and Theorem~53]{yury_finite_length_2010}):
\begin{align}
V(\BSC(\GAM)) & = \GAM \bar{\GAM} \left( \log_2 \frac{ \bar{\GAM} } { \GAM } \right)^{2}, \label{eqn:DispersionBSC} \\
V(\BEC(\varepsilon)) & = \varepsilon \bar{\varepsilon}. \label{eqn:DispersionBEC}
\end{align}
For $\GAM \in \{0, \frac{1}{2}, 1\}$, $V(\BSC(\GAM))$ approaches the limit of $0$. Eqn.~\eqref{eqn:DispersionBSC} can be derived from Eqn.~\eqref{eqn:InducedFunctionalCompForSymmetricBIC}. The second moment $M_2(\BSC(\GAM))$ is computed by selecting $f(s) = \psi_2(s)$:
\begin{align}
M_2(\BSC(\GAM)) & = \frac{ f(\GAM)}{2} + \frac{ f(\bar{\GAM})}{2} \notag \\
& = \GAM(1 + \log_2 \GAM)^{2} + \bar{\GAM}(1 + \log_2 \bar{\GAM})^{2}. \label{eqn:M2MomentForBSC}
\end{align}
The dispersion parameter is computed as
\begin{IEEEeqnarray}{rRl}
V(\BSC(\GAM)) & \deq & M_2(\BSC(\GAM)) - (I(\BSC(\GAM)))^{2} \notag \\
& = & \GAM \bar{\GAM} \left[ (\log_2 \GAM)^{2} + (\log_2 \bar{\GAM})^{2} - 2 \log_2 \GAM \log_2 \bar{\GAM}\right], \notag
\end{IEEEeqnarray}
which is verified to be equivalent to Eqn.~\eqref{eqn:DispersionBSC}. By observing that $\BEC(\varepsilon) \equiv  \bar{\varepsilon}\, \BSC(0) \oplus \varepsilon\, \BSC(1/2)$, it is straightforward to verify Eqn.~\eqref{eqn:DispersionBEC}.}
\end{example}
\noindent The calculation of a variety of channel parameters, e.g., the channel dispersion, for symmetric $\BIC$s is greatly simplified due to channel decompositions:

\begin{lemma}[Channel dispersion of an arbitrary symmetric $\BIC$]\label{lemma:DispersionSymmBIC}
Consider a symmetric $\BIC$ $(\sY, W)$ with decomposition $W \equiv  \bigoplus^m_{i=1} \lambda_i \BSC(\GAM_i)$. The channel capacity $I(W)$ and channel dispersion $V(W)$ may be written in terms of the capacities and dispersions of the subchannels:
\begin{IEEEeqnarray}{rCll}
I(W) & = & & \sum_{i=1}^{m} \lambda_i I(\BSC(\GAM_i)), \label{eqn:EqnIWSymmBIC} \\
V(W) & = & \Biggl( & \sum_{i=1}^{m} \lambda_i V(\BSC(\GAM_i)) \nonumber \\
& & +\> & \sum_{i=1}^{m} \lambda_i \left( I(\BSC(\GAM_i)) - I(W) \right)^{2} \Biggl). \label{eqn:EqnVWSymmBIC}
\end{IEEEeqnarray}
\end{lemma}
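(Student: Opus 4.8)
The plan is to handle the two identities separately, relying on the fact that both the symmetric capacity and the symmetric dispersion depend on the channel only through its Blackwell measure, and are therefore invariant under the equivalence $\equiv$. Indeed, $I(W) = \I_f(W)$ for $f(s) = 1 - h_2(s)$, while $V(W) = \min_{c \in \Reals} \I_{f_c}(W)$ as established earlier; both are thus induced functionals (or infima thereof) of $\sm_W$. This lets me assume that $W$ is literally the compound channel $\bigoplus_{i=1}^m \lambda_i \BSC(\GAM_i)$. The capacity identity \eqref{eqn:EqnIWSymmBIC} is then immediate: applying the additivity relation \eqref{eqn:InducedFunctionalSymmBIC1} with $f(s) = 1 - h_2(s)$ gives $I(W) = \sum_{i=1}^m \lambda_i I(\BSC(\GAM_i))$ directly.

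For the dispersion identity \eqref{eqn:EqnVWSymmBIC}, the main idea is to recognize the right-hand side as an instance of the \emph{law of total variance}. Write $W_i \deq \BSC(\GAM_i)$, and let $(X,Y)$ be the input-output couple of the compound channel with $X \sim \Bernoulli(1/2)$. The output takes the form $Y = (J, Y')$, where $J \in [m]$ is the random index of the active subchannel, with $\P[J = i] = \lambda_i$ independently of $X$, and $Y'$ is distributed as the output of $W_J$. The crucial algebraic observation is that the mixing weight $\lambda_i$ cancels between the numerator and the marginal in the definition \eqref{eqn:IDensityBIC}, so that on the event $\{J = i\}$,
\begin{align*}
i_W\big(X; (i,Y')\big) = \log_2 \frac{\lambda_i W_i(Y'|X)}{\tfrac{1}{2}\lambda_i W_i(Y'|0) + \tfrac{1}{2}\lambda_i W_i(Y'|1)} = i_{W_i}(X; Y').
\end{align*}
That is, conditioned on the subchannel index, the information density of the compound channel coincides exactly with that of the corresponding $\BSC$.

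With this in hand, I would apply the decomposition ${\rm Var}[Z] = \E[{\rm Var}[Z \mid J]] + {\rm Var}[\E[Z \mid J]]$ to $Z = i_W(X;Y)$. Since $X \sim \Bernoulli(1/2)$ remains valid conditionally on $\{J=i\}$ (as $J \perp X$), the conditional law of $Z$ given $J=i$ is that of $i_{W_i}(X;Y')$; hence $\E[Z \mid J=i] = I(W_i)$ and ${\rm Var}[Z \mid J=i] = V(W_i)$. The first term of the decomposition is therefore $\sum_i \lambda_i V(\BSC(\GAM_i))$, while the second is the variance of the random variable $I(W_J)$, namely $\sum_i \lambda_i (I(\BSC(\GAM_i)) - I(W))^2$, where I have used \eqref{eqn:EqnIWSymmBIC} to identify $\E[I(W_J)] = I(W)$. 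Summing the two contributions yields \eqref{eqn:EqnVWSymmBIC}. The only genuinely delicate point is the conditional identity displayed above: once it is established that $i_W$ restricted to subchannel $i$ is \emph{exactly} $i_{W_i}$, with the $\lambda_i$'s cancelling, the remainder is the textbook conditional-variance formula, and the closed forms of Example~\ref{example:DispersionBSC} may be inserted for $V(\BSC(\GAM_i))$ if an explicit expression is wanted.
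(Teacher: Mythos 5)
Your proof is correct, but it reaches \eqref{eqn:EqnVWSymmBIC} by a more probabilistic route than the paper's. The paper stays entirely at the level of induced functionals: since $M_2(W) = \I_{\psi_2}(W)$ is an induced functional, Eqn.~\eqref{eqn:InducedFunctionalSymmBIC1} immediately gives $M_2(W) = \sum_{i} \lambda_i M_2(\BSC(\GAM_i))$ (the mixture property of the Blackwell measure already encodes the cancellation of the $\lambda_i$'s that you verify by hand on the information density), and then \eqref{eqn:EqnVWSymmBIC} follows by substituting $M_2(\BSC(\GAM_i)) = V(\BSC(\GAM_i)) + (I(\BSC(\GAM_i)))^2$ into $V(W) = M_2(W) - (I(W))^2$ and rearranging. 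Your version instead conditions on the subchannel index $J$ and invokes the law of total variance, which rests on the two observations you correctly supply: $J$ is independent of $X$, and on $\{J=i\}$ the information density of the compound channel coincides exactly with $i_{W_i}$. The two computations are mirror images of one another --- the conditional-variance decomposition \emph{is} the paper's algebraic rearrangement --- but your route makes the probabilistic meaning of the two terms in \eqref{eqn:EqnVWSymmBIC} (average within-subchannel dispersion plus the variance of the subchannel capacities, as the remark after the lemma notes) transparent, at the cost of re-deriving by hand the $\lambda_i$-cancellation that the induced-functional machinery of Sec.~\ref{sec:InducedFunctionalsOfBICs} provides for free. Your handling of the equivalence $\equiv$ (reducing to the literal compound channel because $I$ and $V$ depend only on $\sm_W$) is also sound, and is implicit in the paper's use of induced functionals. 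Both arguments are complete.
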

\begin{proof} Omitted. 
\end{proof}
\begin{remark}{\em Eqn.~\eqref{eqn:EqnIWSymmBIC} is also discussed in~\cite[Eqn.~2.7]{huber_2006}, which cites Gallager's approach for computing the capacity of symmetric channels~\cite[Ch.~4]{gallager68}. In a similar manner, it is straightforward to derive Eqn.~\eqref{eqn:EqnVWSymmBIC} from first principles by exploiting channel decompositions.
}
\end{remark}


\subsection{Mutual information profile}

The mutual information profile ($\MIP$) (see \cite[Ch.~2]{huber_2006} for a detailed presentation) is based on the structural decomposition of symmetric $\BIC$s:

\begin{definition}[Mutual Information Profile]\label{def:MIP} A symmetric $\BIC$ $(\sY, W)$ with structural decomposition $W \equiv  \bigoplus^m_{i=1} \lambda_i \BSC(\GAM_i)$ as in Theorem~\ref{thm:BIOSdecomp} is uniquely characterized by a random variable $\Phi$ that takes values in the unit interval $[0,1]$ according to the probability law
\begin{align}
\sm^\Phi_{W} & \deq \sum_{i=1}^{m} \lambda_i \delta_{I(\BSC(\GAM_i))}. \label{eqn:ProbMeasureOfMIP}
\end{align}
The probability law $\sm^\Phi_W$ is called the {\em mutual information profile (MIP)} of the channel $W$.
\end{definition}
Similar to the Blackwell measure which uniquely specifies an arbitrary $\BIC$ up to Blackwell equivalence, the $\MIP$ uniquely specifies $\BIC$s with the property of output symmetry. In fact, it is easy to see from Eqn~\eqref{eqn:ProbMeasureOfMIP} that the MIP $\sm^\Phi_W$ is simply the probability law of $1-h_2(S)$ when $S \sim \sm_W$.

\section{The polar transform}\label{sec:PolarTransformOfChannels}

The polar transform maps a pair of $\BIC$s to another pair of transformed $\BIC$s via a Boolean {\sc xor} of the binary inputs of the original channels~\cite{arikan09}. The Boolean {\sc xor} creates dependence between the random variables associated to the inputs and outputs of the transformed $\BIC$s.

\subsection{The polar transform}

\begin{definition}[The polar transform~\cite{arikan09}]\label{def:PolarTransform}
The polar transform maps a pair of $\BIC$s $(\sY_1,W_1)$ and $(\sY_2,W_2)$ into another pair of $\BIC$s $(\sY_1 \times \sY_2, W_1 \ATbad W_2)$ and $(\sY_1 \times \sY_2 \times \{0,1\},W_1 \ATgood W_2)$ as follows:
\begin{subequations}\label{eq:polar_transform}
\begin{align}
	&(W_1 \ATbad W_2) (y_1,y_2|x) \nonumber\\
	& \qquad \qquad ~ \deq \frac{1}{2}\sum_{u \in \{0,1\}}W_1(y_1 | u \oplus x) W_2(y_2|u) \\
	&(W_1 \ATgood W_2) (y_1,y_2,u|x) \nonumber\\
	& \qquad \qquad ~ \deq \frac{1}{2} W_1(y_1|u \oplus x) W_2(y_2|x)
\end{align}
\end{subequations}
for all $x, u \in \{0,1\}$ and all $(y_1,y_2) \in \sY_1 \times \sY_2$, where $\oplus$ is the Boolean {\sc xor}.
\end{definition}

\noindent The transformed channel $W_1 \ATbad W_2$ is ``weaker" than both $W_1$ and $W_2$ as will be clarified in subsequent analysis. The transformed channel $W_1 \ATgood W_2$ is improved because it is equivalent to decoding based on two independent noisy versions of the binary input. A parallel broadcast of the binary input is formalized as follows:

\begin{definition}[Product $\BIC$ $W_1 \times W_2$]
Given two $\BIC$s $(\sY_1,W_1)$ and $(\sY_2,W_2)$, we define the product $\BIC$ $(\sY_1 \times \sY_2,W_1 \times W_2)$ by
$$
(W_1 \times W_2)(y_1,y_2|x) \deq W_1(y_1|x) W_2(y_2|x)
$$
for all $x \in\{0,1\}$ and all $(y_1,y_2) \in \sY_1 \times \sY_2$. In other words, $W_1 \times W_2$ is the parallel broadcast channel formed by $W_1$ and $W_2$ \cite{ziv_shamai_2005}.
\end{definition}

\subsection{Blackwell measures of one-step polarized $\BIC$s}

Consider two Blackwell measures $\sm_1,\sm_2  \in \cM$. The following operations $\ATbad$ and $\ATgood$ on a pair of Blackwell measures yield two additional probability measures $\sm_1 \ATbad \sm_2$ and $\sm_1 \ATgood \sm_2$ on $[0,1]$.

\begin{definition}\label{def:ATbadATgoodProbMeasures}
Let $\sm_1,\sm_2  \in \cM$ where $\cM$ is the space of probability measures as defined in Theorem~\ref{thm:BSS}. Let $S_1 \sim \sm_1$ and $S_2 \sim \sm_2$ be two independent random variables. The probability measures $\sm_1 \ATbad \sm_2$ and $\sm_1 \ATgood \sm_2$ are defined as follows: For any continuous bounded $f : [0,1] \to \Reals$, let
\begin{align}\label{eq:bad_BM}
	\int_{[0,1]}f\d(\sm_1 \ATbad \sm_2) = \E[f(1-S_1 \star S_2)]
\end{align}
and
\begin{align}\label{eq:good_BM}
	\int_{[0,1]}f\d(\sm_1 \ATgood \sm_2) &= \E\Bigg[(1-S_1 \star S_2)f\left(\frac{S_1S_2}{1-S_1 \star S_2}\right) \nonumber\\
	& \qquad + (S_1 \star S_2) f\left(\frac{\bar{S}_1 S_2}{S_1 \star S_2}\right) \Bigg].
\end{align}
\end{definition}

\begin{lemma} The probability measures $\sm_1 \ATbad \sm_2$ and $\sm_1 \ATgood \sm_2$ are also Blackwell measures.
\end{lemma}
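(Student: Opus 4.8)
The plan is to apply the Blackwell--Sherman--Stein theorem (Theorem~\ref{thm:BSS}) in the reverse direction: given $\sm_1,\sm_2 \in \cM$, realize them as Blackwell measures of genuine BICs, push those channels through the polar transform, and verify that the defining identities \eqref{eq:bad_BM} and \eqref{eq:good_BM} compute nothing other than the Blackwell measures of the two transformed channels. Concretely, I would pick BICs $(\sY_1,W_1)$ and $(\sY_2,W_2)$ with $\sm_{W_1} = \sm_1$ and $\sm_{W_2} = \sm_2$, and form $W_1 \ATbad W_2$ and $W_1 \ATgood W_2$ as in Definition~\ref{def:PolarTransform}. These are honest BICs, so their Blackwell measures automatically belong to $\cM$: they are laws of posterior probabilities, hence supported on $[0,1]$ with mean $1/2$. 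Thus it suffices to show $\sm_1 \ATbad \sm_2 = \sm_{W_1 \ATbad W_2}$ and $\sm_1 \ATgood \sm_2 = \sm_{W_1 \ATgood W_2}$; the membership claims then follow for free, with no separate verification of positivity, total mass, or the mean value.

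The first computational step is to evaluate the posterior function $f_W$ of each transformed channel. Writing $a_i \deq W_i(y_i|0) + W_i(y_i|1)$ and $s_i \deq f_{W_i}(y_i)$, so that $W_i(y_i|0) = a_i s_i$ and $W_i(y_i|1) = a_i \bar{s}_i$, a direct substitution into \eqref{eq:polar_transform} gives, for the worse channel, $f_{W_1 \ATbad W_2}(y_1,y_2) = s_1 s_2 + \bar{s}_1\bar{s}_2 = 1 - s_1 \star s_2$; and for the better channel, indexed by the auxiliary bit $u$, $f_{W_1 \ATgood W_2}(y_1,y_2,0) = \frac{s_1 s_2}{1-s_1\star s_2}$ and $f_{W_1 \ATgood W_2}(y_1,y_2,1) = \frac{\bar{s}_1 s_2}{s_1 \star s_2}$. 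These are exactly the arguments appearing inside $f$ in \eqref{eq:bad_BM} and \eqref{eq:good_BM}.

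The second, and key, step is the change of measure. Under the uniform input law the output marginal of a BIC $W$ is $P_Y(y) = \tfrac12(W(y|0)+W(y|1))$, and expectations of functions of $S = f_W(Y)$ are taken against this marginal. Summing the two rows of $W_1 \ATbad W_2$ shows that its output marginal factorizes as $P_{Y_1}(y_1)P_{Y_2}(y_2)$, which is precisely the statement that, under the uniform input, $S_1 = f_{W_1}(Y_1)$ and $S_2 = f_{W_2}(Y_2)$ are \emph{independent} with laws $\sm_1$ and $\sm_2$; substituting the expression for $f_{W_1 \ATbad W_2}$ then reproduces \eqref{eq:bad_BM} verbatim. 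For $W_1 \ATgood W_2$ the same summation gives an output marginal $P_{Y_1}(y_1)P_{Y_2}(y_2)(1-s_1\star s_2)$ for $u=0$ and $P_{Y_1}(y_1)P_{Y_2}(y_2)(s_1\star s_2)$ for $u=1$, and pushing $f_{W_1 \ATgood W_2}$ forward through this marginal yields \eqref{eq:good_BM}. The one point requiring care---the main obstacle---is the bookkeeping of this factorization together with the degenerate branches where a denominator $1 - s_1 \star s_2$ or $s_1 \star s_2$ vanishes; in each such case the corresponding output value carries zero probability, so the undefined ratio is irrelevant and may be assigned an arbitrary value in $[0,1]$ without affecting either integral.
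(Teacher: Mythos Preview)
Your approach is correct but far more elaborate than the paper's. The paper's proof is a two-line direct computation: set $f(s)=s$ in \eqref{eq:bad_BM} and \eqref{eq:good_BM} and use independence together with $\E[S_1]=\E[S_2]=\tfrac12$ to check that each integral equals $\tfrac12$; that these are probability measures on $[0,1]$ is immediate from the defining formulas (the first is the law of $1-S_1\star S_2$, the second is a convex mixture of laws of $[0,1]$-valued quantities). By contrast, you realize $\sm_1,\sm_2$ via channels, push through the polar transform, and identify $\sm_1 \ATbad \sm_2$ and $\sm_1 \ATgood \sm_2$ with $\sm_{W_1 \ATbad W_2}$ and $\sm_{W_1 \ATgood W_2}$ --- which is exactly the content of Theorem~\ref{thm:BM_polarization}, stated and proved immediately \emph{after} this lemma. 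So you are essentially proving the next theorem in order to deduce the lemma. This is valid, and has the merit that once you have done it the lemma comes for free; but for the lemma alone it is substantial overkill compared to the one-line substitution $f(s)=s$. One minor caveat: your realization step via Theorem~\ref{thm:BSS} produces a BIC with finite output alphabet only when $\sm_i$ has finite support (see the footnote there), so for general $\sm_1,\sm_2\in\cM$ you would need to allow infinite output alphabets or pass through an approximation argument; the paper's direct computation sidesteps this entirely.
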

\begin{proof} Setting $f(s) = s$ in Eqs.~\eqref{eq:bad_BM} and \eqref{eq:good_BM}, and recalling that $S_1$ and $S_2$ are independent and both have mean $\frac{1}{2}$, we get $\int_{[0,1]} s(\sm_1 \ATbad \sm_2)(\d s) = \E[1-S_1 \star S_2] = \frac{1}{2}$. Similarly, $\int_{[0,1]} s(\sm_1 \ATgood \sm_2)(\d s) = \E[S_2] = \frac{1}{2}$. Thus, both $\sm_1 \ATbad \sm_2$ and $\sm_1 \ATgood \sm_2$ are in $\cM$.
\end{proof}

\noindent Regarding the following theorem, one can also refer to density evolution for polar codes~\cite{polar_code_density_evolution_2009}. The Blackwell measures of one-step polarized channels $W_1 \ATbad W_2$ and $W_1 \ATgood W_2$ can be written in terms of the Blackwell measures of $W_1$ and $W_2$:
\begin{theorem}[Evolution of Blackwell measures under \Arikan's polar transform]\label{thm:BM_polarization} The Blackwell measures of the one-step polarized $\BIC$s $W_1 \ATbad W_2$ and $W_1 \ATgood W_2$ introduced in Definition~\ref{def:PolarTransform} are given by
\begin{align}
\sm_{W_1 \ATbad W_2}  & = \sm_{W_1} \ATbad \sm_{W_2}, \notag \\
\sm_{W_1 \ATgood W_2} & = \sm_{W_1} \ATgood \sm_{W_2}, \notag
\end{align}
where the operations $\ATbad$ and $\ATgood$ on Blackwell measures were defined in Definition~\ref{def:ATbadATgoodProbMeasures}.
\end{theorem}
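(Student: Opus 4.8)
The plan is to realize both polarized channels inside a single joint probability space coming from Arıkan's construction, and then to read off the two posterior distributions directly. Concretely, I would introduce independent uniform bits $U_1, U_2 \sim \Bernoulli(1/2)$, set $X_1 \deq U_1 \oplus U_2$ and $X_2 \deq U_2$, and let $Y_1 \sim W_1(\cdot|X_1)$ and $Y_2 \sim W_2(\cdot|X_2)$ be the channel outputs, conditionally independent given $(X_1,X_2)$. Comparing with Def.~\ref{def:PolarTransform}, the channel $W_1 \ATbad W_2$ is exactly the one with input $U_1$ and output $(Y_1,Y_2)$, while $W_1 \ATgood W_2$ is the one with input $U_2$ and output $(Y_1,Y_2,U_1)$. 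By Def.~\ref{def:BlackwellMeasureForBIC}, $\sm_{W_1 \ATbad W_2}$ is then the law of the posterior $\mathbf{P}[U_1 = 0 \mid Y_1, Y_2]$ and $\sm_{W_1 \ATgood W_2}$ is the law of $\mathbf{P}[U_2 = 0 \mid Y_1, Y_2, U_1]$.

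The key structural fact to record first is that the pairs $(X_1,Y_1)$ and $(X_2,Y_2)$ are independent, with each $X_i$ marginally uniform. Consequently the posteriors $S_i \deq f_{W_i}(Y_i) = \mathbf{P}[X_i = 0 \mid Y_i]$ are independent with $S_i \sim \sm_{W_i}$, and given $(Y_1,Y_2)$ the bits $X_1, X_2$ remain conditionally independent with $\mathbf{P}[X_i = 0 \mid Y_1, Y_2] = S_i$. This is precisely what lets me rewrite everything in terms of the independent variables $S_1, S_2$ appearing in Def.~\ref{def:ATbadATgoodProbMeasures}.

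For the $\ATbad$ channel I would use $U_1 = X_1 \oplus X_2$ together with conditional independence to obtain
\[
\mathbf{P}[U_1 = 0 \mid Y_1, Y_2] = \mathbf{P}[X_1 = X_2 \mid Y_1, Y_2] = S_1 S_2 + \bar{S}_1 \bar{S}_2 = 1 - S_1 \star S_2,
\]
which immediately yields $\int f \, \d\sm_{W_1 \ATbad W_2} = \E[f(1 - S_1 \star S_2)]$, matching \eqref{eq:bad_BM}. For the $\ATgood$ channel I would compute the posterior of $U_2 = X_2$ given $(Y_1, Y_2)$ and the side bit $U_1$: on the event $U_1 = 0$ it equals $S_1 S_2 / (1 - S_1 \star S_2)$, and on $U_1 = 1$ it equals $\bar{S}_1 S_2 / (S_1 \star S_2)$. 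Averaging $f$ of the posterior over $U_1$ with the conditional weights $\mathbf{P}[U_1 = 0 \mid Y_1, Y_2] = 1 - S_1 \star S_2$ and $\mathbf{P}[U_1 = 1 \mid Y_1, Y_2] = S_1 \star S_2$ reproduces exactly the right-hand side of \eqref{eq:good_BM}. Since a Borel law on $[0,1]$ is determined by its integrals against bounded continuous $f$, these two integral identities establish equality of the measures.

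The hard part will be the $\ATgood$ case: unlike $\ATbad$, its output contains the side information $U_1$, so I must first condition on $(Y_1,Y_2)$, evaluate the posterior separately on each value of $U_1$, and only then average with the correct conditional weights, rather than conditioning on $U_1$ prematurely. A secondary technical point is the behaviour of the ratios when a denominator $1 - S_1 \star S_2$ or $S_1 \star S_2$ vanishes; these correspond to conditioning events of probability zero, so the vanishing weight multiplies the (otherwise undefined) ratio to zero and the formulas remain valid almost surely, requiring no separate argument.
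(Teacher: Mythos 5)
Your argument is correct and is essentially the paper's own proof: both reduce the claim to identifying the posterior of the relevant input bit as a function of the independent posteriors $S_1 = f_{W_1}(Y_1)$ and $S_2 = f_{W_2}(Y_2)$, obtaining $1-S_1\star S_2$ for $W_1 \ATbad W_2$ and the two-branch expression for $W_1 \ATgood W_2$. The only organizational difference is in the $\ATgood$ case, where the paper conditions on $U_1$ first --- decomposing $W_1 \ATgood W_2$ as the compound channel $\frac{1}{2}(W_1\times W_2)\oplus\frac{1}{2}(\bar{W}_1\times W_2)$ and reweighting the Blackwell measures of the two product subchannels --- whereas you condition on $(Y_1,Y_2)$ first and then average over $U_1$ with the posterior weights $1-S_1\star S_2$ and $S_1\star S_2$; the two computations are algebraically identical, and your explicit treatment of the vanishing-denominator events is a small point the paper leaves implicit.
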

\begin{proof} We first establish the formula for $W_1 \ATbad W_2$. Let $(X_i,Y_i)$, for $i \in \{1,2\}$, where $(X_1,Y_1)$ and $(X_2,Y_2)$ are independent, $P_{X_1} = P_{X_2} = \Bernoulli(1/2)$, and $P_{Y_i|X_i} = W_i$. Then, recalling the definition of $\BlkFn_{W}$ in \eqref{eqn:BlackwellMeasureBIC}, and using an abbreviated notation $\bar{\BlkFn}_{W} = 1 - \BlkFn_{W}$, we can write
\begin{IEEEeqnarray*}{lCl}
\IEEEeqnarraymulticol{3}{l} { (W_1 \ATbad W_2)(y_1,y_2|0)  } \\
 & = & \frac{1}{2} \left(W_1(y_1|0)W_2(y_2|0)  + W_1(y_1|1)W_2(y_2|1)\right) \\
 & = & 2P_{Y_1}(y_1)P_{Y_2}(y_2) \left( \BlkFn_{W_1}(y_1)\BlkFn_{W_2}(y_2) + \bar{\BlkFn}_{W_1}(y_1)\bar{\BlkFn}_{W_2}(y_2)\right) \\
 & = & 2P_{Y_1}(y_1)P_{Y_2}(y_2) \cdot \left(1-\BlkFn_{W_1}(y_1) \star \BlkFn_{W_2}(y_2)\right)
\end{IEEEeqnarray*}
and
\begin{IEEEeqnarray*}{lCl}
\IEEEeqnarraymulticol{3}{l} { (W_1 \ATbad W_2)(y_1,y_2|1) }  \\
 & = & \frac{1}{2} \left(W_1(y_1|1)W_2(y_2|0)  + W_1(y_1|0)W_2(y_2|1)\right) \\
 & = & 2P_{Y_1}(y_1)P_{Y_2}(y_2) \left( \bar{\BlkFn}_{W_1}(y_1)\BlkFn_{W_2}(y_2) + \BlkFn_{W_1}(y_1)\bar{\BlkFn}_{W_2}(y_2)\right) \\
 & = & 2P_{Y_1}(y_1)P_{Y_2}(y_2) \cdot \left(\BlkFn_{W_1}(y_1) \star \BlkFn_{W_2}(y_2)\right).
\end{IEEEeqnarray*}
Thus, combining the above two equations,
\begin{IEEEeqnarray*}{rCl}
(W_1 \ATbad W_2)(y_1,y_2|0) + (W_1 \ATbad W_2)(y_1,y_2|1) & & \nonumber \\
\quad \quad \quad = 2P_{Y_1}(y_1)P_{Y_2}(y_2). & & \nonumber
\end{IEEEeqnarray*}
This yields
\begin{IEEEeqnarray*}{rCl}
\IEEEeqnarraymulticol{3}{l} {  \BlkFn_{W_1 \ATbad W_2}(y_1,y_2)  } \\
& \quad = & \frac{(W_1 \ATbad W_2)(y_1,y_2|0)}{(W_1 \ATbad W_2)(y_1,y_2|0) + (W_1 \ATbad W_2)(y_1,y_2|1)} \\
& \quad = & 1 - \BlkFn_{W_1}(y_1) \star \BlkFn_{W_2}(y_2).
\end{IEEEeqnarray*}
This shows that $S = \BlkFn_{W_1 \ATbad W_2}(Y_1,Y_2) = 1 - S_1 \star S_2$, where
 $S_1 = \BlkFn_{W_1}(Y_1)$ and $S_2 = \BlkFn_{W_2}(Y_2)$ are independent. Thus, for any continuous $f : [0,1] \to \Reals$,
\begin{align}
\int_{[0,1]} f \d\sm_{W_1 \ATbad W_2}(\d s) &= \E[f(S)] \\
&= \E[f(1-S_1 \star S_2)] \\
&= \int_{[0,1]} f \d(\sm_{W_1} \ATbad \sm_{W_2}).
\end{align}
We turn to $W_1 \ATgood W_2$. Let the $\oplus$-operator applied to channels in Definition~\ref{def:CompoundChannel} define a compound channel. From the definition of the polar transform given in Eqn.~\eqref{eq:polar_transform}, it follows that
\begin{align}\label{eq:good_decomp}
	W_1 \ATgood W_2 \equiv \frac{1}{2}W^{(0)} \oplus \frac{1}{2}W^{(1)}
\end{align}
with $W^{(0)} \deq W_1 \times W_2$ and $W^{(1)} \deq \bar{W}_1 \times W_2$, where $\bar{W}_1$ is the $\BIC$ related to $W_1$ via $\bar{W}_1(\cdot|x) = W_1(\cdot|\bar{x})$. Then the random variables $S^{(0)} \sim \sm_{W^{(0)}}$ and $S^{(1)} \sim \sm_{W^{(1)}}$ evidently satisfy
$$
\E[f(S^{(0)})] = 2\,\E\left[(1-S_1 \star S_2)f\left(\frac{S_1 S_2}{1-S_1 \star S_2}\right)\right]
$$
and
$$
\E[f(S^{(1)})] = 2\,\E\left[(S_1 \star S_2)f\left(\frac{\bar{S}_1 S_2}{S_1 \star S_2}\right)\right]
$$
for every continuous $f : [0,1] \to \Reals$. Combining this result with \eqref{eq:good_decomp} yields:

\begin{align*}
\int_{[0,1]} f\d\sm_{W_1 \ATgood W_2} &= \frac{1}{2}\E[f(S^{(0)})] + \frac{1}{2}\E[f(S^{(1)})] \\
	&= \E\Bigg[(1-S_1 \star S_2)f\left(\frac{S_1 S_2}{1-S_1 \star S_2}\right)\\
	& \qquad \qquad +(S_1 \star S_2)f\left(\frac{\bar{S}_1 S_2}{S_1 \star S_2}\right)\Bigg]\\
	&= \int_{[0,1]} f\d(\sm_{W_1} \ATgood \sm_{W_2}).
\end{align*}
Since $f$ is arbitrary, we obtain the formula for $\sm_{W_1 \ATgood W_2}$.
\end{proof}

\begin{figure*}[t]
\begin{center}


\scalebox{1} 
{

\begin{pspicture}(0,-6.0)(11.0,3.5)

\definecolor{color384}{rgb}{0.8,0.8,0.8}

%
%
\rput(-1.0, -0.5){
\rput(-1.5,-0.5){
\psline[linewidth=0.06cm,linecolor=color384](0.2,0.68)(0.2,3.7)
\psline[linewidth=0.06cm,linecolor=color384](0.02,1.2)(3.2,1.2)
\usefont{T1}{ptm}{m}{n}
\rput(2.2,0.938){$1$}
\psline[linewidth=0.06cm,arrowsize=0.15cm,arrowlength=1.2,arrowinset=0.01]{C->}(2.2,1.2)(2.2,2.80)
\psline[linewidth=0.06cm,arrowsize=0.15cm,arrowlength=1.2,arrowinset=0.01]{C->}(0.2,1.2)(0.2,1.6)
\usefont{T1}{ptm}{m}{n}
\rput(2.2,3.0){$1-\epsilon$}
\usefont{T1}{ptm}{m}{n}
\rput(0.4,1.8){$\epsilon$}
\rput(1.2, 0.25){$\sm_{\Phi, \BEC(\epsilon)}$}
}

\rput(2.5,-0.5){
\psline[linewidth=0.06cm,linecolor=color384](0.2,0.68)(0.2,3.7)
\psline[linewidth=0.06cm,linecolor=color384](0.02,1.2)(3.2,1.2)
\usefont{T1}{ptm}{m}{n}
\rput(2.2,0.938){$1$}
\psline[linewidth=0.06cm,arrowsize=0.15cm,arrowlength=1.2,arrowinset=0.01]{C->}(2.2,1.2)(2.2,2.6)
\psline[linewidth=0.06cm,arrowsize=0.15cm,arrowlength=1.2,arrowinset=0.01]{C->}(0.2,1.2)(0.2,1.8)
\usefont{T1}{ptm}{m}{n}
\rput(2.2,2.8){$1-\tau$}
\usefont{T1}{ptm}{m}{n}
\rput(0.4,2.0){$\tau$}
\rput(1.2, 0.25){$\sm_{\Phi, \BEC(\tau)}$}
}

\rput(6.5,-0.5){
\psline[linewidth=0.06cm,linecolor=color384](0.2,0.68)(0.2,3.7)
\psline[linewidth=0.06cm,linecolor=color384](0.02,1.2)(3.2,1.2)
\usefont{T1}{ptm}{m}{n}
\rput(2.2,0.938){$1$}
\psline[linewidth=0.06cm,arrowsize=0.15cm,arrowlength=1.2,arrowinset=0.01]{C->}(2.2,1.2)(2.2,2.32)
\psline[linewidth=0.06cm,arrowsize=0.15cm,arrowlength=1.2,arrowinset=0.01]{C->}(0.2,1.2)(0.2,2.08)
\usefont{T1}{ptm}{m}{n}
\rput(2.2,2.52){$\bar{\epsilon}\bar{\tau}$}
\usefont{T1}{ptm}{m}{n}
\rput(0.75,2.28){$1-\bar{\epsilon}\bar{\tau}$}
\rput(1.75, 0.25){$\sm_{\Phi, \BEC(\epsilon) \ATbad \BEC(\tau)}$}
}

\rput(10.5,-0.5){
\psline[linewidth=0.06cm,linecolor=color384](0.2,0.68)(0.2,3.7)
\psline[linewidth=0.06cm,linecolor=color384](0.02,1.2)(3.2,1.2)
\usefont{T1}{ptm}{m}{n}
\rput(2.2,0.938){$1$}
\psline[linewidth=0.06cm,arrowsize=0.15cm,arrowlength=1.2,arrowinset=0.01]{C->}(2.2,1.2)(2.2,3.08)
\psline[linewidth=0.06cm,arrowsize=0.15cm,arrowlength=1.2,arrowinset=0.01]{C->}(0.2,1.2)(0.2,1.45)
\usefont{T1}{ptm}{m}{n}
\rput(2.2,3.28){$1-\epsilon \tau$}
\usefont{T1}{ptm}{m}{n}
\rput(0.47,1.6){$\epsilon \tau$}
\rput(1.75, 0.25){$\sm_{\Phi, \BEC(\epsilon) \ATgood \BEC(\tau)}$}
}
}

%
%
\rput(-1.0, -5.5){
\rput(-1.5,-0.5){
\psline[linewidth=0.06cm,linecolor=color384](0.2,0.68)(0.2,3.7)
\psline[linewidth=0.06cm,linecolor=color384](0.02,1.2)(3.2,1.2)
\usefont{T1}{ptm}{m}{n}
\rput(1.262, 0.938){$~_{I(\BSC(\GAM))}$}
\psline[linewidth=0.06cm,arrowsize=0.15cm,arrowlength=1.2,arrowinset=0.01]{C->}(1.262,1.2)(1.262,3.2)
\usefont{T1}{ptm}{m}{n}
\rput(1.262,3.4){$1$}
\rput(1.2, 0.25){$\sm_{\Phi, \BSC(\GAM)}$}
}

\rput(2.5,-0.5){
\psline[linewidth=0.06cm,linecolor=color384](0.2,0.68)(0.2,3.7)
\psline[linewidth=0.06cm,linecolor=color384](0.02,1.2)(3.2,1.2)
\usefont{T1}{ptm}{m}{n}
\rput(0.9803, 0.938){$~_{I(\BSC(\NU))}$}
\psline[linewidth=0.06cm,arrowsize=0.15cm,arrowlength=1.2,arrowinset=0.01]{C->}(0.9803,1.2)(0.9803,3.2)
\usefont{T1}{ptm}{m}{n}
\rput(0.9803,3.4){$1$}
\rput(1.2, 0.25){$\sm_{\Phi, \BSC(\NU)}$}
}

\rput(6.5,-0.5){
\psline[linewidth=0.06cm,linecolor=color384](0.2,0.68)(0.2,3.7)
\psline[linewidth=0.06cm,linecolor=color384](0.02,1.2)(3.2,1.2)
\usefont{T1}{ptm}{m}{n}
\rput(1.0497, 0.938){$~_{I(\BSC(\GAM \star \NU))}$}
\psline[linewidth=0.06cm,arrowsize=0.15cm,arrowlength=1.2,arrowinset=0.01]{C->}(0.6797,1.2)(0.6797,3.2)
\usefont{T1}{ptm}{m}{n}
\rput(0.6797,3.4){$1$}
\rput(1.75, 0.25){$\sm_{\Phi, \BSC(\GAM) \ATbad \BSC(\NU)}$}
}

\rput(10.5,-0.5){
\psline[linewidth=0.06cm,linecolor=color384](0.2,0.68)(0.2,3.7)
\psline[linewidth=0.06cm,linecolor=color384](0.02,1.2)(3.2,1.2)
\usefont{T1}{ptm}{m}{n}
\rput(0.872,0.938){$~_{I(\BSC(\beta))}$}
\rput(1.9258,0.638){$_{I(\BSC(\alpha))}$}
\psline[linewidth=0.06cm,arrowsize=0.15cm,arrowlength=1.2,arrowinset=0.01]{C->}(1.9258,1.2)(1.9258,2.86)
\psline[linewidth=0.06cm,arrowsize=0.15cm,arrowlength=1.2,arrowinset=0.01]{C->}(0.4252,1.2)(0.4252,1.54)
\usefont{T1}{ptm}{m}{n}
\rput(1.9258,3.06){$1-\GAM \star \NU$}
\usefont{T1}{ptm}{m}{n}
\rput(0.712,1.74){$\GAM \star \NU$}
\rput(1.75, 0.25){$\sm_{\Phi, \BSC(\GAM) \ATgood \BSC(\NU)}$}
}
}

\end{pspicture}
}

\end{center}\small
\caption{The mutual information profiles of one-step polarized channels as a weighted sum of Dirac measures (see Definition~\ref{def:MIP}). The vertical axis depicts weights in the interval $[0,1]$ assigned to the Dirac measures. The horizontal axis depicts the location in the interval $[0,1]$ of the Dirac measures in the profile.} \label{fig:PolarizationOneStep_BEC_BSC}
\end{figure*}

\subsection{Blackwell measures of one-step polarized BSCs}

Although Theorem~\ref{thm:BM_polarization} fully characterizes the Blackwell measure of one-step transformed $\BIC$s, further specialization is possible if the original $\BIC$s have the property of symmetry. Symmetric $\BIC$s are composed of elementary subchannels (i.e., $\BSC$s) as established in Theorem~\ref{thm:BIOSdecomp}. Thus, the essential aspect of polarization for symmetric $\BIC$s is the interaction of subchannels that are $\BSC$s which may have \emph{different} probabilities of error. The following lemmas provide complementary descriptions of the effect of the polar transform on a pair of $\BSC$s:

\begin{lemma}[One-step polarization of $\BSC$s --- transition matrices]\label{lemma:TransitionMatBSC}
Let $(\GAM, \NU) \in [0,1] \times [0,1]$. Consider the channels $\BSC(\GAM)$ and $\BSC(\NU)$. As defined in Definition~\ref{def:PolarTransform}, let $\BSC(\GAM) \ATbad \BSC(\NU)$ and $\BSC(\GAM) \ATgood \BSC(\NU)$ represent the transformed channels. The corresponding transition matrices have the following structure:
\begin{align}
\tilde{T}_{\BSC(\GAM) \ATbad \BSC(\NU)} & = T_{\BSC(\GAM \star \NU)} \notag \\
& = \left[ \begin{array}{cc} 1-\GAM \star \NU & \GAM \star \NU \\ \GAM \star \NU & 1 - \GAM \star \NU \end{array} \right], \label{eqn:TransitionMatATbad} \\
\tilde{T}_{\BSC(\GAM) \ATgood \BSC(\NU)} & = T_{\BSC(\GAM) \times \BSC(\NU)} \notag \\
& = \left[ \! \! \begin{array}{cccc} ~\bar{\GAM} \bar{\NU} & ~\GAM \NU & ~\GAM \bar{\NU} & ~\bar{\GAM} \NU \\ ~\GAM \NU & ~\bar{\GAM}\bar{\NU} & ~\bar{\GAM} \NU & ~\GAM \bar{\NU} \end{array}\right], \label{eqn:TransitionMatATgood}
\end{align}
where $\tilde{T}_{W}$ represents the transition matrix $T_{W}$ with a reduced number of columns due to aggregating (i.e., grouping) output symbols of $\BIC$ $W$.
\end{lemma}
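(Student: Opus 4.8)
The plan is to prove both identities by direct substitution of the $\BSC$ transition probabilities into the polar-transform formulas of Definition~\ref{def:PolarTransform}, followed by the column-aggregation step encoded in the notation $\tilde T_W$. Throughout I will use that $\BSC(\GAM)$ assigns $W(y|x) = \bar\GAM$ when $y = x$ and $W(y|x) = \GAM$ when $y \neq x$, together with the elementary identities $\GAM \star \NU = \GAM\bar\NU + \bar\GAM\NU$ and $1 - \GAM \star \NU = \GAM\NU + \bar\GAM\bar\NU$.

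For the $\ATbad$ channel, I would evaluate $(\BSC(\GAM) \ATbad \BSC(\NU))(y_1,y_2|x) = \tfrac12\sum_u W_1(y_1|u\oplus x)W_2(y_2|u)$ at all four outputs $(y_1,y_2) \in \{0,1\}^2$ and both inputs $x \in \{0,1\}$. The two terms in the sum combine, via the identities above, so that every entry equals either $\tfrac12(1-\GAM\star\NU)$ or $\tfrac12(\GAM\star\NU)$; concretely, the agreeing outputs $(0,0)$ and $(1,1)$ carry the across-row pattern $(\tfrac12(1-\GAM\star\NU),\,\tfrac12(\GAM\star\NU))$, while the disagreeing outputs $(0,1)$ and $(1,0)$ carry the transposed pattern. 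Aggregating the two columns within each of these two groups yields exactly the $2\times2$ matrix $T_{\BSC(\GAM\star\NU)}$, establishing \eqref{eqn:TransitionMatATbad}.

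For the $\ATgood$ channel I would avoid enumerating all eight outputs $(y_1,y_2,u)$ directly and instead invoke the decomposition \eqref{eq:good_decomp}, namely $W_1 \ATgood W_2 \equiv \tfrac12 W^{(0)} \oplus \tfrac12 W^{(1)}$ with $W^{(0)} = W_1 \times W_2$ and $W^{(1)} = \bar W_1 \times W_2$, where the helper bit $u$ plays the role of the compound-channel index. The key structural observation is that for a $\BSC$ the input-relabeled channel satisfies $\bar W_1(\bar y_1|x) = W_1(y_1|x)$, so the output bijection $(y_1,y_2) \mapsto (\bar y_1, y_2)$ carries $W^{(1)}$ onto $W^{(0)}$ column-for-column. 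Consequently the output $(\bar y_1, y_2, 1)$ of the good channel has the same pair of transition probabilities as $(y_1, y_2, 0)$, and merging these two symbols (the prescribed aggregation) collapses the eight columns down to the four columns of $W_1 \times W_2$; after the two $\tfrac12$ contributions add, one obtains exactly $T_{\BSC(\GAM)\times\BSC(\NU)}$, which is \eqref{eqn:TransitionMatATgood}.

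The computation is entirely routine, so the only real care is in the aggregation bookkeeping, which is where I expect the main (if mild) obstacle to lie: one must verify that the symbols being grouped genuinely have proportional---in fact identical---columns, and that the resulting four columns coincide with the claimed matrix up to an immaterial reordering of the output labels. For the good channel this reduces to the single symmetry identity $\bar W_1 \equiv W_1$ for $\BSC$s (equivalently, that $\sm_{\BSC(\GAM)} = \tfrac12\delta_\GAM + \tfrac12\delta_{\bar\GAM}$ is invariant under $\GAM \leftrightarrow \bar\GAM$), after which the two branches indexed by $u$ become genuinely mergeable; the remainder is substitution.
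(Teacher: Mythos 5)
Your proposal is correct, and for the $\ATbad$ half it is exactly the paper's argument: substitute the $\BSC$ probabilities into the polar-transform formula, observe that the four outputs split into the agreeing pair $\{(0,0),(1,1)\}$ and the disagreeing pair $\{(0,1),(1,0)\}$ with identical columns $\bigl(\tfrac12(1-\GAM\star\NU),\tfrac12(\GAM\star\NU)\bigr)$ and its transpose, and aggregate. For the $\ATgood$ half you take a mildly different and arguably cleaner route. The paper's appendix simply enumerates all eight conditional probabilities $(W_1\ATgood W_2)(y_1,y_2,u|x)$ for both inputs and then exhibits the four two-element groups $\{(0,0,0),(1,0,1)\}$, $\{(0,1,1),(1,1,0)\}$, $\{(0,0,1),(1,0,0)\}$, $\{(0,1,0),(1,1,1)\}$ by inspection. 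You instead invoke the compound decomposition $W_1 \ATgood W_2 \equiv \tfrac12(W_1\times W_2)\oplus\tfrac12(\bar W_1\times W_2)$ (which the paper only uses later, in the proof of Theorem~\ref{thm:BM_polarization}) together with the single identity $\bar W_1(\bar y_1|x)=W_1(y_1|x)$ valid for $\BSC$s, so that the bijection $(y_1,y_2,1)\mapsto(\bar y_1,y_2,0)$ does the merging for you; one checks that this reproduces precisely the paper's four groups. What your version buys is that the only $\BSC$-specific fact is the input-flip symmetry $\bar W_1 \equiv W_1$, which makes transparent \emph{why} the eight columns pair off and why the aggregated matrix is exactly $T_{\BSC(\GAM)\times\BSC(\NU)}$; what the paper's brute-force version buys is the explicit list of conditional probabilities, which it then reuses verbatim in the proof of Lemma~\ref{lemma:StructuralPolarizedBSC}. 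Both are complete; no gap.
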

\begin{proof} Provided in Appendix~\ref{appendix:TransitionMatPolarizedBSC}.
\end{proof}
\begin{lemma}[One-step polarization of $\BSC$s --- structural decomposition]\label{lemma:StructuralPolarizedBSC}
Let $(\GAM, \NU) \in [0,\frac{1}{2}] \times [0,\frac{1}{2}]$. Assume $(\GAM, \NU) \neq (0,0)$ so that $(\GAM \star \NU) \neq 0$. Consider the channels $\BSC(\GAM)$ and $\BSC(\NU)$ as in Lemma~\ref{lemma:TransitionMatBSC}, and define the following parameters:
\begin{align}
\alpha & := \frac{ \GAM \NU }{ 1 - \GAM \star \NU },   \label{eqn:AlphaNewBSCErrorProb} \\
\beta  & := \frac{ \bar{\GAM} \NU }{ \GAM \star \NU }, \label{eqn:BetaNewBSCErrorProb}
\end{align}
where $\alpha \in [0,\frac{1}{2}]$ and $(\beta \wedge \bar{\beta}) \in [0,\frac{1}{2}]$. Then the one-step polarized channels satisfy the following equivalences.
\begin{align}
& \BSC(\GAM) \ATbad \BSC(\NU)  \equiv \BSC(\GAM \star \NU), \label{eqn:StructuralDecompPolarizedBSCATbad} \\
& \BSC(\GAM) \ATgood \BSC(\NU) \notag \\
& \quad \quad \quad \equiv \BSC(\GAM) \times \BSC(\NU) \notag \\
& \quad \quad \quad \equiv \left( 1 \! - \! \GAM \star \NU \right) \BSC(\alpha) \oplus \left( \GAM \star \NU \right) \BSC(\beta \wedge \bar{\beta}). \label{eqn:StructuralDecompPolarizedBSCATgood}
\end{align}
\end{lemma}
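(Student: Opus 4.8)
The first two displayed equivalences require no new work: the identity $\BSC(\GAM)\ATbad\BSC(\NU)\equiv\BSC(\GAM\star\NU)$ and the identity $\BSC(\GAM)\ATgood\BSC(\NU)\equiv\BSC(\GAM)\times\BSC(\NU)$ are precisely the content of Lemma~\ref{lemma:TransitionMatBSC}. The real task is the third equivalence, which recasts the product channel $\BSC(\GAM)\times\BSC(\NU)$ as a compound channel with two $\BSC$ subchannels. I would start from the explicit transition matrix of $\BSC(\GAM)\times\BSC(\NU)$ recorded in Eqn.~\eqref{eqn:TransitionMatATgood}, whose four output columns I read in the order $(0,0),(1,1),(1,0),(0,1)$.

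The plan is to exhibit the output-symmetry involution and group the outputs along its orbits. Reading off Eqn.~\eqref{eqn:TransitionMatATgood}, the column for $(0,0)$ is $(\bar\GAM\bar\NU,\GAM\NU)$ and that for $(1,1)$ is its coordinate swap $(\GAM\NU,\bar\GAM\bar\NU)$, so the symmetry map $\pi$ pairs $(0,0)\leftrightarrow(1,1)$; likewise it pairs $(1,0)\leftrightarrow(0,1)$. These two orbits $\{(0,0),(1,1)\}$ and $\{(1,0),(0,1)\}$ define the two subchannels of the compound representation. Summing each input row over the first orbit gives total mass $\bar\GAM\bar\NU+\GAM\NU=1-\GAM\star\NU$ (the same for $x=0$ and $x=1$ by the swap), which I take as the mixing weight; normalizing the restricted $2\times2$ block by this mass yields a $\BSC$ with crossover $\GAM\NU/(1-\GAM\star\NU)=\alpha$. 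The second orbit has mass $\GAM\bar\NU+\bar\GAM\NU=\GAM\star\NU$, and its normalized block is a $\BSC$ with crossover $\bar\GAM\NU/(\GAM\star\NU)=\beta$. Matching $W(i,y|x)=\lambda_i W_i(y|x)$ against Eqn.~\eqref{eqn:TransitionMatATgood} term by term confirms that this is an exact reconstruction in the sense of Def.~\ref{def:CompoundChannel}.

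Two small checks close the argument. First, the claimed range $\alpha\in[0,\tfrac12]$ reduces, after clearing denominators, to $\GAM\NU\le\bar\GAM\bar\NU$, i.e. $\GAM+\NU\le1$, which holds because $\GAM,\NU\in[0,\tfrac12]$. Second, $\beta$ need not lie in $[0,\tfrac12]$ --- indeed $\beta>\tfrac12$ exactly when $\NU>\GAM$ --- but since $\BSC(\beta)\equiv\BSC(\bar\beta)$ one may replace $\beta$ by $\beta\wedge\bar\beta$ without changing the subchannel, giving the stated normal form. Assembling the two orbits via Def.~\ref{def:CompoundChannel} then yields $\BSC(\GAM)\times\BSC(\NU)\equiv(1-\GAM\star\NU)\BSC(\alpha)\oplus(\GAM\star\NU)\BSC(\beta\wedge\bar\beta)$, which combined with Lemma~\ref{lemma:TransitionMatBSC} gives Eqn.~\eqref{eqn:StructuralDecompPolarizedBSCATgood}.

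I expect the only delicate point to be bookkeeping rather than a genuine obstacle: one must keep the column ordering straight so that the correct $2\times2$ block normalizes to $\alpha$ versus $\beta$, and must note that the excluded case $(\GAM,\NU)=(0,0)$ is the unique configuration with $\GAM\star\NU=0$, where $\beta$ would be undefined (the hypothesis rules it out). An alternative, equally short route avoids the matrix entirely: apply Theorem~\ref{thm:BM_polarization} with $\sm_1,\sm_2$ the two-atom Blackwell measures of $\BSC(\GAM)$ and $\BSC(\NU)$, enumerate the four equally likely pairs $(S_1,S_2)$ in Def.~\ref{def:ATbadATgoodProbMeasures}, observe that the arguments of $f$ collapse to $\{\alpha,\bar\alpha,\beta,\bar\beta\}$ with total weights $\tfrac{1}{2}(1-\GAM\star\NU)$ on each of $\alpha,\bar\alpha$ and $\tfrac{1}{2}(\GAM\star\NU)$ on each of $\beta,\bar\beta$, and invoke the Blackwell--Sherman--Stein theorem (Theorem~\ref{thm:BSS}) to pass from equality of Blackwell measures to channel equivalence.
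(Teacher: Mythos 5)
Your proposal is correct and follows essentially the same route as the paper's proof: both read off the compound-channel decomposition from the transition matrix of $\BSC(\GAM)\times\BSC(\NU)$ in Eqn.~\eqref{eqn:TransitionMatATgood} by grouping the outputs into the two pairs $\{(0,0),(1,1)\}$ and $\{(0,1),(1,0)\}$ of total mass $1-\GAM\star\NU$ and $\GAM\star\NU$, and normalizing each block to a $\BSC$ with crossover $\alpha$ and $\beta\wedge\bar\beta$ respectively. Your explicit checks that $\alpha\le\tfrac{1}{2}$ and that $\beta$ must be folded to $\beta\wedge\bar\beta$ when $\NU>\GAM$ are correct and slightly more detailed than the paper's, which simply asserts the ranges.
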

\begin{proof} Provided in Appendix~\ref{appendix:StructuralPolarizedBSC}.
\end{proof}

Lemmas~\ref{lemma:TransitionMatBSC} and \ref{lemma:StructuralPolarizedBSC} precisely characterize the interaction of a $\BSC(\GAM)$ with a $\BSC(\NU)$ after one step of polarization. The equivalences are based on equivalences between corresponding transition matrices. In a more general approach based on Blackwell measures, the equivalences may be derived directly from Theorem~\ref{thm:BM_polarization} as a corollary:

\begin{corollary}[One-step polarization of BSCs --- Blackwell measures]\label{corollary:BlackwellMeasuresPolarizedBSC} Let $(\GAM, \NU) \in [0,\frac{1}{2}] \times [0,\frac{1}{2}]$. Consider the channels $\BSC(\GAM)$ and $\BSC(\NU)$. Then the Blackwell measures of the transformed channels $\BSC(\GAM) \ATbad \BSC(\NU)$ and $\BSC(\GAM) \ATgood \BSC(\NU)$ are given as follows:
\begin{align}
\sm_{\BSC(\GAM) \ATbad \BSC(\NU)}  & = \sm_{\BSC(\GAM)} \ATbad \sm_{\BSC(\NU)} \notag \\
 & = \sm_{\BSC(\GAM \star \NU)}. \label{eqn:corollary:BlackwellMeasureBadPolarizedBSC} \\
\sm_{\BSC(\GAM) \ATgood \BSC(\NU)} & = \sm_{\BSC(\GAM)} \ATgood \sm_{\BSC(\NU)} \notag \\
 & = \sm_{\BSC(\GAM) \times \BSC(\NU)}.  \label{eqn:corollary:BlackwellMeasureGoodPolarizedBSC}
\end{align}
In addition, consider the parameters $\alpha := \frac{ \GAM \NU }{ 1 - \GAM \star \NU }$ and $\beta := \frac{ \bar{\GAM} \NU }{ \GAM \star \NU }$ as defined in Lemma~\ref{lemma:StructuralPolarizedBSC}. Then for $(\GAM, \NU) \neq (0,0)$, the parallel broadcast channel $\BSC(\GAM) \times \BSC(\NU)$ has the following representation in terms of Blackwell measure:
\begin{align}
& \sm_{\BSC(\GAM) \times \BSC(\NU)} \notag \\
& \quad \quad \quad =  \left( 1 - \GAM \star \NU \right) \sm_{\BSC(\alpha)}  + \left( \GAM \star \NU \right) \sm_{\BSC(\beta \wedge \bar{\beta})}. \label{eqn:corollary:BlackwellMeasureGoodParallelBroadcast}
\end{align}
\end{corollary}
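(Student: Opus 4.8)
The plan is to note first that the middle equalities in \eqref{eqn:corollary:BlackwellMeasureBadPolarizedBSC} and \eqref{eqn:corollary:BlackwellMeasureGoodPolarizedBSC} are nothing but Theorem~\ref{thm:BM_polarization} specialized to $W_1 = \BSC(\GAM)$ and $W_2 = \BSC(\NU)$. The entire content therefore reduces to evaluating the operations of Definition~\ref{def:ATbadATgoodProbMeasures} on the two-point measures $\sm_{\BSC(\GAM)} = \tfrac{1}{2}\delta_\GAM + \tfrac{1}{2}\delta_{\bar\GAM}$ and $\sm_{\BSC(\NU)} = \tfrac{1}{2}\delta_\NU + \tfrac{1}{2}\delta_{\bar\NU}$. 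The only algebra I will need is the identity $1 - a\star b = ab + \bar a\bar b = \bar a\star b$, which in particular gives $\GAM\star\bar\NU = \bar\GAM\star\NU = 1-\GAM\star\NU$ and $\bar\GAM\star\bar\NU = \GAM\star\NU$.

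For $\ATbad$, I would take independent $S_1\sim\sm_{\BSC(\GAM)}$, $S_2\sim\sm_{\BSC(\NU)}$ and substitute into \eqref{eq:bad_BM}. Running over the four equiprobable realizations of $(S_1,S_2)$, the above identities show that $1-S_1\star S_2$ equals $1-\GAM\star\NU$ on the branches $(\GAM,\NU)$ and $(\bar\GAM,\bar\NU)$ and equals $\GAM\star\NU$ on the remaining two. Thus $1-S_1\star S_2$ has law $\tfrac{1}{2}\delta_{\GAM\star\NU} + \tfrac{1}{2}\delta_{1-\GAM\star\NU} = \sm_{\BSC(\GAM\star\NU)}$, which is \eqref{eqn:corollary:BlackwellMeasureBadPolarizedBSC}.

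For $\ATgood$, the final equality $\sm_{\BSC(\GAM)}\ATgood\sm_{\BSC(\NU)} = \sm_{\BSC(\GAM)\times\BSC(\NU)}$ follows from the channel equivalence $\BSC(\GAM)\ATgood\BSC(\NU)\equiv\BSC(\GAM)\times\BSC(\NU)$ of Lemma~\ref{lemma:TransitionMatBSC} together with the uniqueness part of Theorem~\ref{thm:BSS}. To reach the explicit mixture \eqref{eqn:corollary:BlackwellMeasureGoodParallelBroadcast} I would evaluate \eqref{eq:good_BM} branch by branch. The decisive observation is that, using the definitions \eqref{eqn:AlphaNewBSCErrorProb}--\eqref{eqn:BetaNewBSCErrorProb} and the complementary identities $\bar\alpha = \tfrac{\bar\GAM\bar\NU}{1-\GAM\star\NU}$ and $\bar\beta = \tfrac{\GAM\bar\NU}{\GAM\star\NU}$, the two arguments $\tfrac{S_1 S_2}{1-S_1\star S_2}$ and $\tfrac{\bar S_1 S_2}{S_1\star S_2}$ take values only in $\{\alpha,\bar\alpha,\beta,\bar\beta\}$ across the four branches. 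Collecting the prefactors $(1-S_1\star S_2)$ and $(S_1\star S_2)$ then attaches total weight $\tfrac{1}{2}(1-\GAM\star\NU)$ to each of $f(\alpha),f(\bar\alpha)$ and total weight $\tfrac{1}{2}(\GAM\star\NU)$ to each of $f(\beta),f(\bar\beta)$. Recognizing $\tfrac{1}{2}\delta_\alpha+\tfrac{1}{2}\delta_{\bar\alpha} = \sm_{\BSC(\alpha)}$ and $\tfrac{1}{2}\delta_\beta+\tfrac{1}{2}\delta_{\bar\beta} = \sm_{\BSC(\beta)} = \sm_{\BSC(\beta\wedge\bar\beta)}$ yields the claimed measure.

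The algebraic identities for $\star$ and the complements of $\alpha,\beta$ are routine. I expect the main obstacle to be the branch bookkeeping in the $\ATgood$ evaluation: one must correctly pair each of the four realizations of $(S_1,S_2)$ with both $f$-arguments and their prefactors, and check that the eight terms so produced recombine symmetrically into exactly four with the stated weights. The symmetrization $\sm_{\BSC(\beta)} = \sm_{\BSC(\beta\wedge\bar\beta)}$ is the small but essential point that reconciles the $\beta\wedge\bar\beta$ appearing in the statement with the raw parameter $\beta$ emerging from the computation.
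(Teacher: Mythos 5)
Your proposal is correct and follows essentially the same route as the paper's proof: specialize Theorem~\ref{thm:BM_polarization} to the two-point measures $\sm_{\BSC(\GAM)}$ and $\sm_{\BSC(\NU)}$, evaluate \eqref{eq:bad_BM} and \eqref{eq:good_BM} over the four equiprobable branches, and identify the resulting Dirac mixtures. The only cosmetic difference is that you obtain \eqref{eqn:corollary:BlackwellMeasureGoodParallelBroadcast} directly from the branch bookkeeping (using $\sm_{\BSC(\beta)} = \sm_{\BSC(\beta\wedge\bar\beta)}$), whereas the paper reads it off from Lemma~\ref{lemma:StructuralPolarizedBSC}; both are valid and the computation you describe checks out.
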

\begin{proof} Provided in Appendix~\ref{appendix:BlackwellMeasuresPolarizedBSC}.
\end{proof}


\begin{remark} \em{ The lower-half panel of Figure~\ref{fig:PolarizationOneStep_BEC_BSC} depicts the polar transformation applied to two $\BSC$s. The channel $\BSC(\GAM) \ATbad \BSC(\NU)$ is a $\BSC$ with probability of error $p \star q$, which is larger than both $p$ and $q$. The channel $\BSC(\GAM) \ATgood \BSC(\NU)$ is a more complex channel. More precisely, it is a compound channel, as defined in Definition~\ref{def:CompoundChannel}, which consists of two elementary subchannels that are $\BSC$s. }
\end{remark}

\subsection{Blackwell measures of one-step polarized symmetric $\BIC$s}

Building upon Corollary~\ref{corollary:BlackwellMeasuresPolarizedBSC}, it is possible to characterize the image of a pair of arbitrary symmetric $\BIC$s under the polar transformation. The one-step polarization of symmetric $\BIC$s is characterized entirely by the interaction of elementary subchannels that are $\BSC$s.

\begin{corollary}[One-step polarization of symmetric $\BIC$s --- Blackwell measures]\label{corollary:BIOSPolarization} Consider two symmetric $\BIC$s $(\sY_1, W_1)$ and $(\sY_2, W_2)$. As established by Theorem~\ref{thm:BIOSdecomp}, there exist positive integers $m, k$, probability vectors $\lambda = (\lambda_1,\ldots,\lambda_m)$, $\mu = (\mu_1,\ldots,\mu_k)$, and parameters $p_1,\ldots,p_m \in [0,1]$, $q_1,\ldots,q_k \in [0,1]$ such that
\begin{align}
W_1 & \equiv  \bigoplus^m_{i=1} \lambda_i \BSC(\GAM_i), \notag \\
W_2 & \equiv  \bigoplus^k_{j=1} \mu_j \BSC(\NU_j). \notag
\end{align}
Then the transformed channels $W_1 \ATbad W_2$ and $W_1 \ATgood W_2$ have the following Blackwell measures which illustrate their structural decompositions:
\begin{align}
\sm_{W_1 \ATbad W_2} & = \sum^m_{i=1}\sum^k_{j=1}\lambda_i \mu_j \sm_{\BSC(\GAM_i \star \NU_j)}. \label{eq:BIOSbad} \\
\sm_{W_1 \ATgood W_2} & = \sum^m_{i=1}\sum^k_{j=1} \lambda_i \mu_j \sm_{\BSC(\GAM_i) \times \BSC(\NU_j)}. \label{eq:BIOSgood}
\end{align}
\end{corollary}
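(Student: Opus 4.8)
The plan is to reduce the claim entirely to the single-pair computations already carried out in Corollary~\ref{corollary:BlackwellMeasuresPolarizedBSC}, by exploiting the fact that both measure-level operations $\ATbad$ and $\ATgood$ act \emph{bilinearly} on mixtures. First I would apply Theorem~\ref{thm:BM_polarization} to pass from the channel-level polar transform to the measure-level operations, obtaining
\begin{align*}
\sm_{W_1 \ATbad W_2} = \sm_{W_1} \ATbad \sm_{W_2}, \qquad \sm_{W_1 \ATgood W_2} = \sm_{W_1} \ATgood \sm_{W_2}.
\end{align*}
Since $W_1$ and $W_2$ are compound channels with $\BSC$ subchannels, the mixture formula for the Blackwell measure of a compound channel gives $\sm_{W_1} = \sum_{i=1}^m \lambda_i \sm_{\BSC(\GAM_i)}$ and $\sm_{W_2} = \sum_{j=1}^k \mu_j \sm_{\BSC(\NU_j)}$.

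The crux is to establish that $\ATbad$ and $\ATgood$ from Definition~\ref{def:ATbadATgoodProbMeasures} distribute over convex combinations in each argument. This follows directly from the defining integral identities, since each is the expectation of a bounded measurable function $g(S_1,S_2)$ of the independent pair $(S_1,S_2)$ with $S_1 \sim \sm_1$ and $S_2 \sim \sm_2$. When $\sm_1 = \sum_i \lambda_i \sm_{1,i}$ and $\sm_2 = \sum_j \mu_j \sm_{2,j}$, the independence of $S_1$ and $S_2$ means their joint law is the product $\sm_1 \times \sm_2 = \sum_{i,j}\lambda_i\mu_j\,(\sm_{1,i} \times \sm_{2,j})$, so that
\begin{align*}
\int_{[0,1]} f \, \d(\sm_1 \ATbad \sm_2) = \E[g(S_1,S_2)] = \sum_{i,j} \lambda_i \mu_j \int_{[0,1]} f \, \d(\sm_{1,i} \ATbad \sm_{2,j}),
\end{align*}
and identically for $\ATgood$ with its corresponding integrand. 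Because this holds for every continuous bounded $f$, it yields the bilinear identities $\sm_1 \ATbad \sm_2 = \sum_{i,j}\lambda_i\mu_j (\sm_{1,i}\ATbad\sm_{2,j})$ and $\sm_1 \ATgood \sm_2 = \sum_{i,j}\lambda_i\mu_j(\sm_{1,i}\ATgood\sm_{2,j})$.

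To finish, I would apply these identities with $\sm_{1,i} = \sm_{\BSC(\GAM_i)}$ and $\sm_{2,j} = \sm_{\BSC(\NU_j)}$, then substitute the per-pair evaluations from Corollary~\ref{corollary:BlackwellMeasuresPolarizedBSC}, namely $\sm_{\BSC(\GAM_i)} \ATbad \sm_{\BSC(\NU_j)} = \sm_{\BSC(\GAM_i \star \NU_j)}$ and $\sm_{\BSC(\GAM_i)} \ATgood \sm_{\BSC(\NU_j)} = \sm_{\BSC(\GAM_i) \times \BSC(\NU_j)}$; collecting the weights $\lambda_i \mu_j$ produces exactly \eqref{eq:BIOSbad} and \eqref{eq:BIOSgood}. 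Rather than a genuine obstacle, the only point requiring care is the bilinearity verification together with one bookkeeping remark: Corollary~\ref{corollary:BlackwellMeasuresPolarizedBSC} is stated for $(\GAM,\NU) \in [0,\tfrac12]\times[0,\tfrac12]$, whereas here $\GAM_i,\NU_j$ range over all of $[0,1]$. This restriction is inessential, since $\sm_{\BSC(p)} = \sm_{\BSC(\bar{p})}$ shows the $\BSC$ Blackwell measure depends only on the unordered pair $\{p,\bar{p}\}$, so the per-pair identities extend to all $\GAM_i,\NU_j \in [0,1]$ after replacing each parameter by its representative in $[0,\tfrac12]$.
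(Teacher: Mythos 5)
Your proposal is correct and follows essentially the same route as the paper's proof in Appendix~\ref{appendix:BlackwellMeasuresPolarizedSymmetricBICs}: there, the expectations defining $\sm_{W_1} \ATbad \sm_{W_2}$ and $\sm_{W_1} \ATgood \sm_{W_2}$ are expanded term-by-term over the product of the two mixtures (which is precisely your bilinearity observation, carried out concretely) and then Corollary~\ref{corollary:BlackwellMeasuresPolarizedBSC} is invoked pairwise. Your remark on reducing parameters from $[0,1]$ to $[0,\tfrac12]$ via $\sm_{\BSC(p)} = \sm_{\BSC(\bar{p})}$ is a detail the paper leaves implicit, but it does not change the argument.
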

\begin{proof} Provided in Appendix~\ref{appendix:BlackwellMeasuresPolarizedSymmetricBICs}.
\end{proof}

\noindent As an illustration of Corollary~\ref{corollary:BIOSPolarization}, we give an alternative derivation of the fact that the image of a pair of $\BEC$s under the polar transform is another pair of $\BEC$s~\cite[Prop.~6]{arikan09}.

\begin{example}[One-step polarization of $\BEC$s --- Blackwell measures]\label{example:BECPolarizationBlackwellMeasures}
{\em Consider two erasure channels, $\BEC(\varepsilon)$ and $\BEC(\tau)$, with erasure probabilities $\varepsilon \in [0,1]$ and $\tau \in [0,1]$. The upper-half panel of Figure~\ref{fig:PolarizationOneStep_BEC_BSC} depicts that $\BEC(\varepsilon) \ATbad \BEC(\tau) \equiv \BEC(1 - \bar{\varepsilon}\bar{\tau})$ and $\BEC(\varepsilon) \ATgood \BEC(\tau) \equiv \BEC(\varepsilon \tau)$. We give a proof of the second equivalence here, and the proof of the first equivalence follows from nearly identical steps.
Applying Eqn.~\eqref{eq:BIOSgood} of Corollary~\ref{corollary:BIOSPolarization}, the channel $\BEC(\varepsilon) \ATgood \BEC(\tau)$ has the following Blackwell measure:
\begin{IEEEeqnarray*}{rCl}
\IEEEeqnarraymulticol{3}{l} { \sm_{\BEC(\varepsilon) \ATgood \BEC(\tau)} } \\
& = & \sm_{\BEC(\varepsilon)} \ATgood \sm_{\BEC(\tau)}  \\
& = & \left( \bar{\varepsilon} \sm_{\BSC(0)} + \varepsilon \sm_{\BSC(1/2)} \right) \ATgood \left( \bar{\tau} \sm_{\BSC(0)} + \tau \sm_{\BSC(1/2)} \right) \\
& = & \bar{\varepsilon}\bar{\tau} \sm_{\BSC(0) \times \BSC(0)} + \bar{\varepsilon}\tau \sm_{\BSC(0) \times \BSC(1/2)} \\
& & \> + \varepsilon \bar{\tau}\sm_{\BSC(1/2) \times \BSC(0)} + \varepsilon \tau\sm_{\BSC(1/2) \times \BSC(1/2)} \\
& = & \bar{\varepsilon}\bar{\tau} \sm_{\BSC(0)} + \bar{\varepsilon}\tau \sm_{\BSC(0)} + \varepsilon \bar{\tau}\sm_{\BSC(0)} + \varepsilon \tau\sm_{\BSC(1/2)} \\
& = & (1-\varepsilon\tau)\sm_{\BSC(0)} + \varepsilon \tau \sm_{\BSC(1/2)} \\
& = & \sm_{\BEC(\varepsilon \tau)}.
\end{IEEEeqnarray*}}
\end{example}


\section{The Construction of Polar Codes}\label{sec:SuccessiveChannelPolarization}

For a given $\BIC$ $(\sY, W)$, the polar transforms defined in Definition~\ref{def:PolarTransform} may be applied iteratively~\cite{arikan09}. Polarizing the original channel $W$ successively over $n$ iterations results in one of $2^{n}$ possible channels.

\begin{definition}[Successive channel polarization]\label{def:SuccessivePolarization} Consider a $\BIC$ $(\sY, W)$. Let $b = (b_0, b_1, b_2, \ldots, b_n)$ be a binary vector, $b \in \{0,1\}^{n+1}$. The channel $W_{b}$ is obtained by iterative polarization:
\begin{IEEEeqnarray}{rRl}
W_b & = & W_{(b_0,b_1,b_2,\ldots,b_n)} \notag \\
& := & \begin{cases}
W_{(b_0,b_1,b_2,\ldots,b_{n-1})} \ATbad  W_{(b_0,b_1,b_2,\ldots,b_{n-1})}, & \text{if $b_n = 0$,} \\
W_{(b_0,b_1,b_2,\ldots,b_{n-1})} \ATgood W_{(b_0,b_1,b_2,\ldots,b_{n-1})}, & \text{if $b_n = 1$.} \notag
\end{cases}
\end{IEEEeqnarray}
for all integers $n > 0$. The base case is given by $W_{(b_0)} \equiv W$.
\end{definition}

\subsection{Polar code construction: symmetric $\BIC$s}

\begin{figure}[t]
\begin{center}


\scalebox{0.92} 
{
\begin{pspicture}(0,-4.22)(5.2028127,4.22)

\rput(2.82, -0.25) {
\includegraphics[scale=0.442]{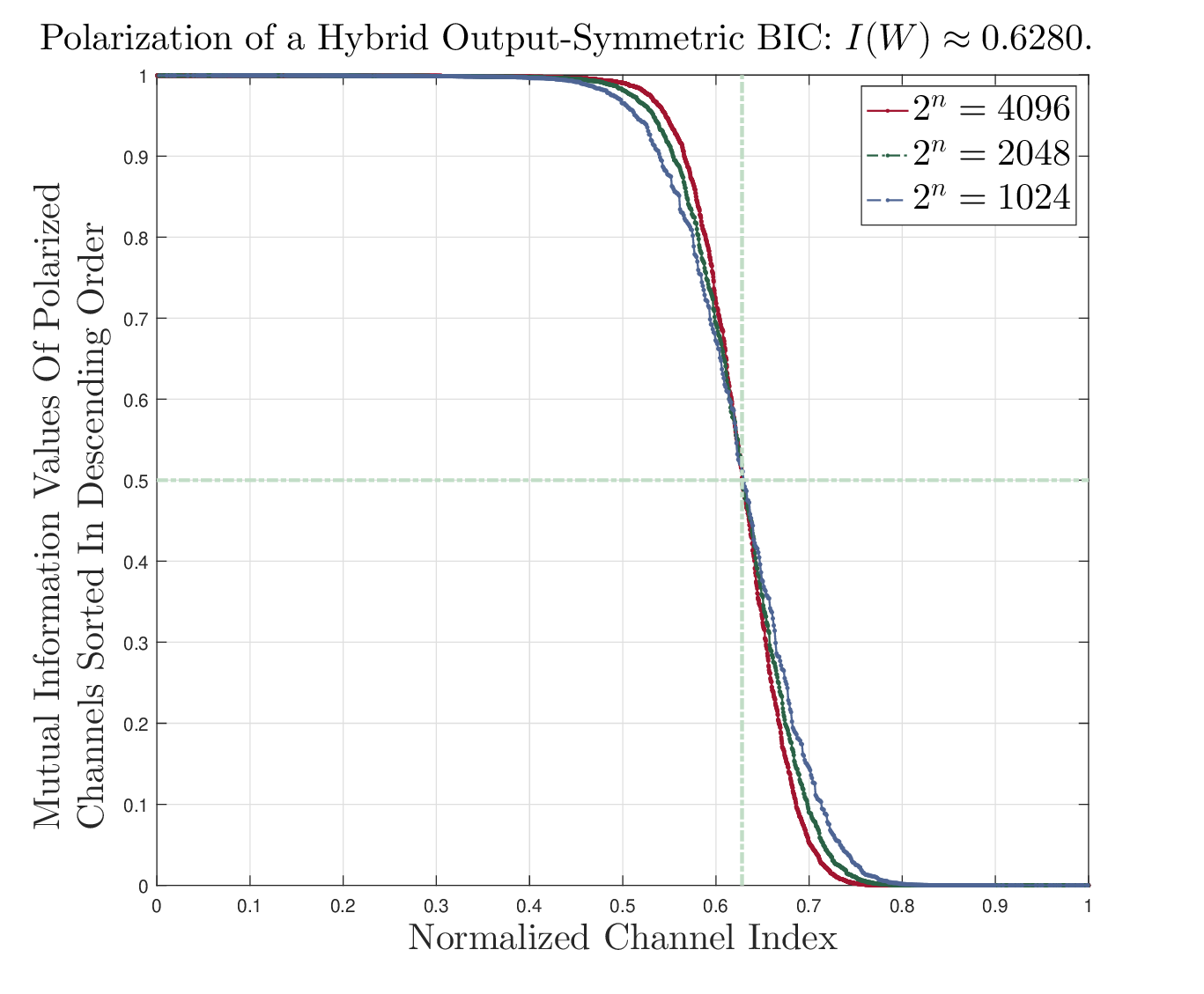}
}

\end{pspicture}
}

\end{center}
\caption{Experimental results for the polarization of a hybrid output-symmetric BIC with parameters $\epsilon_0 = 0.12$ and $\gamma_0 = 0.05$, with capacity $I(W) = (1-\epsilon_0)(1-h_2(\gamma_0)) \approx 0.6280$. Polar codes of block lengths $2^{n}$ were constructed for $n = 10, 11, 12$.} \label{fig:HybridBICExperimentalResults}
\end{figure}

Based on Theorem~\ref{thm:BM_polarization}, Corollary~\ref{corollary:BlackwellMeasuresPolarizedBSC} and Corollary~\ref{corollary:BIOSPolarization} are specialized for the class of symmetric $\BIC$s. Channel decompositions greatly simplify iterative computations and the construction of polar codes. A simple algorithm constructs all $2^{n}$ channels, as defined in Definition~\ref{def:SuccessivePolarization}, over $n$ iterations of the polar transform.

Consider two symmetric $\BIC$s $(\sY_1, W_1)$ and $(\sY_2, W_2)$. Due to Theorem~\ref{thm:BIOSdecomp}, there exist positive integers $m, k$, probability vectors $\lambda = (\lambda_1,\ldots,\lambda_m)$, $\mu = (\mu_1,\ldots,\mu_k)$, and parameters $p_1,\ldots,p_m \in [0,\frac{1}{2}]$, $q_1,\ldots,q_k \in [0,\frac{1}{2}]$ such that $W_1 \equiv \bigoplus^m_{i=1} \lambda_i \BSC(p_i)$ and $W_2 \equiv  \bigoplus^k_{j=1} \mu_j \BSC(q_j)$. In addition, we define the following parameters from Corollary~\ref{corollary:BlackwellMeasuresPolarizedBSC} and Corollary~\ref{corollary:BIOSPolarization}:
\begin{align}
\alpha_{ij} & := \frac{p_i q_j}{1 - p_i \star q_j}, \notag  \\
\beta_{ij} & := \frac{\bar{p}_i q_j}{p_i \star q_j}, \notag
\end{align}
where we define $(\alpha_{ij}, \beta_{ij}) := (0,0)$ if $(p_i, q_j) = (0, 0)$ in the degenerate case. Consider the sets $\mathcal{C}_{W_1}$ and $\mathcal{C}_{W_2}$ defined in Eqn.~\eqref{eq:Blackwell_set}:
\begin{IEEEeqnarray}{rClCll}
\cC_{W_1} & = & \{(\lambda_i, p_i) & : & i \in [m] & \}. \notag \\
\cC_{W_2} & = & \{(\mu_j, q_j)     & : & j \in [k] & \}. \notag
\end{IEEEeqnarray}
After one iteration of polarization, $W_1 \ATbad W_2$ and $W_1 \ATgood W_2$ have corresponding sets:
\begin{IEEEeqnarray}{rlCl}
& \IEEEeqnarraymulticol{3}{l}{ \cC_{W_1 \ATbad W_2}  = \Bigl\{ (\lambda_i \mu_j, p_i \star q_j) ~: i \in [m], j \in [k]\Bigl\}. } \label{eqn:BlackwellSetsPolarizedATbad} \\
& \cC_{W_1 \ATgood W_2} = & & \nonumber \\
& \quad \quad \Bigl\{ (\lambda_i\mu_j(1 - p_i \star q_j), \alpha_{ij})  & : & i \in [m], j \in [k] \Bigl\} ~ \cup \nonumber \\
& \quad \quad \Bigl\{ (\lambda_i \mu_j (p_i \star q_j), \beta_{ij} \wedge \bar{\beta}_{ij}) & : & i \in [m], j \in [k] \Bigl\}. \label{eqn:BlackwellSetsPolarizedATgood}
\end{IEEEeqnarray}
By iteratively applying Eqn.~\eqref{eqn:BlackwellSetsPolarizedATbad} and Eqn.~\eqref{eqn:BlackwellSetsPolarizedATgood}, the polar transforms may be applied successively to obtain the representation $\mathcal{C}_{W_b}$ for any transformed output-symmetric $\BIC$ $W_{b}$ defined in Definition~\ref{def:SuccessivePolarization}, where binary vector $b \in \{0,1\}^{n+1}$. Eqn.~\eqref{eqn:BlackwellSetsPolarizedATbad} and Eqn.~\eqref{eqn:BlackwellSetsPolarizedATgood} constitute an exact algorithm.

\begin{remark} \em{ Since the size of the output alphabet of transformed discrete channels increases exponentially with the number of iterations $n$, the above algorithm must be modified slightly to maintain computational tractability. One approach is to shift and merge the probability masses corresponding to the Dirac measures of transformed channels. The merge operation combines point masses located within the same interval of quantization. We refer to the literature which provides detailed analyses of channel approximation methods for constructing polar codes~\cite{pedarsani2011,tal_vardy_construct_2013}. Furthermore, as shown in~\cite{MHU_sublinear_construction_2019}, the construction of polar codes is even attainable with sub-linear complexity. }
\end{remark}

\subsection{Experimental results}

To corroborate Eqn.~\eqref{eqn:BlackwellSetsPolarizedATbad} and Eqn.~\eqref{eqn:BlackwellSetsPolarizedATgood}, experimental evidence is provided herein regarding the successive quantization and polarization of a hybrid output-symmetric $\BIC$. The hybrid $\BIC$ is a combination of $\BSC$ and $\BEC$ channels.

\begin{example}[Successive polarization of a hybrid output-symmetric $\BIC$]\label{example:HybridBIC} {\em Consider a $\BIC$ $(\sY, W)$ with output alphabet $\sY = \{0, 1, \texttt{e}\}$, and channel transition probabilities
\begin{align}
& W(\texttt{e}|0) = W(\texttt{e}|1) = \varepsilon_0, \notag \\
& W(0|0) = W(1|1) = (1-\varepsilon_0)(1-\GAM_0), \notag \\
& W(1|0) = W(0|1) = (1 - \varepsilon_0)\GAM_0. \notag
\end{align}
For $\varepsilon_0 = 0.12$, $\GAM_0 = 0.05$, the capacity $I(W) = (1-\varepsilon_0)(1 - h_2(\GAM_0)) \approx 0.6280$. Figure~\ref{fig:HybridBICExperimentalResults} depicts the mutual information values of transformed channels sorted in descending order after $n = 10, 11, 12$ levels of successive polarization. Channel quantization was applied to maintain computational tractability. Point masses corresponding to the Dirac measures of transformed channels were merged within dyadic intervals of length $2^{-L}$ with $L = 14$.

}
\end{example}

\section{One-step polarization of channel functionals $\I_f$}

Informally speaking, the polar transform \eqref{eq:polar_transform} replaces the original pair of $\BIC$s $W_1$ and $W_2$ with another pair, where $W_1 \ATbad W_2$ is ``worse'' than both $W_1$ and $W_2$, and $W_1 \ATgood W_2$ is ``better'' than both $W_1$ and $W_2$. The following definition makes precise the notion of one-step polarization of real-valued functionals for a class of channels.

\begin{definition}[One-step polarization of real-valued functionals]\label{def:PolarizationOfChannelFunctionals} Let $\cW$ denote a class of $\BIC$s. A channel functional $\Psi$ associates a real number $\Psi(W)$ to every $W \in \cW$. The functional $\Psi$ {\em polarizes} in one iteration on $\cW$ due to \Arikan's polar transform if, for any two $\BIC$s $W_1,W_2 \in \cW$,
\begin{align}
\Psi(W_1 \ATbad W_2) & \le \Psi(W_1) \wedge \Psi(W_2) \notag \\
& \le \Psi(W_1) \vee \Psi(W_2) \le \Psi(W_1 \ATgood W_2). \notag
\end{align}
This definition assumes that both $W_1 \ATbad W_2 \in \cW$ and $W_1 \ATgood W_2 \in \cW$.
\end{definition}

\subsection{One-step polarization of real-valued channel functionals}\label{sec:PolarizationBroadClassOfFunctionals}


In this section, we provide a direct proof from first principles, verifying that one-step polarization as defined in Definition~\ref{def:PolarizationOfChannelFunctionals} holds for many real-valued channel functionals on the class of symmetric $\BIC$s. As we detail in subsequent remarks, the following theorem has been rediscovered several times in the modern literature. Aside from its direct proof, it also follows as a consequence of the Blackwell--Sherman--Stein theorem.

\begin{theorem}[One-step polarization of real-valued functionals]\label{thm:BroadClassOfFunctionalsPolarize} All channel functionals $\I_f$ with a convex $f : [0,1] \to \Reals$ polarize in each iteration of the polar transform on the class of symmetric $\BIC$s. That is, if $W_1,W_2$ are two symmetric $\BIC$s, then
	\begin{align*}
	\I_f(W_1 \ATbad W_2) & \le \I_f(W_1) \wedge \I_f(W_2) \\
	                     & \le \I_f(W_1) \vee \I_f(W_2) \le \I_f(W_1 \ATgood W_2).
	\end{align*}
\end{theorem}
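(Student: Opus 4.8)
The plan is to reduce the claim to a statement about Blackwell measures and then exploit the convexity characterization of Blackwell dominance from Theorem~\ref{thm:BSS}. Since $\I_f(W) = \int_{[0,1]} f \,\d\sm_W$ for convex $f$, the quantity $\I_f$ is monotone with respect to the Blackwell order: $W \succeq W'$ implies $\I_f(W) \ge \I_f(W')$ for every convex $f$. Thus it suffices to establish the four Blackwell-ordering relations
\begin{align*}
W_1 \ATbad W_2 &\preceq W_1, & W_1 \ATbad W_2 &\preceq W_2, \\
W_1 &\preceq W_1 \ATgood W_2, & W_2 &\preceq W_1 \ATgood W_2,
\end{align*}
because these immediately yield $\I_f(W_1 \ATbad W_2) \le \I_f(W_1) \wedge \I_f(W_2)$ and $\I_f(W_1) \vee \I_f(W_2) \le \I_f(W_1 \ATgood W_2)$. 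The key observation is that polarization of functionals is equivalent to \emph{preservation of the Blackwell ordering} at the level of the extremal channels themselves, which is the approach previewed in the contribution list for Section~\ref{sec:PolarizationBroadClassOfFunctionals}.

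First I would reduce everything to the case of two $\BSC$ subchannels. By Theorem~\ref{thm:BIOSdecomp}, each symmetric BIC decomposes as a compound channel of $\BSC$s, and by Corollary~\ref{corollary:BIOSPolarization} the Blackwell measures of $W_1 \ATbad W_2$ and $W_1 \ATgood W_2$ are the corresponding mixtures over pairs $(\BSC(\GAM_i), \BSC(\NU_j))$. Since $\I_f$ is additive over compound-channel mixtures via Eqn.~\eqref{eqn:InducedFunctionalSymmBIC1}, and since the arithmetic mean of the subchannel values lies between the extremes, it is enough to prove the two-point domination statements for a single pair of $\BSC$s. Concretely, using Corollary~\ref{corollary:BlackwellMeasuresPolarizedBSC}, I would show that for convex $f$,
$$
\I_f\big(\BSC(\GAM) \ATbad \BSC(\NU)\big) \le \I_f(\BSC(\GAM)) \wedge \I_f(\BSC(\NU)) \le \I_f(\BSC(\GAM)) \vee \I_f(\BSC(\NU)) \le \I_f\big(\BSC(\GAM) \ATgood \BSC(\NU)\big),
$$
and then average over $i,j$ with weights $\lambda_i \mu_j$ to recover the general symmetric case.

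For the single-pair bounds, I would work directly with the defining formulas \eqref{eq:bad_BM} and \eqref{eq:good_BM} and the explicit two-point Blackwell measure $\sm_{\BSC(\GAM)} = \tfrac12\delta_\GAM + \tfrac12\delta_{\bar\GAM}$. For the $\ATbad$ channel, since $\BSC(\GAM) \ATbad \BSC(\NU) \equiv \BSC(\GAM \star \NU)$ and $\GAM \star \NU$ is a worse crossover probability than either $\GAM$ or $\NU$ (it is closer to $1/2$), the degradation relations $\BSC(\GAM \star \NU) \preceq \BSC(\GAM)$ and $\preceq \BSC(\NU)$ hold by a standard BSC-composition argument; convexity of $f$ then gives the inequality on $\I_f$. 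For the $\ATgood$ channel, the clean route is to observe that $\BSC(\GAM) \ATgood \BSC(\NU) \equiv \BSC(\GAM) \times \BSC(\NU)$ is the parallel broadcast of the two channels, which dominates each factor because observing two independent outputs is always more informative than observing one. I would make this rigorous by exhibiting the degrading kernel (marginalizing away one output coordinate) witnessing $\BSC(\GAM) \times \BSC(\NU) \succeq \BSC(\GAM)$ and $\succeq \BSC(\NU)$, and again invoke Theorem~\ref{thm:BSS}(2).

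The main obstacle I anticipate is the $\ATgood$ direction, specifically verifying the domination $W_1, W_2 \preceq W_1 \ATgood W_2$ cleanly at the level of Blackwell measures rather than transition matrices. While the ``more observations is more informative'' intuition is compelling, the formula \eqref{eq:good_BM} mixes the two output branches with data-dependent weights $1 - S_1 \star S_2$ and $S_1 \star S_2$ and reweights the argument of $f$, so checking the Jensen-type inequality directly from that formula requires care. The cleanest resolution is to avoid fighting \eqref{eq:good_BM} and instead use the structural equivalence $W_1 \ATgood W_2 \equiv W_1 \times W_2$ together with the elementary fact that a product channel degrades to each of its marginals, so that the convexity characterization of Theorem~\ref{thm:BSS} applies immediately. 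Packaging the two-point result and then lifting to arbitrary symmetric BICs by the mixture decomposition is then routine.
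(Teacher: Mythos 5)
Your proposal is correct, but it takes a genuinely different route from the paper's. The paper never invokes the $\BSC$ decomposition here: it works directly with the formulas \eqref{eq:bad_BM} and \eqref{eq:good_BM}. For the $\ATbad$ half it writes $\I_f(W_1\ATbad W_2)=\E[f(S_1S_2+(1-S_1)(1-S_2))]$, conditions on $S_2$, applies Jensen to the convex combination $S_2 S_1+(1-S_2)(1-S_1)$ to get $\tfrac12\E[f(S_1)]+\tfrac12\E[f(1-S_1)]$, and only then uses output symmetry via $\cL(S_1)=\cL(1-S_1)$ to identify this with $\I_f(W_1)$. For the $\ATgood$ half, the two data-dependent weights $1-S_1\star S_2$ and $S_1\star S_2$ in \eqref{eq:good_BM} sum to one and the weighted average of the two arguments of $f$ is exactly $S_2$, so Jensen gives $\I_f(W_1\ATgood W_2)\ge\E[f(S_2)]=\I_f(W_2)$ in one line, with no symmetry assumption at all — so the computation you chose to avoid is actually the cleanest step of the paper's proof. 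Your route instead reduces to $\BSC$ pairs via Theorem~\ref{thm:BIOSdecomp} and Corollary~\ref{corollary:BIOSPolarization}, establishes Blackwell dominance there by explicit degrading kernels (the concatenation $\BSC(\GAM\star\NU)\equiv\BSC(\NU)\circ\BSC(\GAM)$ for $\ATbad$, marginalization of one coordinate of the product channel for $\ATgood$), and lifts by linearity of $\I_f$ over mixtures; this is sound, and it delivers the Blackwell ordering of Corollary~\ref{cor:BlackwellOrderingOfChannels} directly rather than as a corollary. What it costs you is generality (the paper's $\ATgood$ bound holds for arbitrary, not necessarily symmetric, BICs) and some machinery. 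One point to tighten: ``the arithmetic mean of the subchannel values lies between the extremes'' is not by itself the justification for the min/max inequalities after averaging; what you need is that $\sum_{i,j}\lambda_i\mu_j\,(a_i\wedge b_j)$ is at most each of $\sum_i\lambda_i a_i$ and $\sum_j\mu_j b_j$, hence at most their minimum (and dually for the maximum) — or, more simply, prove the four dominance relations one at a time, each of which averages cleanly.
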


\begin{proof} Let $S_1 \sim \sm_{W_1}$ and $S_2 \sim \sm_{W_2}$ be independent. Then, using Theorem~\ref{thm:BM_polarization}, we can write
\begin{align}
	& \I_f(W_1 \ATbad W_2) \notag \\
    & \quad = \int_{[0,1]} f \d\sm_{W_1 \ATbad W_2} \notag \\
	& \quad = \int_{[0,1]} f \d(\sm_{W_1} \ATbad \sm_{W_2}) \notag \\
	& \quad = \E[f(S_1 S_2 + (1-S_1)(1-S_2))] \notag \\
	& \quad = \E\left[\E\left[f(S_1 S_2 + (1-S_1)(1-S_2)) \bigl| S_2 \right]\right] \notag \\
    & \quad \leq \E\left[ \E\left[S_2 f(S_1) + (1-S_2) f(1-S_1) \bigl| S_2\right]\right] \label{eqn:ThmAllFunctionalsPolarizeProofStepA}\\
	& \quad = \frac{1}{2}\E[f(S_1)] + \frac{1}{2}\E[f(1-S_1)] \label{eqn:ThmAllFunctionalsPolarizeProofStepB}\\
	& \quad = \I_f(W_1) \label{eqn:ThmAllFunctionalsPolarizeProofStepC},
\end{align}
where~\eqref{eqn:ThmAllFunctionalsPolarizeProofStepA} is by Jensen's inequality,~\eqref{eqn:ThmAllFunctionalsPolarizeProofStepB} follows from the fact that $S_1$ and $S_2$ are independent with $\E[S_1]=\E[S_2]=\frac{1}{2}$, and~\eqref{eqn:ThmAllFunctionalsPolarizeProofStepC} follows from the symmetry of $W_1$, which is equivalent to $\cL(S_1) = \cL(1-S_1)$. This shows that $\I_f(W_1 \ATbad W_2) \le \I_f(W_1)$. Conditioning on $S_1$ instead of $S_2$, we prove that $\I_f(W_1 \ATbad W_2) \le \I_f(W_2)$.
	
Using Theorem~\ref{thm:BM_polarization} and Jensen's inequality, we obtain
\begin{align*}
	\I_f(W_1 \ATgood W_2) & = \int_{[0,1]} f \d\sm_{W_1 \ATgood W_2} \\
	& = \int_{[0,1]} f \d(\sm_{W_1} \ATgood \sm_{W_2}) \\
	& = \E\Bigg[(1-S_1 \star S_2)f\left(\frac{S_1 S_2}{1-S_1 \star S_2}\right)\\
	& \qquad +(S_1 \star S_2)f\left(\frac{\bar{S}_1S_2}{S_1 \star S_2}\right)\Bigg] \\
	& \ge \E\left[f(S_2)\right] \notag \\
    & = \I_f(W_2). \notag
\end{align*}
By symmetry, $W_1 \ATgood W_2 \equiv W_2 \ATgood W_1$, so we also have $\I_f(W_1 \ATgood W_2) \ge \I_f(W_1)$.
\end{proof}
\begin{corollary}[Blackwell ordering of channels]\label{cor:BlackwellOrderingOfChannels} If $W_1, W_2$ are two symmetric $\BIC$s, then
\begin{align*}
	W_1 \ATbad W_2 \preceq_{{\rm B}} W_1 &\preceq_{{\rm B}} W_1 \ATgood W_2, \\
	W_1 \ATbad W_2 \preceq_{{\rm B}} W_2 &\preceq_{{\rm B}} W_1 \ATgood W_2.
\end{align*}
\end{corollary}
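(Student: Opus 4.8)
The plan is to read the corollary off of Theorem~\ref{thm:BroadClassOfFunctionalsPolarize} by translating its functional inequalities into the language of Blackwell dominance via the integral characterization of the convex order. The single tool needed is part~2 of the Blackwell--Sherman--Stein theorem (Theorem~\ref{thm:BSS}): for any two BICs $V$ and $V'$, we have $V' \preceq_{\rm B} V$ if and only if $\int_{[0,1]} f \, \d\sm_{V'} \le \int_{[0,1]} f \, \d\sm_{V}$, i.e. $\I_f(V') \le \I_f(V)$, for \emph{every} convex $f : [0,1] \to \Reals$. Thus each of the four dominance relations to be proved is equivalent to a one-sided inequality between induced functionals that must hold simultaneously for all convex $f$.

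First I would recall that Theorem~\ref{thm:BroadClassOfFunctionalsPolarize} supplies precisely these inequalities for symmetric $W_1, W_2$: for every convex $f$ it asserts
\[
\I_f(W_1 \ATbad W_2) \le \I_f(W_1) \wedge \I_f(W_2)
\quad\text{and}\quad
\I_f(W_1) \vee \I_f(W_2) \le \I_f(W_1 \ATgood W_2).
\]
In particular $\I_f(W_1 \ATbad W_2) \le \I_f(W_1)$ and $\I_f(W_1) \le \I_f(W_1 \ATgood W_2)$ hold for all convex $f$, and the same with $W_1$ replaced by $W_2$. Applying the BSS equivalence to each of these — the inequality $\I_f(W_1 \ATbad W_2) \le \I_f(W_1)$ for all convex $f$ gives $W_1 \ATbad W_2 \preceq_{\rm B} W_1$, the inequality $\I_f(W_1) \le \I_f(W_1 \ATgood W_2)$ gives $W_1 \preceq_{\rm B} W_1 \ATgood W_2$, and likewise for $W_2$ — assembles the two chains exactly as stated.

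There is essentially no obstacle here, since all the analytic content lives in Theorem~\ref{thm:BroadClassOfFunctionalsPolarize} and the corollary is merely its restatement in terms of stochastic degradation. The only point requiring care is the orientation of the BSS characterization: dominating a larger integral against every convex test function corresponds to being the \emph{more informative} (dominating) channel, so one must match $\I_f(W_1 \ATbad W_2) \le \I_f(W_1)$ with $W_1 \succeq W_1 \ATbad W_2$ (equivalently $W_1 \ATbad W_2 \preceq_{\rm B} W_1$) and not the reverse. Keeping the inequality directions consistent across the four cases then yields the claim immediately.
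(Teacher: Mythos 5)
Your proposal is correct and matches the paper's own argument, which likewise derives the corollary directly from Theorem~\ref{thm:BroadClassOfFunctionalsPolarize} together with part~2 of the Blackwell--Sherman--Stein theorem (Theorem~\ref{thm:BSS}). You simply spell out the orientation of the convex-order characterization that the paper leaves implicit, which is exactly the right care to take.
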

\begin{proof} Corollary~\ref{cor:BlackwellOrderingOfChannels} follows as a direct consequence of Theorem~\ref{thm:BroadClassOfFunctionalsPolarize} by invoking the Blackwell--Sherman--Stein theorem (Theorem~\ref{thm:BSS}).
\end{proof}

\begin{remark}[Alternative proof of Theorem~\ref{thm:BroadClassOfFunctionalsPolarize}] \em{ Theorem~\ref{thm:BroadClassOfFunctionalsPolarize} is established via a direct proof. Alternatively, as noted during the review of the present paper, one could observe that the Blackwell ordering of Corollary~\ref{cor:BlackwellOrderingOfChannels} holds for all symmetric $\BIC$s. Theorem~\ref{thm:BroadClassOfFunctionalsPolarize} then follows as a direct consequence of the Blackwell--Sherman--Stein theorem. This is true due to the presence of the ``if and only if'' statement (see Theorem~\ref{thm:BSS}). }
\end{remark}

\begin{remark}[Special cases of Theorem~\ref{thm:BroadClassOfFunctionalsPolarize}] \em{ From the induced functionals listed in Table~\ref{tbl:FunctionalsOfBICs}, the capacity $\I_f(W) = I(W)$ with $f(s) = 1 - h_2(s)$ was shown by \Arikan\ to exhibit the property of one-step polarization~\cite{arikan09}. An analogous result was shown for the Bhattacharyya parameter $\I_f(W) = -Z(W)$ with $f(s) = -2\sqrt{s(1-s)}$ in~\cite{arikan09}, and for Gallager's $E_0(\rho, W)$ parameter in~\cite[Lemma~4.5]{alsan_phd14}. Prior results emerge as special cases of Theorem~\ref{thm:BroadClassOfFunctionalsPolarize}. Moreover, Theorem~\ref{thm:BroadClassOfFunctionalsPolarize} implies that the property of one-step polarization holds for the Bayes error functionals $B_\lambda(W)$ and the squared maximal correlation $\rho^{2}_{\rm{max}}(W)$ described in Sec.~\ref{sec:InducedFunctionalsOfBICs}. To the best of our knowledge, such results have not been discussed previously. }
\end{remark}

\begin{remark}[Related proofs in modern literature] \em{ Related proofs linking the stochastic dominance and ordering of channels to the one-step polarization of functionals $\I_f$ for convex $f$ appear in~\cite[Ch.~4]{modern_coding_theory_book_2008} and~\cite[Lemma~6.16]{alsan_phd14}. As also described in Sec.~\ref{subsec:BlackwellMeasuresSymmetricBICs}, it is known that the Blackwell ordering is equivalent to the \textit{symmetric convex ordering} as introduced by Alsan in~\cite[Ch.~6]{alsan_phd14}. These independently-discovered results do not invoke the Blackwell--Sherman--Stein theorem. }
\end{remark}

\subsection{One-step polarization of $\I_f$: convex vs\ non-convex $f$}
All channel functionals $\I_f$ for convex $f$ in Table~\ref{tbl:FunctionalsOfBICs} polarize in each iteration on the class of output-symmetric $\BIC$s. This phenomenon is not guaranteed for non-convex $f$. Consider $f(s) = \psi_r(s)$, and $\I_f(W) = M_r(W)$, which represents the higher-order moments of the information density. As defined in Sec.~\ref{sec:InducedFunctionalsOfBICs}, $\psi_r(s) := s(1 + \log_2 s)^{r} + \bar{s} (1 + \log_2 \bar{s} )^{r}$. Figure~\ref{fig:PSI_FUNCTIONS} depicts that $\psi_1(s) = 1 - h_2(s)$ is convex on $[0, 1]$, and $\psi_2(s) = s(1 + \log_2 s)^{2} + \bar{s}(1 + \log_2 \bar{s} )^{2}$ is non-convex on $[0, 1]$. In the following example, it is shown that $M_2(W)$ does not polarize under \Arikan's transform, as according to Definition~\ref{def:PolarizationOfChannelFunctionals}.


\begin{figure}[t]
\begin{center}


\scalebox{1.0} 
{
\begin{pspicture}(0,-3.42)(5.2028127,3.82)

\rput(2.82, -0.05) {
\includegraphics[scale=0.512]{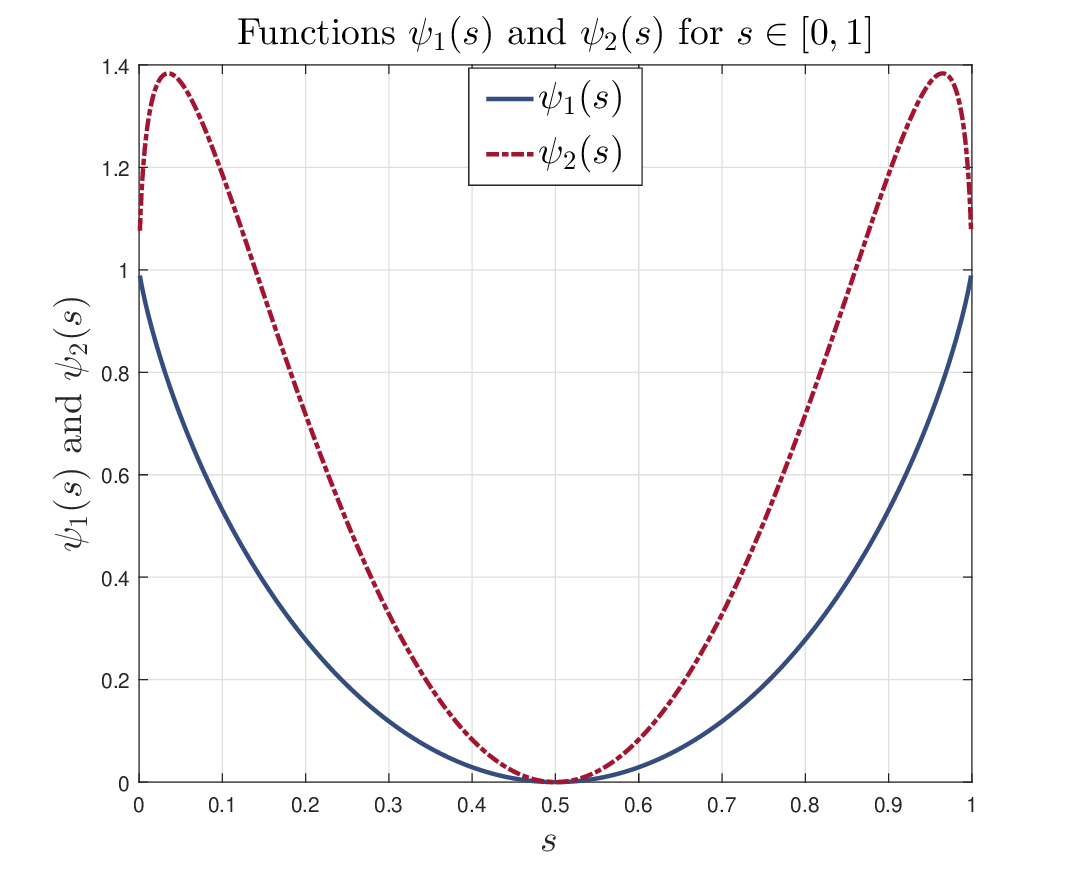}
}

\end{pspicture}
}

\end{center}
\caption{Functions $\psi_1(s)$ and $\psi_2(s)$ for $s \in [0,1]$.} \label{fig:PSI_FUNCTIONS}
\end{figure}

\begin{example}[Counter-example for one-step polarization for non-convex $f$] {\em Let $W = \BSC(\GAM)$ and consider the channels $W \ATbad W$ and $W \ATgood W$. Applying Lemma~\ref{lemma:StructuralPolarizedBSC},
\begin{alignat}{3}
& \BSC(\GAM) \ATbad \BSC(\GAM) && \equiv \BSC(\GAM \star \GAM), && \notag \\
& \BSC(\GAM) \ATgood \BSC(\GAM) && \equiv (1-\GAM \star \GAM) && \BSC\left(\frac{\GAM^{2}}{1 - \GAM \star \GAM}\right) \notag \\
& \quad \quad \quad && \quad \quad \oplus (\GAM \star \GAM) && \BSC\left(\frac{1}{2}\right). \notag
\end{alignat}
Due to Corollary~\ref{cor:BlackwellOrderingOfChannels}, the Blackwell ordering holds: $W \ATbad W \preceq_{{\rm B}} W \preceq_{{\rm B}} W \ATgood W$. However, the second moments $M_2(W)$, $M_2(W \ATbad W)$, and $M_2(W \ATgood W)$ do not satisfy Definition~\ref{def:PolarizationOfChannelFunctionals} for polarization. The following values are computed via Eqn.~\eqref{eqn:InducedFunctionalSymmBIC1}, and also Eqn.~\eqref{eqn:M2MomentForBSC}. For $p = 0.05$,
\begin{align}
M_2(\BSC(0.05)) = 1.3664, \notag \\
M_2(\BSC(0.05) \ATbad \BSC(0.05)) = 1.2085, \notag \\
M_2(\BSC(0.05) \ATgood \BSC(0.05)) = 1.0359. \notag
\end{align}
It is evident that $M_2(W \ATgood W)$ is not greater than $M_2(W)$. The functional $\I_f(W) = M_2(W)$ does not exhibit one-step polarization, due to the non-convexity of $f(s) = \psi_2(s)$.}
\end{example}

\section{Properties of the polarization process}\label{sec:PolarizationRandomProcesses}

An important method of analyzing the successive polarization of channels is through a certain random process referred to as the \emph{polarization process}. Starting from a $\BIC$ $(\sY, W)$, one level of polarization yields either $W \ATbad W$ or $W \ATgood W$. As noted by \Arikan~\cite{arikan09}, a random path over $n$ levels of polarization leads to randomly selecting a channel $W_b$ where $b \in \{0,1\}^{n+1}$ in Definition~\ref{def:SuccessivePolarization}:

\begin{definition}[Channel polarization --- random processes]\label{def:RandomProcessOfPolarization}
Consider a $\BIC$ $(\sY, W)$. Let $\{B_n\}^\infty_{n=1}$ be a sequence of i.i.d.\ $\Bernoulli(1/2)$ random variables. Let $W_0 = W$, and
$$
W_n = \begin{cases}
W_{n-1} \ATbad W_{n-1}, & \text{if $B_n = 0$} \\
W_{n-1} \ATgood W_{n-1}, & \text{if $B_n = 1$}
\end{cases}
$$
for $n > 0$. Define the random processes $\{I_n\}^\infty_{n=0}$ and $\{Z_n\}^\infty_{n=0}$ via $I_n = I(W_n)$ and $Z_n = Z(W_n)$. In general, a random process $\{\I_f(W_n)\}_{n=0}^{\infty}$ is obtained for any induced functional listed in Table~\ref{tbl:FunctionalsOfBICs}.
\end{definition}

\begin{example}[Properties of $\{I_n\}_{n=0}^{\infty}$ and $\{Z_n\}_{n=0}^{\infty}$] {\em
As shown in~\cite{arikan09}, for the class of output-symmetric $\BIC$s, $\{I_n\}$ is a nonnegative martingale, while $\{Z_n\}$ is a nonnegative supermartingale, both with respect to the natural filtration generated by $\{B_n\}$. More precisely,
\begin{align}
\E\left[ I_{n+1} \bigl| B_1, B_2, \ldots, B_n \right] & = I_{n}, \notag \\
\E\left[ Z_{n+1} \bigl| B_1, B_2, \ldots, B_n \right] & \le Z_{n}. \notag
\end{align}
In order to prove the above properties, consider any two $\BIC$s $(\sY, W)$ and $(\sY^{\prime}, W^{\prime})$ from the class of output-symmetric $\BIC$s. As first noted by~\cite{arikan09},
\begin{align}
I(W \ATbad W^{\prime}) + I(W \ATgood W^{\prime}) & = I(W) + I(W^{\prime}), \notag \\
Z(W \ATbad W^{\prime}) + Z(W \ATgood W^{\prime}) & \le Z(W) + Z(W^{\prime}). \notag
\end{align}
The first relation is due to the conservation of mutual information. The second relation is due to the fact that $Z(W \ATbad W^{\prime}) \leq Z(W) + Z(W^{\prime}) - Z(W)Z(W^{\prime})$ and $Z(W \ATgood W^{\prime}) = Z(W)Z(W^{\prime})$, a result first observed by~\Arikan~\cite[Prop.~5]{arikan09}. }
\end{example}

\subsection{The random processes $\{\I_f(W_n)\}_{n=0}^{\infty}$}

As detailed in Definition~\ref{def:RandomProcessOfPolarization}, a random polarization process $\{\I_f(W_n)\}_{n=0}^{\infty}$ is defined for any induced functional $\I_f$. In order to analyze the properties of the random process, the following relations are introduced:

\begin{definition}[$f$-relations]\label{def:PreservingImprovingDecreasingRelations} Consider two arbitrary $\BIC$s $(\sY, W)$ and $(\sY^{\prime}, W^{\prime})$ from a given class $\cW$ of $\BIC$s. Let $\cW$ be closed under the polar transform operations; i.e., both $W \ATbad W^{\prime} \in \cW$ and $W \ATgood W^{\prime} \in \cW$. Let $f : [0,1] \to \Reals$ be a continuous function. We say that the polarization process on the class $\cW$ is:
\begin{align}
& \mbox{$f$-preserving if for all $W, W^{\prime} \in \cW$,} \notag \\
& \quad \I_f(W \ATbad W^{\prime}) + \I_f(W \ATgood W^{\prime}) = \I_f(W) + \I_f(W^{\prime}); \label{eqn:fpreserving} \\
& \mbox{$f$-improving if for all $W, W^{\prime} \in \cW$,} \notag && \\
& \quad \I_f(W \ATbad W^{\prime}) + \I_f(W \ATgood W^{\prime}) \geq \I_f(W) + \I_f(W^{\prime}); \label{eqn:fimproving} \\
& \mbox{$f$-decreasing if for all $W, W^{\prime} \in \cW$,} \notag && \\
& \quad \I_f(W \ATbad W^{\prime}) + \I_f(W \ATgood W^{\prime}) \leq \I_f(W) + \I_f(W^{\prime}). \label{eqn:fdecreasing}
\end{align}
\end{definition}

Consider the above conditions and bounded random processes. If~\eqref{eqn:fpreserving} holds for all $W, W^{\prime} \in \cW$, then the random process $\{\I_f(W_n)\}_{n=0}^{\infty}$ is a martingale. If~\eqref{eqn:fimproving} holds, $\{\I_f(W_n)\}_{n=0}^{\infty}$ is a submartingale. Similarly, if~\eqref{eqn:fdecreasing} holds, $\{\I_f(W_n)\}_{n=0}^{\infty}$ is a supermartingale. If $\cW$ is the class of all output-symmetric $\BIC$s, the following theorem shows that it suffices to verify the $f$-relations only on the subclass consisting of $\BSC$s:

\begin{figure*}[!t]
  \centering
    \subfloat[$\cR_{{\rm NP}}\big(\frac{1}{2}W \oplus \frac{1}{2} W^{\prime}\big)$]{{\includegraphics[width=0.352\textwidth]{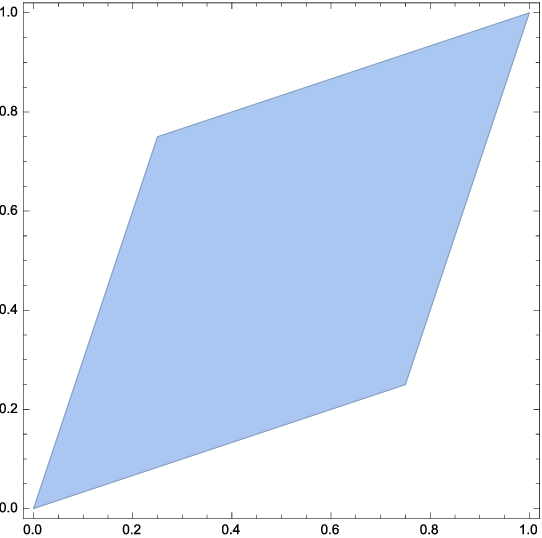} }}%
    \qquad \qquad
    \subfloat[${\cR}_{{\rm NP}}\big(\frac{1}{2}(W \ATbad W^{\prime}) \oplus \frac{1}{2}(W \ATgood W^{\prime})\big)$]{{\includegraphics[width=0.352\textwidth]{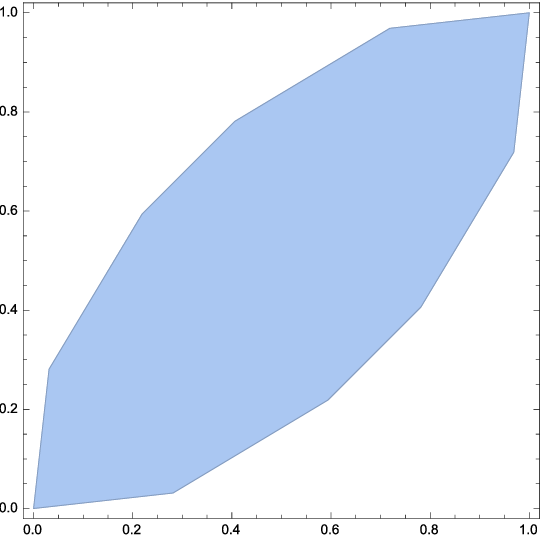} }}%
    \caption{The Neyman-Pearson regions considered in Example~\ref{example:CounterArgumentConjecture} in the case $W \equiv W^{\prime} \equiv \BSC(1/4)$.}%
    \label{fig:counterex}%
\end{figure*}

\begin{theorem}[$f$-relations for all symmetric $\BIC$s]\label{thm:fRelationsForOutputSymmetricBICs} The polarization process is $f$-preserving, $f$-improving, or $f$-decreasing as defined in Definition~\ref{def:PreservingImprovingDecreasingRelations} on the class of output-symmetric $\BIC$s if and only if \eqref{eqn:fpreserving},~\eqref{eqn:fimproving}, or~\eqref{eqn:fdecreasing} holds respectively for all pairs $(W, W^{\prime}) = (\BSC(\GAM),\BSC(\NU))$, $(\GAM, \NU) \in [0,\frac{1}{2}]\times [0,\frac{1}{2}]$.
\end{theorem}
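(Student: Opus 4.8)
The plan is to prove the two directions separately. The \emph{only if} direction is immediate: every $\BSC(\GAM)$ is itself an output-symmetric BIC, so any $f$-relation holding for all pairs of symmetric BICs holds in particular for all pairs of $\BSC$s. The content of the theorem is therefore the \emph{if} direction, which I would establish by reducing the relation for arbitrary symmetric BICs to the assumed relation for $\BSC$s through the structural decomposition of Theorem~\ref{thm:BIOSdecomp}.

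First I would fix two arbitrary output-symmetric BICs $W_1,W_2$ and invoke Theorem~\ref{thm:BIOSdecomp} to write $W_1 \equiv \bigoplus_{i=1}^m \lambda_i \BSC(\GAM_i)$ and $W_2 \equiv \bigoplus_{j=1}^k \mu_j \BSC(\NU_j)$, where I may assume $\GAM_i,\NU_j \in [0,\frac{1}{2}]$ without loss of generality since $\BSC(\GAM) \equiv \BSC(\bar{\GAM})$. The main step is to express each term in Def.~\ref{def:PreservingImprovingDecreasingRelations} as a (bi)linear combination of the corresponding $\BSC$ quantities. Because $\I_f(W) = \int_{[0,1]} f \d\sm_W$ and the Blackwell measure of a compound channel is the mixture of its constituent measures, Eqn.~\eqref{eqn:InducedFunctionalSymmBIC1} gives $\I_f(W_1) = \sum_i \lambda_i \I_f(\BSC(\GAM_i))$ and likewise $\I_f(W_2) = \sum_j \mu_j \I_f(\BSC(\NU_j))$. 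For the two polarized channels I would apply Corollary~\ref{corollary:BIOSPolarization}: formulas \eqref{eq:BIOSbad} and \eqref{eq:BIOSgood} exhibit $\sm_{W_1 \ATbad W_2}$ and $\sm_{W_1 \ATgood W_2}$ as $\lambda_i\mu_j$-weighted mixtures of the Blackwell measures of $\BSC(\GAM_i) \ATbad \BSC(\NU_j)$ and $\BSC(\GAM_i) \ATgood \BSC(\NU_j)$, so integrating $f$ yields
\begin{align*}
\I_f(W_1 \ATbad W_2) &= \sum_{i,j} \lambda_i \mu_j\, \I_f\bigl(\BSC(\GAM_i) \ATbad \BSC(\NU_j)\bigr), \\
\I_f(W_1 \ATgood W_2) &= \sum_{i,j} \lambda_i \mu_j\, \I_f\bigl(\BSC(\GAM_i) \ATgood \BSC(\NU_j)\bigr).
\end{align*}

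With these identities I would treat the three cases uniformly. In the $f$-preserving case I sum the assumed $\BSC$ identity
$$
\I_f\bigl(\BSC(\GAM_i) \ATbad \BSC(\NU_j)\bigr) + \I_f\bigl(\BSC(\GAM_i) \ATgood \BSC(\NU_j)\bigr) = \I_f(\BSC(\GAM_i)) + \I_f(\BSC(\NU_j))
$$
against the weights $\lambda_i\mu_j$, obtaining $\I_f(W_1 \ATbad W_2) + \I_f(W_1 \ATgood W_2) = \sum_{i,j}\lambda_i\mu_j\bigl[\I_f(\BSC(\GAM_i)) + \I_f(\BSC(\NU_j))\bigr]$; the normalizations $\sum_i \lambda_i = \sum_j \mu_j = 1$ then collapse the double sum to $\I_f(W_1) + \I_f(W_2)$. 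The $f$-improving and $f$-decreasing cases are identical, with the $\BSC$-level equality replaced by $\ge$ or $\le$; since the weights $\lambda_i\mu_j$ are nonnegative, the inequality is preserved termwise and hence after summation.

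The argument is essentially bookkeeping once Theorem~\ref{thm:BIOSdecomp} and Corollary~\ref{corollary:BIOSPolarization} are in place, so the only point needing care---and the closest thing to an obstacle---is checking that the cross terms reassemble correctly: the sum $\sum_{i,j}\lambda_i\mu_j \I_f(\BSC(\GAM_i))$ must collapse to $\sum_i \lambda_i \I_f(\BSC(\GAM_i)) = \I_f(W_1)$, and symmetrically for $W_2$, which uses precisely $\sum_j \mu_j = 1$ and $\sum_i \lambda_i = 1$. The restriction of the hypothesis to parameters in $[0,\frac{1}{2}]$ is harmless: it suffices to note that each decomposition may be taken with error parameters reflected into $[0,\frac{1}{2}]$, so every pair $(\GAM_i,\NU_j)$ lies in $[0,\frac{1}{2}]\times[0,\frac{1}{2}]$ and the hypothesis applies verbatim.
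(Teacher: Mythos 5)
Your proposal is correct and follows essentially the same route as the paper's proof: the ``only if'' direction is noted as trivial, and the ``if'' direction reduces to $\BSC$ pairs via Theorem~\ref{thm:BIOSdecomp} and Corollary~\ref{corollary:BIOSPolarization}, applies the assumed $\BSC$-level relation termwise under the nonnegative weights $\lambda_i\mu_j$, and collapses the double sum using $\sum_i\lambda_i=\sum_j\mu_j=1$. Your explicit remark that the error parameters can be reflected into $[0,\tfrac{1}{2}]$ is a small point of care the paper leaves implicit.
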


\begin{proof}
Consider the $f$-improving relation for a class of $\BIC$s and induced functional $\I_f(\cdot)$. If~\eqref{eqn:fimproving} holds for all symmetric $\BIC$s, then it holds for all $\BSC$s. To prove the converse, fix two symmetric $\BIC$s $W, W^{\prime}$. By Theorem~\ref{thm:BIOSdecomp}, the following channel decompositions exist:
\begin{align}
W          & \equiv  \bigoplus^m_{i=1} \lambda_i \BSC(\GAM_i), \notag \\
W^{\prime} & \equiv  \bigoplus^k_{j=1} \mu_j \BSC(\NU_j). \notag
\end{align}
By Corollary~\ref{corollary:BIOSPolarization}, the Blackwell measures of the transformed channels $W \ATbad W^{\prime}$ and $W \ATgood W^{\prime}$ are given by
\begin{align}
\sm_{W \ATbad W^{\prime}} & = \sum^m_{i=1}\sum^k_{j=1}\lambda_i\mu_j \sm_{\BSC(\GAM_i \star \NU_j)}, \notag \\
\sm_{W \ATgood W^{\prime}}     & = \sum^m_{i=1}\sum^k_{j=1}\lambda_i\mu_j \sm_{\BSC(\GAM_i) \times \BSC(\NU_j)}. \notag
\end{align}
Consequently, using the definitions for induced functionals in Sec.~\ref{sec:InducedFunctionalsOfBICs}, and the assumption that~\eqref{eqn:fimproving} holds for all $\BSC$s, we have
\begin{align*}
	&\I_f(W \ATbad W^{\prime}) + \I_f(W \ATgood W^{\prime}) \\
	&=\int f \d\sm_{W \ATbad W^{\prime}} + \int f\d\sm_{W \ATgood W^{\prime}} \\
	&= \sum^m_{i=1}\sum^k_{j=1} \lambda_i \mu_j \int f\left(\d\sm_{\BSC(\GAM_i \star \NU_j)} +\d\sm_{\BSC(\GAM_i) \times \BSC(\NU_j)} \right) \\
	&= \sum^m_{i=1}\sum^k_{j=1} \lambda_i \mu_j \Big( \I_f(\BSC(\GAM_i \! \star \! \NU_j)) \!  +  \! \I_f(\BSC(\GAM_i) \!  \times \! \BSC(\NU_j)) \! \Big) \\
	&\ge \sum^m_{i=1}\sum^k_{j=1}\lambda_i \mu_j \Big(\I_f(\BSC(\GAM_i)) + \I_f(\BSC(\NU_j)) \Big) \\
	&= \sum^m_{i=1} \lambda_i \I_f(\BSC(\GAM_i)) + \sum^k_{j=1}\mu_j \I_f(\BSC(\NU_j)) \\
	&= \I_f(W) + \I_f(W^{\prime}). 
\end{align*}
Theorem~\ref{thm:fRelationsForOutputSymmetricBICs} is established in an identical manner for the $f$-preserving and $f$-decreasing relations.
\end{proof}

The following corollary follows directly from Theorem~\ref{thm:fRelationsForOutputSymmetricBICs} and Lemma~\ref{lemma:StructuralPolarizedBSC}. Consider if $f(s) = f(\bar{s})$. Then $\I_f(\BSC(p)) = f(p)$, and Theorem~\ref{thm:fRelationsForOutputSymmetricBICs} may be presented in a simplified form via functional inequalities. We note that Theorem~\ref{thm:fRelationsForOutputSymmetricBICs} can still be written via functional inequalities without assuming $f(s) = f(\bar{s})$.

\begin{corollary}[Functional inequalities]\label{corollary:NewMartingales}
Let $f: [0,1] \rightarrow \mathbb{R}$ be a continuous function. Define the following gap function for a particular $f$, and $p, q \in [0, \frac{1}{2}]$:
\begin{IEEEeqnarray}{rCl}
\textsc{gap}_{f}(p, q) & := & \I_f(\BSC(p)) + \I_f(\BSC(q)) \notag \\
& & -\> \I_f(\BSC(p) \ATbad \BSC(q)) \notag \\
& & -\> \I_f(\BSC(p) \ATgood \BSC(q)). \label{eqn:OriginalDefinitionGapFunction}
\end{IEEEeqnarray}
\noindent As a corollary to Theorem~\ref{thm:fRelationsForOutputSymmetricBICs}, the following criterion correspond to Eqns.~\eqref{eqn:fpreserving},~\eqref{eqn:fimproving}, and~\eqref{eqn:fdecreasing}. Over the class of symmetric $\BIC$s, the polarization process is:
\begin{IEEEeqnarray}{ll}
& \mbox{$f$-preserving if and only if for all $p, q  \in [0, \frac{1}{2}]$,} \notag \\
& \qquad \qquad \textsc{gap}_{f}(p, q) = 0; \label{eqn:fpreservingFunctionalIneq} \\
& \mbox{$f$-improving if and only if for all $p, q  \in [0, \frac{1}{2}]$,} \notag \\
& \qquad \qquad \textsc{gap}_{f}(p, q) \leq 0; \label{eqn:fimprovingFunctionalIneq} \\
& \mbox{$f$-decreasing if and only if for all $p, q \in [0, \frac{1}{2}]$,} \notag \\
& \qquad \qquad \textsc{gap}_{f}(p, q) \geq 0. \label{eqn:fdecreasingFunctionalIneq}
\end{IEEEeqnarray}
If $f(s) = f(\bar{s})$, then Eqn.~\eqref{eqn:OriginalDefinitionGapFunction} simplifies to:
\begin{IEEEeqnarray}{rCl}
\textsc{gap}_{f}(p, q) & = & f(p) + f(q) - f(p \star q) \notag \\
& & -\> (1 - p \star q)f(\alpha) - (p \star q)f(\beta). \label{eqn:DefinitionGapFunctionCorollary}
\end{IEEEeqnarray}
The parameters $\alpha := \frac{ \GAM \NU }{ 1 - \GAM \star \NU }$ and $\beta := \frac{ \bar{\GAM} \NU }{ \GAM \star \NU }$ were defined in Lemma~\ref{lemma:StructuralPolarizedBSC}, where $(p, q) \neq (0,0)$ in the degenerate case. Recall that $p \star q \in [0, \frac{1}{2}]$, $\alpha \in [0, \frac{1}{2}]$ and $\beta \wedge \bar{\beta} \in [0, \frac{1}{2}]$. Note that if $f(s) = f(\bar{s})$, then $f(\beta) = f(\bar{\beta}) = f(\beta \wedge \bar{\beta})$.
\end{corollary}

\subsection{$f$-relations: The case of convex $f$}

It is tempting to conjecture that an $f$-relation such as the $f$-improving relation given in Eqn.~\eqref{eqn:fimproving} holds for all convex $f$ on the class of output-symmetric $\BIC$s. However, the following counter-example proves that this conjecture is false. Similarly, it is tempting to think that the convexity of $f$ would greatly simplify the functional inequalities in Corollary~\ref{corollary:NewMartingales}. However, this is not the case. Whether $f$ is convex (or non-convex) does not directly imply an $f$-relation.

\begin{example}[Counter-argument for conjecture]\label{example:CounterArgumentConjecture} {\em
Suppose that the $f$-improving relation in Eqn.~\eqref{eqn:fimproving} were true for all convex $f$. More precisely, assume that $\frac{1}{2}\I_f(W \ATbad W^{\prime}) + \frac{1}{2}\I_f(W \ATgood W^{\prime}) \geq \frac{1}{2}\I_f(W) + \frac{1}{2}\I_f(W^{\prime})$ for all convex $f$. According to Theorem~\ref{thm:BSS}, the Blackwell--Sherman--Stein theorem, the channel $\frac{1}{2}(W \ATbad W^{\prime}) \oplus \frac{1}{2}(W \ATgood W^{\prime})$ would dominate the channel $\frac{1}{2}W \oplus \frac{1}{2}W^{\prime}$. However, this conjecture turns out to be false. Figure~\ref{fig:counterex} shows the Neyman--Pearson regions of $\frac{1}{2}(W \ATbad W^{\prime}) \oplus \frac{1}{2}(W \ATgood W^{\prime})$ and $\frac{1}{2}W \oplus \frac{1}{2}W^{\prime}$ when $W \equiv W^{\prime} \equiv \BSC(1/4)$. It is evident that the latter is not a subset of the former. By Theorem~\ref{thm:NeymanPearsonCriterion}, the Neyman--Pearson criterion for Blackwell dominance, we arrive at a contradiction.}
\end{example}

\begin{example}[Counter-example for convex $f$] {\em Consider $f_\lambda(s) = \bar{\lambda} \wedge \lambda - (2\bar{\lambda} s) \wedge (2\lambda\bar{s})$ for $\lambda \in [0,1]$. Specifically, consider $\lambda = \frac{1}{3}$, and $f \equiv f_{1/3}(s)$. In this case, $f(s)$ is convex, and $f(s) \neq f(\bar{s})$. Define the following real-valued function for two $\BIC$s $W$ and $W'$:
\begin{IEEEeqnarray}{rCl}
\zeta(W, W') & := & \I_{f}(W) + \I_{f}(W') \notag \\
& & -\> \I_{f}(W \ATbad W') - \I_{f}(W \ATgood W'). \notag
\end{IEEEeqnarray}
A straightforward calculation yields:
\begin{IEEEeqnarray*}{rCl}
\zeta(\BSC(1/4), \BSC(1/4)) & > & 0, \\
\zeta(\BSC(3/8), \BSC(3/8)) & < & 0.
\end{IEEEeqnarray*}
Equivalently, as summarized by Eqn.~\eqref{eqn:OriginalDefinitionGapFunction}, $\textsc{gap}_{f}(p, q) > 0$ for $p = q = \frac{1}{4}$, but $\textsc{gap}_{f}(p, q) < 0$ for $p = q = \frac{3}{8}$. Thus, although $f_{1/3}(s)$ is a convex function, the $f$-relations of Definition~\ref{def:PreservingImprovingDecreasingRelations} do not hold consistently for all $\BSC$ pairs. According to Theorem~\ref{thm:fRelationsForOutputSymmetricBICs}, the random process $\{\I_f(W_n)\}_{n=0}^{\infty}$ is not a martingale, submartingale, or supermartingale for the class of symmetric $\BIC$s.}
\end{example}


\subsection{$f$-relations: The case of non-convex $f$}
Consider $f \equiv \psi_2(s)$ where $\psi_2(s) := s(1 + \log_2(s))^{2} + \bar{s}(1 + \log_2 \bar{s})^{2}$. As plotted in Figure~\ref{fig:PSI_FUNCTIONS}, $f$ is a non-convex function, and $f(s) = f(\bar{s})$. As proven in Sec.~\ref{subsec:RthMomentInformationDensity}, $\I_f(W) = M_2(W)$, which is the second moment of information density. The following example shows numerically that the corresponding polarization process satisfies the $f$-decreasing relation of Eqn.~\eqref{eqn:fdecreasing} for all symmetric $\BIC$s.
\begin{example}[Polarization process for $M_2(W)$]\label{example:M2Gap} {\em
Consider $f \equiv \psi_2(s)$. In order to prove the $f$-decreasing property of Eqn.~\eqref{eqn:fdecreasing}, the following must hold for all pairs of symmetric $\BIC$s $(\sY, W)$ and $(\sY^{\prime}, W^{\prime})$:
\begin{align}
M_2(W \ATbad W^{\prime}) + M_2(W \ATgood W^{\prime}) & \leq M_2(W) + M_2(W^{\prime}). \notag
\end{align}
According to Theorem~\ref{thm:fRelationsForOutputSymmetricBICs}, it suffices to consider the space of $\BSC$s. Therefore, consider $W \equiv \BSC(p)$ and $W^{\prime} \equiv \BSC(q)$ for all $p, q \in [0, \frac{1}{2}]$ where $(p, q) \neq (0, 0)$. Then Eqn.~\eqref{eqn:OriginalDefinitionGapFunction} takes the following form for $f \equiv \psi_2(s)$:
\begin{IEEEeqnarray}{rCl}
\textsc{gap}_{f}(p, q) & = & M_2(\BSC(p)) + M_2(\BSC(q)) \notag \\
& & -\> M_2(\BSC(p) \ATbad \BSC(q)) \notag \\
& & -\> M_2(\BSC(p) \ATgood \BSC(q)). \notag
\end{IEEEeqnarray}
For $f \equiv \psi_2(s)$, $f(s) = f(\bar{s})$, and Corollary~\ref{corollary:NewMartingales} is applicable. It remains to prove the functional inequality of Eqn.~\eqref{eqn:fdecreasingFunctionalIneq} given in Corollary~\ref{corollary:NewMartingales}. The gap function simplifies to Eqn.~\eqref{eqn:DefinitionGapFunctionCorollary}, which is written as follows for $f \equiv \psi_2(s)$:
\begin{IEEEeqnarray}{rCl}
\textsc{gap}_{f}(p, q) & = & \psi_2(p) + \psi_2(q) - \psi_2(p \star q) \notag \\
& & -\> (1 - p \star q)\psi_2(\alpha) - (p \star q)\psi_2(\beta). \label{eqn:Psi2GapFunctionSpecialized}
\end{IEEEeqnarray}
The parameters $\alpha := \frac{ \GAM \NU }{ 1 - \GAM \star \NU }$ and $\beta := \frac{ \bar{\GAM} \NU }{ \GAM \star \NU }$ were defined in Lemma~\ref{lemma:StructuralPolarizedBSC}. Note that $\textsc{gap}_{f}(\GAM, \NU) = \textsc{gap}_{f}(\NU, \GAM)$. In addition, along the boundaries, $\textsc{gap}_{f}(\GAM, 0) = 0$, $\textsc{gap}_{f}(0, \NU) = 0$, $\textsc{gap}_{f}(\GAM, \frac{1}{2}) = 0$, and $\textsc{gap}_{f}(\frac{1}{2}, \NU) = 0$. Figure~\ref{fig:GapResults} provides numerical evidence that $\textsc{gap}_{f}(\GAM, \NU) \geq 0$ for all $\BSC$ pairs. Theorem~\ref{thm:fRelationsForOutputSymmetricBICs} and Corollary~\ref{corollary:NewMartingales} imply the same result holds for all symmetric $\BIC$s.}
\end{example}

\begin{remark} \em{ Theorem~\ref{thm:fRelationsForOutputSymmetricBICs} and Corollary~\ref{corollary:NewMartingales} provide a feasible numerical approach to analyze all channel functionals $\I_f$ listed in Table~\ref{tbl:FunctionalsOfBICs}, as well as additional functionals not listed. While Example~\ref{example:M2Gap} shows numerically that $\textsc{gap}_{f}(\GAM, \NU) \geq 0$, in this particular example, the exact analytical proof of the inequality requires a lengthy analysis. We refer the reader to the analytical proofs by \Arikan\ dedicated to the following result that is closely related~\cite[Theorem~1]{arikan_varentropy_2016}:
\begin{align}
V(W \ATbad W^{\prime}) + V(W \ATgood W^{\prime}) & \leq V(W) + V(W^{\prime}). \label{eqn:VarEntropyDecreases}
\end{align}
The channel dispersion $V(W) := M_2(W) - (I(W))^{2}$ assumes a uniform input distribution. The result by \Arikan\ in~\cite{arikan_varentropy_2016} for {\em varentropy} applies more generally to so-called {\em binary data elements} incorporating the input distribution as well as each channel's conditional distribution.
}
\end{remark}

\section{A new supermartingale}\label{sec:NewSubSuperMartingales}

The necessary and sufficient condition given in Theorem~\ref{thm:fRelationsForOutputSymmetricBICs} provides a helpful reduction from the space of symmetric $\BIC$s to the space of $\BSC$s, in order to prove the existence of new submartingales and supermartingales. Deriving one of the functional inequalities of Corollary~\ref{corollary:NewMartingales} is a method of proof. Using this method, we prove that the polarization process associated with the squared Hirschfeld-Gebelein-R\'enyi maximal correlation parameter $\rho^{2}_{\rm{max}}(W)$ is a supermartingale.

\begin{figure}[t]
\begin{center}


\scalebox{1.0} 
{
\begin{pspicture}(0,-2.82)(5.92028127,2.82)

\rput(2.82, -0.25) {
\includegraphics[scale=0.52]{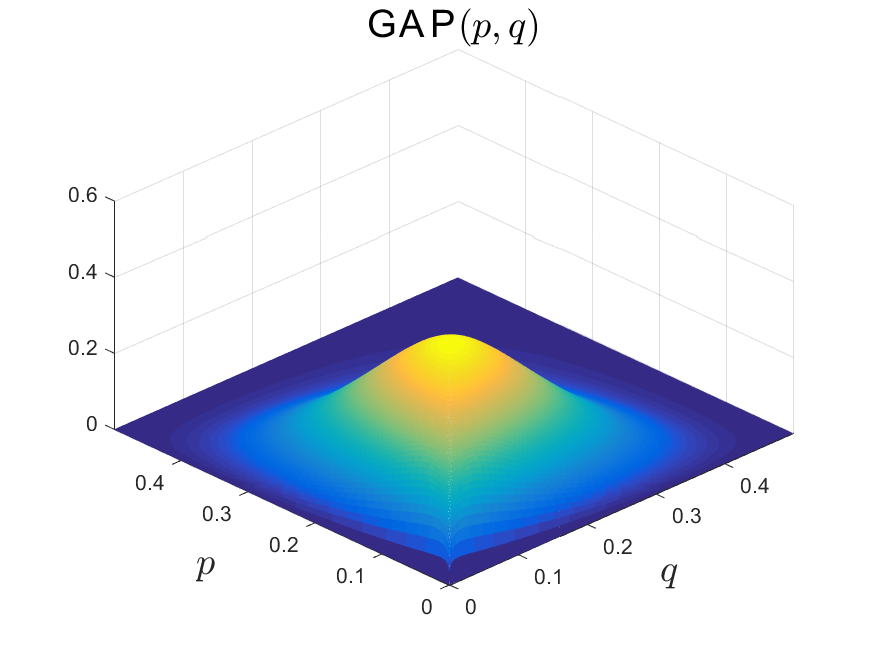}
}

\end{pspicture}
}

\end{center}
\caption{ The symmetric gap function $\textsc{gap}_{f}(\GAM, \NU)$ with $f \equiv \psi_2(s)$, as detailed in Eqn.~\eqref{eqn:Psi2GapFunctionSpecialized} of Example~\ref{example:M2Gap}. } \label{fig:GapResults}
\end{figure}


\begin{theorem}[Supermartingale associated with $\rho^{2}_{\rm{max}}(W)$]\label{thm:MaximalCorrelationSuperMG}
Consider $f \equiv f(s) = (2s - 1)^{2}$. As proven in Sec.~\ref{subsec:SquaredMaxiximalCorrelation}, for a $\BIC$ $(\sY, W)$, the induced functional $\I_f(W) = \rho^{2}_{\rm{max}}(W)$. For the class of symmetric $\BIC$s, the polarization process $\{\I_f(W_n)\}_{n=0}^{\infty}$ defined in Definition~\ref{def:RandomProcessOfPolarization} is a supermartingale, and converges almost surely on the interval $[0, 1]$.
\end{theorem}

\begin{proof} In order to prove that $\{\I_f(W_n)\}_{n=0}^{\infty}$ is a supermartingale, the $f$-decreasing property must hold for all pairs of symmetric $\BIC$s $(\sY, W)$ and $(\sY^{\prime}, W^{\prime})$:
\begin{align}
\rho^{2}_{\rm{max}}(W \ATbad W^{\prime}) + \rho^{2}_{\rm{max}}(W \ATgood W^{\prime}) & \leq \rho^{2}_{\rm{max}}(W) + \rho^{2}_{\rm{max}}(W^{\prime}). \notag
\end{align}
According to Theorem~\ref{thm:fRelationsForOutputSymmetricBICs}, it suffices to consider the space of $\BSC$s. Along these lines, consider $W \equiv \BSC(p)$ and $W^{\prime} \equiv \BSC(q)$ for all $p, q \in [0, \frac{1}{2}]$ where $(p, q) \neq (0, 0)$. Then Eqn.~\eqref{eqn:OriginalDefinitionGapFunction} takes the following form for $f \equiv f(s) = (2s-1)^{2}$:
\begin{IEEEeqnarray}{rCl}
\textsc{gap}_{f}(p, q) & = & \rho^{2}_{\rm{max}}(\BSC(p)) + \rho^{2}_{\rm{max}}(\BSC(q)) \notag \\
& & -\> \rho^{2}_{\rm{max}}(\BSC(p) \ATbad \BSC(q)) \notag \\
& & -\> \rho^{2}_{\rm{max}}(\BSC(p) \ATgood \BSC(q)). \notag
\end{IEEEeqnarray}
It remains to prove the functional inequality of Eqn.~\eqref{eqn:fdecreasingFunctionalIneq} given in Corollary~\ref{corollary:NewMartingales}. It is evident that $f(s) = f(\bar{s})$. Hence, the gap function simplifies to Eqn.~\eqref{eqn:DefinitionGapFunctionCorollary}, which is written as follows for $f \equiv f(s) = (2s-1)^{2}$:
\begin{IEEEeqnarray}{rCl}
\textsc{gap}_{f}(p, q) & = & f(p) + f(q) - f(p \star q) \notag \\
& & -\> (1 - p \star q)f(\alpha) - (p \star q)f(\beta). \label{eqn:GapFunctionMaxCorrDefinition}
\end{IEEEeqnarray}
The parameters $\alpha := \frac{ \GAM \NU }{ 1 - \GAM \star \NU }$ and $\beta := \frac{ \bar{\GAM} \NU }{ \GAM \star \NU }$ were defined in Lemma~\ref{lemma:StructuralPolarizedBSC}. After a series of algebraic simplifications provided in Appendix~\ref{appendix:ProofNewSuperMGMaximalCorrelations}, it can be shown that
\begin{align}
\textsc{gap}_{f}(p, q) & = \frac{4pq\bar{p}\bar{q}(2p-1)^{2} (2q-1)^{2}}{(1 - p \star q)(p \star q)}. \label{eqn:GapFunctionMaxCorrelation}
\end{align}
Both the numerator and denominator of Eqn.~\eqref{eqn:GapFunctionMaxCorrelation} are positive quantities. Hence, $\textsc{gap}(p, q) \geq 0$ for all $p, q \in [0, \frac{1}{2}]$. Since the squared maximal correlation parameter is bounded in the interval $[0, 1]$, by standard arguments for the convergence of supermartingales, the polarization process $\{\I_f(W_n)\}_{n=0}^{\infty}$ converges almost surely on $[0, 1]$.
\end{proof}

\begin{remark}[Alternative methods to prove convergence] \em{ As discussed in Sec.~\ref{subsec:RelatedPriorWork}, the martingale conditions are not required to prove the convergence of random processes. The authors of~\cite{alsan_telatar_simple_proof_2016} provide a simple proof of polarization that avoids the explicit use of martingales. }
\end{remark}

\begin{remark}[Maximal correlations] \em{ By applying alternative methods, the first author of the present paper showed in~\cite{goela_isit_2014} that the polarization process associated with $\rho^{2}_{\rm{max}}(W)$ converges almost surely on $[0, 1]$ to the endpoints of the interval. However, the supermartingale property of the random process as given in Theorem~\ref{thm:MaximalCorrelationSuperMG} was not established in prior work. }
\end{remark}

\begin{remark}[Asymptotic Analysis] \em{ The Bhattacharyya parameter $Z(W)$ has been instrumental to the analysis of the rate of polarization and scaling laws~\cite{telatar_rate_polar_2009,hassani_rate_dependent_2013,hassani_finite_scaling_2014,guruswami_gap_to_capacity_2015,mondelli_finite_scaling_2016}. Potentially, tracking the second-order moment $M_2(W)$ or channel dispersion $V(W)$ could lead to a refined asymptotic analysis. Further research is also necessary to determine whether $\rho^{2}_{\rm{max}}(W)$ plays a role in the asymptotic analysis of multi-level polarization or polarization of correlated random variables. }
\end{remark}

\begin{remark}[Channel-valued random processes] \em{ As discussed in Sec.~\ref{subsec:RelatedPriorWork}, a general approach considers \emph{channel-valued} polarization processes. Following our work, \cite{raj_nasser_2019} discusses the convergence of \emph{channel-valued} random processes for which a suitable topological space must be defined. }
\end{remark}

\section{Conclusion}

The mathematical results of Theorem~\ref{thm:fRelationsForOutputSymmetricBICs}, Corollary~\ref{corollary:NewMartingales} and Theorem~\ref{thm:MaximalCorrelationSuperMG} reflect both the algebraic and probabilistic foundation of channel polarization. For the broad class of symmetric $\BIC$s, we have shown that the random processes associated with $\I_f(W)$ for various choices of $f$ satisfy the martingale conditions for convergence if and only if specific \emph{functional inequalities} hold. The inequalities hold for specific convex and non-convex $f$. As summarized in Table~\ref{tbl:FunctionalsOfBICs}, we have developed a framework relying on the Blackwell representation of channels to analyze a variety of channel functionals, and their associated real-valued random processes. Further analysis is warranted to study \emph{channel-valued} random processes.



%

\appendices

\section{Proof of Eqn.~\eqref{eqn:BayesianInformationClaim}}\label{app:BayesianInformationGainProof}
To prove the claim, we first write down a closed-form expression for $B_\lambda(W)$. For any decoder $g$,
\begin{align*}
& {\mathbf P}[g(Y)\neq X] \notag \\
&= \E[\E[{\bf 1}_{\{g(Y) \neq X\}}|Y]] \\
&= \E\left[P_{X|Y}(0|Y){\bf 1}_{\{g(Y) = 1\}} + P_{X|Y}(1|Y) {\bf 1}_{\{g(Y) = 0\}}\right] \\
&= \E\Biggl[ \frac{\bar{\lambda}W(Y|0)}{\bar{\lambda}W(Y|0) + \lambda W(Y|1)} {\bf 1}_{\{g(Y) = 1\}} \notag \\
& \quad + \frac{\lambda W(Y|1)}{\bar{\lambda}W(Y|0)+\lambda W(Y|1)}{\bf 1}_{\{g(Y) = 0\}}\Biggl] \\
&= \sum_{y \in \sY} \left( \bar{\lambda}W(y|0){\bf 1}_{\{g(y)=1\}} + \lambda W(y|1){\bf 1}_{\{g(y) = 0\}}\right),
\end{align*}
and the minimum over all $g$ is evidently achieved by
\begin{align}\label{eq:g_opt}
g^*(y) \deq \begin{cases} 1, & \text{if } \lambda W(y|1) \ge \bar\lambda W(y|0) \\
0, & \text{if } \lambda W(y|1) < \bar\lambda W(y|0)
\end{cases}.
\end{align}
This yields
\begin{align}
b_\lambda(W) = \sum_{y \in \sY} (\bar \lambda W(y|0)) \wedge (\lambda W(y|1)).
\end{align}
In particular, $b_\lambda(\BSC(1/2)) = \bar\lambda \wedge {\lambda}$. Moreover, using the identity $a \wedge b = \frac{1}{2}(a+b-|a-b|)$, we can write
\begin{align*}
& B_\lambda(W) \notag \\
&= \bar \lambda \wedge \lambda -  \sum_{y \in \sY} (\bar \lambda W(y|0)) \wedge (\lambda W(y|1)) \\
&= \frac{1}{2}\Biggl[ 1-|1-2\lambda| - \notag \\
& ~~\sum_{y \in \sY} \left( \bar\lambda W(y|0) + \lambda W(y|1) - |\bar\lambda W(y|0) - \lambda W(y|1)|\right)\Biggl] \\
&= \frac{1}{2} \sum_{y \in \sY}|\bar\lambda W(y|0) - \lambda W(y|1)| - \frac{1}{2}|1-2\lambda|.
\end{align*}
We are now ready to prove the claim that $B_\lambda(W) = \I_{f_\lambda}(W)$. To that end,  consider a random couple $(X,Y)$ with $X \sim \Bernoulli(1/2)$ and $P_{Y|X} = W$ and let $S = \BlkFn_{W}(Y)$. Then, using the fact that $\E[S] = 1/2$, we have
\begin{align*}
& \I_{f_\lambda}(W) \notag \\
&= \E[f_\lambda(S)] \\
&= \frac{1}{2}(1-|1-2\lambda|) - \E[\bar\lambda S + \lambda \bar{S} - |\bar\lambda S - \lambda \bar{S}|] \\
&=\E[|\bar\lambda S - \lambda \bar S|] - \frac{1}{2}|1-2\lambda| \\
&= \frac{1}{2}\E\left[ \frac{1}{P_Y(Y)}\left|\bar\lambda W(Y|0)- \lambda W(Y|1)\right| \right] - \frac{1}{2}|1-2\lambda|  \\
&= \frac{1}{2}\sum_{y \in \sY} |\bar\lambda W(y|0) - \lambda W(y|1)| - \frac{1}{2}|1-2\lambda| \\
&\equiv B_\lambda(W).
\end{align*}
In particular, when $\lambda = 1/2$, the optimal decoder in \eqref{eq:g_opt} reduces to the maximum-likelihood (ML) rule, and the Bayesian information gain may be written in simplified form as follows,
\begin{align*}
B_{1/2}(W) = \frac{1}{4} \sum_{y \in \sY} |W(y|0) - W(y|1)|.
\end{align*}
In that case, $f_{1/2}(s) = \frac{1}{2} - s \wedge \bar{s} = \frac{1}{2} - \frac{1}{2}(1 - |2s-1|) = \frac{1}{2}|2s-1|$, and therefore
\begin{align*}
B_{1/2}(W) = \frac{1}{2} - P_{{\rm e,ML}}(W),
\end{align*}
where $P_{{\rm e,ML}}(W)$ denotes the probability of error of maximum-likelihood decoding of a single equiprobable bit sent through the channel $W$ \cite[Eqn.~1.9, Ch.~5]{alsan_phd14}. This, in turn, shows that $1-2P_{{\rm e,ML}}(W) = \I_f(W)$ with $f(s) = |2s-1|$.

\section{Proof of Lemma~\ref{lemma:TransitionMatBSC}}\label{appendix:TransitionMatPolarizedBSC}

Consider the two original channels $W_1 \equiv \BSC(\GAM)$ and $W_2 \equiv \BSC(\NU)$. The polar transform yields $W_1 \ATbad W_2 \equiv \BSC(\GAM) \ATbad \BSC(\NU)$ and $W_1 \ATgood W_2 \equiv \BSC(\GAM) \ATgood \BSC(\NU)$.

The output alphabet of $W_1 \ATbad W_2$ is $\{0, 1\} \times \{0, 1\}$ with conditional distribution $(W_1 \ATbad W_2)(y_1, y_2 | u_1)$. The conditional probabilities given an input $u_1 = 0$ are as follows: $(W_1 \ATbad W_2)(0,0|0)$ $=$ $(W_1 \ATbad W_2)(1,1|0)$ $=$ $\frac{1}{2}(1 - p\star q)$; $(W_1 \ATbad W_2)(0,1|0)$ $=$ $(W_1 \ATbad W_2)(1,0|0)$ $=$ $\frac{1}{2}(p\star q)$. Similarly, the conditional probabilities for a binary input $u_1 = 1$ are as follows: $(W_1 \ATbad W_2)(0,0|1)$ $=$ $(W_1 \ATbad W_2)(1,1|1)$ $=$ $\frac{1}{2}(p\star q)$; $(W_1 \ATbad W_2)(0,1|1)$ $=$ $(W_1 \ATbad W_2)(1,0|1)$ $=$ $\frac{1}{2}(1 - p\star q)$. Consider the following \emph{disjoint} sets of output pairs,
\begin{IEEEeqnarray}{rCL}
\mathcal{S^{-}} & = & \{(0,0),(1,1)\}, \nonumber \\
\mathcal{T^{-}} & = & \{(0,1),(1,0)\}. \nonumber
\end{IEEEeqnarray}
The union $\mathcal{S^{-}} \cup \mathcal{T^{-}}$ contains all $4$ output pairs. Viewing all output pairs grouped in each set $\mathcal{S^{-}}$ and $\mathcal{T^{-}}$ as super-symbols, the transition matrix $\tilde{T}_{W_1 \ATbad W_2}$ is as claimed.


The proof regarding $W_1 \ATgood W_2$ follows in an identical manner. The output alphabet of $W_1 \ATgood W_2$ is $\{0, 1\} \times \{0, 1 \} \times \{0, 1\}$ with conditional distribution $(W_1 \ATgood W_2)(y_1, y_2, u_1 | u_2)$. The conditional probabilities for a binary input $u_2 = 0$ are as follows: $(W_1 \ATgood W_2)(0,0,0|0)$ $=$ $(W_1 \ATgood W_2)(1,0,1|0)$ $=$ $\frac{1}{2}\bar{p}\bar{q}$; $(W_1 \ATgood W_2)(0,1,1|0)$ $=$ $(W_1 \ATgood W_2)(1,1,0|0)$ $=$ $\frac{1}{2} pq $; $(W_1 \ATgood W_2)(0,0,1|0)$ $=$ $(W_1 \ATgood W_2)(1,0,0|0)$ $=$ $\frac{1}{2}p\bar{q}$; $(W_1 \ATgood W_2)(0,1,0|0)$ $=$ $(W_1 \ATgood W_2)(1,1,1|0)$ $=$ $\frac{1}{2}\bar{p}q $. Similarly, the conditional probabilities for a binary input $u_2 = 1$ are given by: $(W_1 \ATgood W_2)(0,0,0|1)$ $=$ $(W_1 \ATgood W_2)(1,0,1|1)$ $=$ $\frac{1}{2} pq $; $(W_1 \ATgood W_2)(0,1,1|1)$ $=$ $(W_1 \ATgood W_2)(1,1,0|1)$ $=$ $\frac{1}{2} \bar{p}\bar{q} $; $(W_1 \ATgood W_2)(0,0,1|1)$ $=$ $(W_1 \ATgood W_2)(1,0,0|1)$ $=$ $\frac{1}{2} \bar{p}q $; $(W_1 \ATgood W_2)(0,1,0|1)$ $=$ $(W_1 \ATgood W_2)(1,1,1|1)$ $=$ $\frac{1}{2} p\bar{q} $. Consider the following \emph{disjoint} sets of output pairs,
\begin{IEEEeqnarray}{rCL}
\mathcal{S^{+}} & = & \{(0,0,0), (1,0,1)\}, \nonumber \\
\mathcal{T^{+}} & = & \{(0,1,1), (1,1,0)\}, \nonumber \\
\mathcal{B^{+}} & = & \{(0,0,1), (1,0,0)\}, \nonumber \\
\mathcal{G^{+}} & = & \{(0,1,0), (1,1,1)\}. \nonumber
\end{IEEEeqnarray}
The union $\mathcal{S^{+}} \cup \mathcal{T^{+}} \cup \mathcal{B^{+}} \cup \mathcal{G^{+}}$ contains all $8$ output pairs. Viewing all output pairs in the sets $\mathcal{S^{+}}$, $\mathcal{T^{+}}$, $\mathcal{B^{+}}$ and $\mathcal{G^{+}}$ as super-symbols, the transition matrix $\tilde{T}_{W_1 \ATgood W_2}$ is as claimed.

The parallel broadcast channel $W_1 \times W_2 \equiv \BSC(\GAM) \times \BSC(\NU)$ has an output alphabet $\{0, 1\} \times \{0,1\}$ with conditional distribution denoted as $(W_1 \times W_2)(y_1, y_2 | x)$. The conditional probabilities for binary input $x = 0$ are as follows: $(W_1 \times W_2)(0,0|0) = \bar{p}\bar{q}$; $(W_1 \times W_2)(0,1|0) = \bar{p}q$; $(W_1 \times W_2)(1,0|0) = p\bar{q}$; $(W_1 \times W_2)(1,1|0) = pq$. Similarly the conditional probabilities for binary input $x = 1$ are as follows: $(W_1 \times W_2)(0,0|1) = pq$; $(W_1 \times W_2)(0,1|1) = p\bar{q}$; $(W_1 \times W_2)(1,0|1) = \bar{p}q$; $(W_1 \times W_2)(1,1|1) = \bar{p}\bar{q}$. By comparing the transition probabilities, it is evident that $T_{\BSC(\GAM) \times \BSC(\NU)} = \tilde{T}_{\BSC(\GAM) \ATgood \BSC(\NU)}$ as claimed.


\section{Proof of Lemma~\ref{lemma:StructuralPolarizedBSC}}\label{appendix:StructuralPolarizedBSC}

Eqn.~\eqref{eqn:StructuralDecompPolarizedBSCATbad} follows directly from the transition matrix Eqn.~\eqref{eqn:TransitionMatATbad} of Lemma~\ref{lemma:TransitionMatBSC}. To see why Eqn.~\eqref{eqn:StructuralDecompPolarizedBSCATgood} holds, consider the transition matrix of Eqn.~\eqref{eqn:TransitionMatATgood}. Assume $(\GAM \star \NU) \neq 0$. Let error probabilities $\alpha$ and $\beta$  be defined as in Eqn.~\ref{eqn:AlphaNewBSCErrorProb} and Eqn.~\ref{eqn:BetaNewBSCErrorProb} respectively.
\begin{IEEEeqnarray}{rCL}
\tilde{T}_{\BSC(\GAM) \ATgood \BSC(\NU)} & = & T_{\BSC(\GAM) \times \BSC(\NU)} \nonumber \\
& = & \left[ \begin{array}{cccc} ~\bar{\GAM} \bar{\NU} & ~\GAM \NU & ~\GAM \bar{\NU} & ~\bar{\GAM} \NU \\ ~\GAM \NU & ~\bar{\GAM}\bar{\NU} & ~\bar{\GAM} \NU & ~\GAM \bar{\NU} \end{array}\right] \nonumber \\
& = & \left[ \begin{array}{cc} (1 \! - \! \GAM \star \NU) \bar{\alpha} ~ & ~ (1 \! - \! \GAM \star \NU) \alpha \\ (1 \! - \! \GAM \star \NU) \alpha ~&~ (1 \! - \! \GAM \star \NU) \bar{\alpha} \\ (\GAM \star \NU)\bar{\beta} ~&~ (\GAM \star \NU)\beta \\ (\GAM \star \NU)\beta ~&~ (\GAM \star \NU)\bar{\beta} \end{array} \right]^{T}. \nonumber
\end{IEEEeqnarray}

The above transition matrix for $\BSC(\GAM) \ATgood \BSC(\NU)$ reveals the structural decomposition of the one-step polarized channel as established by Theorem~\ref{thm:BIOSdecomp}. More precisely, $\BSC(\GAM) \ATgood \BSC(\NU)$ is a $\BSC(\alpha)$ with probability $(1 - \GAM \star \NU)$ and a $\BSC(\beta \wedge \bar{\beta})$ with probability $(\GAM \star \NU)$. The transformed error probabilities are specified so that $\alpha \in [0,\frac{1}{2}]$ and $(\beta \wedge \bar{\beta}) \in [0, \frac{1}{2}]$.


\section{Proof of Corollary~\ref{corollary:BlackwellMeasuresPolarizedBSC}}\label{appendix:BlackwellMeasuresPolarizedBSC}

The corollary holds for $(\GAM, \NU) = (0,0)$ trivially as a degenerate case. Therefore, assume $(\GAM, \NU) \neq (0,0)$ so that $\GAM \star \NU \in (0, \frac{1}{2}]$. From Theorem~\ref{thm:BM_polarization},
\begin{IEEEeqnarray}{rCl}
\sm_{\BSC(\GAM) \ATbad \BSC(\NU)} & = & \sm_{\BSC(\GAM)} \ATbad \sm_{\BSC(\NU)}, \nonumber \\
\sm_{\BSC(\GAM) \ATgood \BSC(\NU)} & = & \sm_{\BSC(\GAM)} \ATgood \sm_{\BSC(\NU)}, \nonumber
\end{IEEEeqnarray}
where the operations $\ATbad$ and $\ATgood$ on Blackwell measures were defined in Definition~\ref{def:ATbadATgoodProbMeasures}. Thus, consider two independent random variables $S_1 \sim \sm_{\BSC(\GAM)}$ and $S_2 \sim \sm_{\BSC(\NU)}$. The random variable $S_1$ takes two equiprobable values $\GAM$ and $\bar{\GAM}$, and $S_2$ takes two equiprobable values $\NU$ and $\bar{\NU}$.

To prove Eqn.~\eqref{eqn:corollary:BlackwellMeasureBadPolarizedBSC}, consider Eqn.~\eqref{eq:bad_BM} of Definition~\ref{def:ATbadATgoodProbMeasures}. The random variable $1-S_1 \star S_2$ takes two equiprobable values, $\GAM \star \NU$ and $1-\GAM \star \NU$. The integral of Eqn.~\eqref{eq:bad_BM} may be evaluated for any continuous  $f : [0,1] \to \Reals$ as follows:
\begin{IEEEeqnarray}{rCl}
\int_{[0,1]}f \d\sm_{\BSC(\GAM) \ATbad \BSC(\NU)} & = & \frac{1}{2}f\left(\GAM \star \NU\right) + \frac{1}{2}f\left(1 \! - \! \GAM \star \NU\right). \nonumber
\end{IEEEeqnarray}
The corresponding Blackwell measure of $\BSC(\GAM) \ATbad \BSC(\NU)$ written as a weighted sum of Dirac measures is,
\begin{IEEEeqnarray}{rCl}
\sm_{\BSC(\GAM) \ATbad \BSC(\NU)} & = & \frac{1}{2}\delta_{\GAM \star \NU} + \frac{1}{2}\delta_{1-\GAM \star \NU}. \nonumber
\end{IEEEeqnarray}

To prove Eqn.~\eqref{eqn:corollary:BlackwellMeasureGoodPolarizedBSC}, consider Eqn.~\eqref{eq:good_BM} of Definition~\ref{def:ATbadATgoodProbMeasures}. The integral of Eqn.~\eqref{eq:good_BM} may be evaluated for any continuous $f : [0,1] \to \Reals$ as follows:
\begin{IEEEeqnarray}{rCcl}
\int_{[0,1]}f \d\sm_{\BSC(\GAM) \ATgood \BSC(\NU)} & = & \Biggl( & \frac{1}{2} (1 \! - \! \GAM \star \NU) f\left(\frac{\GAM \NU}{1 \! - \! \GAM \star \NU}\right) \nonumber \\
& & + & \frac{1}{2}(1 \! - \! \GAM \star \NU) f\left(\frac{\bar{\GAM}\bar{\NU}}{1 \! - \! \GAM \star \NU}\right) \nonumber \\
& & + & \frac{1}{2} (\GAM \star \NU) f\left(\frac{\GAM \bar{\NU}}{\GAM \star \NU}\right) \nonumber \\
& & + & \frac{1}{2}(\GAM \star \NU) f\left( \frac{\bar{\GAM}\NU}{\GAM \star \NU}\right) \Biggl). \nonumber
\end{IEEEeqnarray}
The corresponding Blackwell measure of $\BSC(\GAM) \ATgood \BSC(\NU)$ written as a weighted sum of Dirac measures is
\begin{IEEEeqnarray}{rlCl}
\sm_{\BSC(\GAM) \ATgood \BSC(\NU)} = (1-\GAM \star \NU ) \Biggl( & \frac{1}{2}\delta_{\frac{\GAM \NU}{1- \GAM \star \NU}} & + & \frac{1}{2}\delta_{\frac{\bar{\GAM}\bar{\NU}}{1- \GAM \star \NU}}\Biggl) \nonumber \\
\quad +\> (\GAM \star \NU ) \Biggl( & \frac{1}{2}\delta_{\frac{ \GAM \bar{\NU}}{\GAM \star \NU}} & + & \frac{1}{2}\delta_{\frac{\bar{\GAM}\NU}{\GAM \star \NU}}\Biggl). \nonumber
\end{IEEEeqnarray}
As shown in the proof of Lemma~\ref{lemma:TransitionMatBSC} by direct computation, the above probability measure $\sm_{\BSC(\GAM) \ATgood \BSC(\NU)}$ is equivalent to the probability measure $\sm_{\BSC(\GAM) \times \BSC(\NU)}$. In addition, Eqn.~\eqref{eqn:corollary:BlackwellMeasureGoodParallelBroadcast} follows from Eqn.~\eqref{eqn:StructuralDecompPolarizedBSCATgood} of Lemma~\ref{lemma:StructuralPolarizedBSC}.


\section{Proof of Corollary~\ref{corollary:BIOSPolarization}}\label{appendix:BlackwellMeasuresPolarizedSymmetricBICs}

From Theorem~\ref{thm:BIOSdecomp}, there exists a structural decomposition for output-symmetric $\BIC$s $W_1$ and $W_2$. The Blackwell measures of $W_1$ and $W_2$ may be written as follows:
\begin{IEEEeqnarray}{rClCl}
\sm_{W_1} & = & \sum^m_{i=1}\lambda_i \sm_{\BSC(\GAM_i)} & = & \sum^m_{i=1} \frac{\lambda_i(\delta_{\GAM_i}+\delta_{\bar{\GAM}_i})}{2}, \nonumber \\
\sm_{W_2} & = & \sum^k_{j=1}\mu_j \sm_{\BSC(\NU_j)} & = & \sum^k_{j=1} \frac{\mu_j(\delta_{\NU_j}+\delta_{\bar{\NU}_j})}{2}, \nonumber
\end{IEEEeqnarray}
for some choices of parameters $(m,\lambda, \GAM)$ and $(k,\mu, \NU)$. Therefore, for two independent r.v.'s $S_1 \sim \sm_{W_1}$ and $S_2 \sim \sm_{W_2}$ and for any continuous $f : [0,1] \to \Reals$ we have
\begin{IEEEeqnarray}{rCl}
\int_{[0,1]} f\d\sm_{W_1 \ATbad W_2} & = & \E[f(1-S_1 \star S_2)] \nonumber \\
\IEEEeqnarraymulticol{2}{r} { = \sum^m_{i=1}\sum^k_{j=1}\lambda_i \mu_j } & \Biggl( \frac{1}{2} f(\GAM_i \star \NU_j) + \frac{1}{2}f(1-\GAM_i \star \NU_j)\Biggl). \nonumber \\
\noalign{\vspace{1.5\jot} \noindent In addition, we have
\vspace{1.5\jot}}
\int_{[0,1]}f\d\sm_{W_1 \ATgood W_2} & = & \E\Bigg[(1-S_1 \star S_2)f\left(\frac{S_1S_2}{1-S_1 \star S_2}\right) \nonumber \\
& & \qquad \qquad +~ (S_1 \star S_2)f\left(\frac{\bar{S}_1S_2}{S_1 \star S_2}\right)\Bigg] \nonumber \\
\IEEEeqnarraymulticol{2}{r} { = \sum^m_{i=1}\sum^k_{j=1}\lambda_i \mu_j } & \Biggl( \frac{1}{2} (1-\GAM_i \star \NU_j) f\left(\frac{\GAM_i \NU_j}{1-\GAM_i \star \NU_j}\right)  \nonumber \\
\IEEEeqnarraymulticol{2}{r} { \quad } & + \frac{1}{2}(1-\GAM_i \star \NU_j) f\left(\frac{\bar{\GAM}_i\bar{\NU}_j}{1-\GAM_i \star \NU_j}\right)  \nonumber \\
\IEEEeqnarraymulticol{2}{r} { \quad } & + \frac{1}{2} (\GAM_i \star \NU_j) f\left(\frac{\GAM_i\bar{\NU}_j}{\GAM_i \star \NU_j}\right)   \nonumber \\
\IEEEeqnarraymulticol{2}{r} { \quad } & + \frac{1}{2}(\GAM_i \star \NU_j) f\left( \frac{\bar{\GAM}_i \NU_j}{\GAM_i \star \NU_j}\right)  \Biggl). \nonumber
\end{IEEEeqnarray}
Applying Corollary~\ref{corollary:BlackwellMeasuresPolarizedBSC}, we obtain Eqns.~\eqref{eq:BIOSbad} and \eqref{eq:BIOSgood}.

\section{Proof Of Theorem~\ref{thm:MaximalCorrelationSuperMG}}\label{appendix:ProofNewSuperMGMaximalCorrelations}

We prove the functional inequality by expanding and simplifying the gap function $\textsc{gap}_{f}(p, q)$ where $f : [0,1] \to \Reals$ is chosen to be the polynomial function $f(s) = (2s-1)^{2}$. The terms given in Eqn.~\eqref{eqn:GapFunctionMaxCorrDefinition} are expanded as follows:
\begin{IEEEeqnarray}{rCl}
f(p) & = & 4p^{2} - 4p + 1. \nonumber \\
f(q) & = & 4q^{2} - 4q + 1. \nonumber \\
f(p \star q) & = & 16pq\bar{p}\bar{q} + 4p^{2} + 4q^{2} - 4p - 4q + 1.  \nonumber \\
(1 - p\star q)f(\alpha) & = & \frac{4p^{2}q^{2}}{1 - p \star q} - 4pq + (1 - p\star q). \label{eqn:MaxCorrAppendixFirstAlphaTerm}  \\
(p \star q)f(\beta) & = & \frac{4\bar{p}^{2}q^{2}}{p \star q} - 4\bar{p}q + (p \star q). \label{eqn:MaxCorrAppendixSecondBetaTerm}  \\
\noalign{\vspace{1.5\jot} \noindent To compute Eqn.~\eqref{eqn:MaxCorrAppendixFirstAlphaTerm} and Eqn.~\eqref{eqn:MaxCorrAppendixSecondBetaTerm}, recall that the parameters $\alpha := \frac{ \GAM \NU }{ 1 - \GAM \star \NU }$ and $\beta := \frac{ \bar{\GAM} \NU }{ \GAM \star \NU }$ were defined in Lemma~\ref{lemma:StructuralPolarizedBSC}. Since $f(s) = f(\bar{s})$, the gap function is given by Eqn.~\eqref{eqn:DefinitionGapFunctionCorollary}. After a series of algebraic simplifications,
\vspace{1.5\jot}}
\textsc{gap}_{f}(p, q) & = & f(p) + f(q) - f(p \star q) \nonumber \\
&& -\> (1 - p \star q)f(\alpha) - (p \star q)f(\beta) \nonumber \\
& = & -16pq\bar{p}\bar{q} -\frac{4p^{2}q^{2}}{1 - p \star q} -\frac{4\bar{p}^{2}q^{2}}{p \star q} + 4q  \IEEEeqnarraynumspace \nonumber \\
& = & -16pq\bar{p}\bar{q} -\frac{4p^{2}q^{2}}{1 - p \star q} + \frac{4pq(1-pq)}{p \star q}  \IEEEeqnarraynumspace \nonumber \\
& = & -16pq\bar{p}\bar{q} + \frac{4pq\bar{p}\bar{q}}{(1 - p\star q)(p \star q)} \nonumber \\
& = & \frac{4pq\bar{p}\bar{q}(2p-1)^{2} (2q-1)^{2}}{(1 - p \star q)(p \star q)}. \nonumber
\end{IEEEeqnarray}
For all $p, q \in [0, \frac{1}{2}]$, $(p, q) \neq (0, 0)$, $\textsc{gap}_{f}(p, q) = \textsc{gap}_{f}(q, p)$, and $\textsc{gap}_{f}(p, q) \geq 0$. This concludes the verification of Eqn.~\eqref{eqn:GapFunctionMaxCorrelation} in the proof of Theorem~\ref{thm:MaximalCorrelationSuperMG}.

%

\section*{Acknowledgment}

The authors thank Dr.\ Mine Alsan, Prof.\ Venkat Anantharam, and Prof.\ Erdal \Arikan\ for helpful discussions. The authors also thank the associate editor and four anonymous reviewers whose valuable comments helped improve this manuscript.

\ifCLASSOPTIONcaptionsoff
  \newpage
\fi



\bibliographystyle{IEEEtran}
\bibliography{IEEEabrv,./main_journal_bibliography}

\vfill

\enlargethispage{-5in}

\end{document}